\theoremstyle{plain}
\newtheorem{theorem}{Theorem}
\newtheorem{lemma}[theorem]{Lemma}
\newtheorem{corollary}[theorem]{Corollary}
\newtheorem{definition}[theorem]{Definition}
\theoremstyle{remark}
\newtheorem{example}{Example}
\newcommand{\R}{\mathbb{R}}
\newcommand{\C}{\mathbb{C}}
\newcommand{\N}{\mathbb{N}}
\newcommand{\T}{\mathbb{T}}
\renewcommand{\d}{\mathrm{d}}
\renewcommand{\i}{\mathrm{i}}
\newcommand{\eps}{\varepsilon}
\renewcommand{\H}{\mathcal{H}}
\renewcommand{\Re}{\mathop\mathsf{Re}}
\renewcommand{\Im}{\mathop\mathsf{Im}}
\newcommand{\FDM}{F_\mathrm{DM}}
\newcommand{\FLieb}{F_\text{Lieb}}
\DeclareMathOperator{\Tr}{Tr}
\newcommand{\vext}{v_\mathrm{ext}}
\newcommand{\vH}{v_\mathrm{H}}
\newcommand{\vxc}{v_\mathrm{xc}}
\newcommand{\rhoxc}{\rho_\mathrm{xc}}
\newcommand{\densset}{\mathscr{I}}
\newcommand{\psiset}{\mathscr{W}}
\newcommand{\affspace}{\mathscr{X}}
\newcommand{\vrepset}{\affspace_{>0}}
\DeclarePairedDelimiter{\abs}{\lvert}{\rvert}
\DeclarePairedDelimiter{\norm}{\Vert}{\rVert}
\DeclarePairedDelimiter{\innerproduct}{\langle}{\rangle}
\begin{document}
\title{Solution of the \texorpdfstring{$v$}{v}-representability problem on a one-dimensional torus}

\author{Sarina M. Sutter}
\address{Theoretical Chemistry, Faculty of Exact Sciences, VU University, Amsterdam, The Netherlands}

\author{Markus Penz}
\email[Electronic address:\;]{m.penz@inter.at}
\affiliation{Max Planck Institute for the Structure and Dynamics of Matter and Center for Free-Electron Laser Science, Hamburg, Germany}
\affiliation{Department of Computer Science, Oslo Metropolitan University, Oslo, Norway}

\author{Michael Ruggenthaler}
\affiliation{Max Planck Institute for the Structure and Dynamics of Matter and Center for Free-Electron Laser Science, Hamburg, Germany}
\affiliation{The Hamburg Center for Ultrafast Imaging, Hamburg, Germany}

\author{Robert van Leeuwen}
\address{Department of Physics, Nanoscience Center, University of Jyv\"askyl\"a, Jyv\"askyl\"a, Finland}

\author{Klaas J. H. Giesbertz}
\address{Theoretical Chemistry, Faculty of Exact Sciences, VU University, Amsterdam, The Netherlands}

\begin{abstract}
We provide a solution to the $v$-representability problem for a non-relativistic quantum many-particle system on a one-dimensional torus domain in terms of Sobolev spaces and their duals. Any one-particle density that is square-integrable, has a square-integrable weak derivative, and is gapped away from zero can be realized from the solution of a many-particle Schrödinger equation, with or without interactions, by choosing a corresponding external potential. This potential can contain a distributional contribution but still gives rise to a self-adjoint Hamiltonian. Importantly, this allows for a well-defined Kohn--Sham procedure but, on the other hand, invalidates the usual proof of the Hohenberg--Kohn theorem.
\end{abstract}
\maketitle

\tableofcontents

\section{Introduction}
\label{sec:intro}

The origin of the $v$-representability problem can be found in the seminal work of \citet[Footnote~12]{hohenberg-kohn1964} that established the field of density-functional theory (DFT). Therein, the universal density functional is only defined for densities with the property of $v$-representability. The term ``$v$-representability'' itself, according to \citet{levy1979universal}, is due to E.\ G.\ Larson at the Boulder Theoretical Chemistry Conference, June 1975. The same paper also presents the constrained-search approach as a way to extend the domain of the universal functional to all $N$-representable densities (those that come from $N$-particle wave functions). It was later realized by \citet{levy1982electron} and \citet{Lieb1983} that there exist densities that belong to a ground-state \emph{ensemble} of at least $3$-fold degeneracy, but not to any pure ground state. Also, it was found by \citet{ENGLISCH1983} that even this more general ensemble $v$-representability is not enough to generate all possible densities. Since it is also imaginable that Hamiltonians with and without interactions yield entirely different ground-state densities, we arrive at the following types of $v$-representability.

\begin{center}
    \setlength\tabcolsep{.5em}
    \begin{tabular}{c|c}
        non-interacting pure-state $v$-representability & non-interacting  ensemble $v$-representability \\ \hline
        interacting pure-state $v$-representability & interacting  ensemble $v$-representability
    \end{tabular}
\end{center}

This distinction is indeed important when it comes to the widely-applied method of \citet{KS1965}. The Kohn--Sham approach, where a non-interacting auxiliary system is employed to determine the density of an interacting physical system, relies on the assumption that the density is simultaneously interacting and non-interacting $v$-representable. And so the original $v$-representability problem is not solved just by the definition of constrained-search density functionals. Accordingly, the $v$-representability problem is one of the major open problems in the mathematical formulation of density-functional theory \cite{wrighton2023some,lewin-2023-in-DFT-book}. A special focus was given to the issue of $v$-representability also by Mathieu Lewin in \citet[4.5.2]{teale2022dft}.

In this work, we will always consider the ensemble case and at the same time allow for a whole class of different interaction operators $W$ in the Hamiltonian. Since this class includes $W=0$, the difference between interacting and non-interacting $v$-representability will not be relevant in our treatment and the presented solution applies simultaneously to both cases. The same general approach was followed in \citet{CCR1985}, where $v$-representability for all fermionic densities $0<\rho(x)<1$ on an (infinite) lattice was demonstrated. A simplification of the proof for finite lattices can be found in \citet{penz-DFT-graphs}.
\citet{Lammert2010} was able to provide $v$-representability for a coarse-grained formulation of DFT, and he also provided crucial insights into the continuum case~\cite{Lammert2007}. This work demonstrated that the universal density functional, as formalized in \citet{Lieb1983}, is not functionally differentiable. Since functional differentiability would imply $v$-representability, the counter-examples provided by \citet{Lammert2007} highlight mathematical obstacles that have to be overcome in order to guarantee that a density is indeed $v$-representable. In order to guarantee functional differentiability of a density $\rho(x)>0$, he suggested to work with the Sobolev space $H^2$ for densities instead of the Lebesgue spaces employed by \citet{Lieb1983}. Sobolev spaces will play a crucial role in this work.

Apart from this paper, notable progress has been achieved by \citet{aryasetiawan-stott1986,aryasetiawan-stott1988} and \citet{aryasetiawan-1989-personal-comm}, where $v$-representability is studied for a \emph{non-interacting} system on a one-dimensional bounded interval with zero boundary conditions. In this setting, $v$-representability can be achieved for any density that is strictly positive except at the boundary and has a finite derivative everywhere. A key ingredient is the transformation of the decoupled $N$-particle Schrödinger equation into $(N-1)$ coupled and non-linear differential equations that depend on the given density $\rho$ instead of the external potential $v$. While the issue of making sure that a solution also gives back $\rho$ as a ground-state density that appears in higher dimensions has been tackled in subsequent work by \citet{chen-stott1991-few-fermions,chen-stott1991-low-deg,chen-stott1993-v-rep}, the issue of solvability of the coupled non-linear differential equations was left open.
A recent work of \citet{garrigue2022building} approaches the issue of $v$-representability as an inverse problem with discretized potential space and also discusses the possibility of excited states.
Nevertheless, the original $v$-representability problem, especially in the case of interacting particles, must be still marked as basically unsolved. But before we move any further, we will give our formulation of the problem.

\vspace{.75em}
\noindent\fbox{\parbox{\textwidth}{
    \textbf{$v$-representability problem:} Find an explicit set of densities, preferably as large as possible, within the $N$-representable densities such that each element $\rho$ from this set is the density of an (ensemble) ground state of a self-adjoint Hamiltonian $H=H_0+V= -\frac{1}{2}\Delta + W + V$. While $H_0$, the kinetic and interaction contribution, is considered fixed, the external potential $V$ will depend on the chosen $\rho$. 
}}
\vspace{.75em}

From the above statement it follows that we also need to investigate for a chosen set of densities the corresponding set of one-body potentials $v$ and whether these potentials allow the definition of corresponding self-adjoint Hamiltonians.
Crucial insights in this regard can be obtained from an example presented in \citet{ENGLISCH1983}, which later has been discussed in more detail in \citet{CCR1985}. We will here quote some passages from the latter reference since their discussion already points towards the set of potentials that we will consider in this work. The example consists of the density 
\begin{equation}\label{eq:EE-dens}
    \rho(x)=(a+b|x|^{\alpha+1/2})^2
\end{equation}
with $a>b>0$, $1/2>\alpha>0$, that is defined locally around $x=0$, so it could be on the real line as well as on the one-dimensional torus. Now for a single particle with a ground-state wave function assumed to be real and having $\Psi>0$ (this is justified by a result in \citet[Th.~11.8]{Lieb-Loss-book}), we can take $\rho=\sqrt{\Psi}$. From the single-particle Schrödinger equation it then follows $v = (\Delta\sqrt{\rho}) / (2\sqrt{\rho})$, so we have immediate access to the representing potential. But it appears that
\begin{equation}\label{eq:EE-pot}
    \Delta\sqrt{\rho(x)} = b\left(\alpha+\frac{1}{2}\right) |x|^{\alpha-1/2} \delta(x) + b\left(\alpha^2-\frac{1}{4}\right) |x|^{\alpha-3/2},
\end{equation}
so the resulting potential is actually distribution-valued. \citet[p.~514f]{CCR1985} write:
``This $\rho$ is not the ground state of any Hamiltonian $-\Delta + v$, having a real-valued potential, although it may be the ground state of a Hamiltonian with a distributional potential.''
And further that the potential-density mapping ``may map very singular potentials into very smooth densities.'' They suggest that the problem might be solved by including generalized potentials, but that then ``much of HK [Hohenberg--Kohn] theory [...] may break down.''

We will take this example and their insights as the starting point here, indeed showing that by including such generalized potentials the $v$-representability problem can be solved. But quite contrary to their statement we would argue that only with this result, DFT has a real theoretical foundation. The example and our results below demonstrate that perfectly reasonable densities can only be realized with distributional potentials. What parts of the theory must then eventually be changed will be discussed in the conclusions (Section~\ref{sec:conclusions}). With this, we formulate the main result of this work.

\vspace{.75em}
\noindent\fbox{\parbox{\textwidth}{\vspace{-.75em}
\begin{theorem}\label{th:main}
    Define the following set of $v$-representable densities on the one-dimensional torus $\T$ and the dual space of (distributional) one-body potentials,
    \begin{align}
        \vrepset &= \{ \rho\in L^2(\T) \mid \nabla\rho\in L^2(\T), \smallint\rho=N, \forall x\in\T : \rho(x)>0 \}  \\
        \affspace^* &= \{ [v] = \{ v + c \mid c\in \R \} \mid v=f+ \nabla g \;\text{with}\; f,g \in L^2(\T) \}.
    \end{align}
    Then for every $\rho\in\vrepset$ there is an equivalence class $[v]\in\affspace^*$ with corresponding potential $V=\sum_{j=1}^N v(x_j)$ acting on wave functions, such that $\rho$ is the density of an (ensemble) ground state of the self-adjoint Hamiltonian $H=H_0+V = -\frac{1}{2}\Delta + W + V$.
\end{theorem}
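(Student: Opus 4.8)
The plan is to reduce the many-particle $v$-representability problem to a single-particle one and then to construct the representing potential explicitly from the density. Given $\rho\in\vrepset$, I would first set $\psi=\sqrt{\rho/N}$, which by the Sobolev assumptions ($\rho\in L^2$, $\nabla\rho\in L^2$, $\rho$ bounded away from zero) lies in $H^1(\T)$, is strictly positive, and is normalized to one. The key algebraic observation is the one already anticipated in the Englisch–Englisch example: if $\psi$ is the ground state of a single-particle Schrödinger operator $-\tfrac12\Delta + v$, then formally $v = \tfrac{\Delta\psi}{2\psi}$ up to an additive constant. The natural candidate potential is therefore $v = \tfrac{1}{2}\,\frac{\Delta\sqrt\rho}{\sqrt\rho}$, and the main point is that although $\Delta\sqrt\rho$ need not be a function, $v$ as an element of the dual Sobolev space $\affspace^*$ is well defined: writing $\nabla\psi\in L^2$, one has $\psi\,v = \tfrac12\Delta\psi = \tfrac12\nabla(\nabla\psi)$, so $v$ is (after dividing by the positive $H^1$ function $\psi$, which has to be handled carefully) of the form $f+\nabla g$ with $f,g\in L^2(\T)$. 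This is exactly the structure of $\affspace^*$, so the first half of the claim — that $[v]$ lands in the stated dual space — is really a statement about the mapping $\psi\mapsto \Delta\psi/\psi$ preserving the relevant regularity class, and I would expect it to have been proved as a lemma earlier in the paper, alongside the fact that such a $v$ defines via a KLMN-type argument a self-adjoint, bounded-below Hamiltonian $-\tfrac12\Delta+v$ on $L^2(\T)$ with form domain $H^1(\T)$.

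Second, I would verify that $\psi$ is genuinely the \emph{ground state} of $-\tfrac12\Delta+v$ and not merely an eigenfunction. By construction $(-\tfrac12\Delta+v)\psi=0$ in the distributional/quadratic-form sense, so $0$ is an eigenvalue. Since $\psi>0$ everywhere on $\T$ and the Hamiltonian is a one-dimensional Schrödinger operator with form domain $H^1$, a Perron–Frobenius / positivity-improving argument (the form generates a positivity-preserving semigroup, analogous to \citet[Th.~11.8]{Lieb-Loss-book} cited in the excerpt) shows that a strictly positive eigenfunction must be the ground state and that the ground state is non-degenerate. Hence $\psi$ is the unique normalized ground state of $h:=-\tfrac12\Delta+v$ with eigenvalue $E_0=0$ (the additive constant in $[v]$ being fixed by this normalization of the energy), and its density is precisely $N|\psi|^2=\rho$.

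Third, I would lift this to the $N$-particle Hamiltonian $H=H_0+V=-\tfrac12\Delta+W+V$ with $V=\sum_j v(x_j)$, and here the interaction $W$ and the ensemble formulation enter. The idea is the standard one: the density of an $N$-particle ground state of $H$ can be built from single-particle data by occupying the lowest eigenstate of $h$, and an \emph{ensemble} — a convex combination of Slater determinants (for $W=0$) or, more generally, of appropriate many-body ground states — lets one realize $\rho=N|\psi|^2$ as the one-particle density even when $N>1$ forces a degenerate occupation pattern. Concretely, for the non-interacting case one shows that the ensemble built from determinants that all contain the orbital $\psi$ and split the remaining occupation symmetrically among degenerate higher orbitals has one-particle density $\rho$; and the inclusion of $W$ is accommodated because the paper works with the whole class of interactions simultaneously and only claims ensemble $v$-representability, so the construction is required to produce \emph{some} ground state (ensemble) with the right density rather than control the full spectrum. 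I would expect the precise bookkeeping — that the constructed density integrates to $N$, is symmetric/antisymmetric-compatible, and arises from a genuine minimizer of the $N$-body energy — to be the technically fussy part, relying on degeneracy-averaging lemmas established earlier.

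The main obstacle, in my estimation, is not the formal computation $v=\Delta\sqrt\rho/(2\sqrt\rho)$ but controlling it rigorously: showing that $v$ so defined really is an element of $\affspace^*=\{f+\nabla g: f,g\in L^2(\T)\}$ (equivalently $H^{-1}(\T)$), that division by the positive $H^1$ function $\sqrt\rho$ is legitimate in that dual space, and — most delicately — that the resulting quadratic form $q(\varphi)=\tfrac12\|\nabla\varphi\|^2+\langle v,|\varphi|^2\rangle$ is closed and bounded below on $H^1(\T)$ so that a self-adjoint Hamiltonian exists. The distributional nature of $v$ means the potential term is not form-bounded by the kinetic term in the naive $\eps$-small sense; instead one must exploit the specific form $v\psi\in H^{-1}$ together with the ground-state structure (e.g.\ via a ground-state substitution $\varphi=\psi\cdot u$ that transforms $q$ into a manifestly nonnegative Dirichlet-type form $\tfrac12\int \rho|\nabla u|^2$), which is the standard trick but requires the torus geometry and the strict positivity of $\rho$ to push through cleanly. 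Everything downstream — positivity of the ground state, non-degeneracy, and the ensemble lift to $N$ particles with interaction $W$ — is then comparatively routine.
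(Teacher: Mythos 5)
Your proposal takes a genuinely different route from the paper, and unfortunately the route does not reach the destination for the case the theorem actually covers. You reduce to a single-particle inversion $v=\Delta\sqrt{\rho}/(2\sqrt{\rho})$ and then try to lift to $N$ particles and interactions. The paper instead proceeds entirely through convex analysis: it shows that the universal functional $F$ (the lower-semicontinuous convex envelope of the constrained-search functional, equal to $\FDM$ on $\densset$) is continuous at every $\rho\in\vrepset$, hence has a non-empty subdifferential $\partial F(\rho)\subset\affspace^*$, and takes $-v\in\partial F(\rho)$ as the representing potential. The minimizer over density matrices with density $\rho$ is then shown to exist, and each pure state appearing in it is shown by a contradiction argument to attain the ground-state energy. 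This argument is completely agnostic to the interaction $W$ and the particle number $N$, which is precisely why the theorem can claim both interacting and non-interacting representability at once.

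The central gap in your proposal is the $N$-particle lift. The formula $v=\Delta\sqrt{\rho}/(2\sqrt{\rho})$ produces a potential whose lowest orbital $\psi_0=\sqrt{\rho/N}$ has density $N|\psi_0|^2=\rho$ only if all $N$ particles can occupy $\psi_0$; for fermions with $N>1$ they cannot, and the actual ground-state density of $-\tfrac12\Delta+\sum_j v(x_j)$ would be $\sum_{k=0}^{N-1}|\psi_k|^2$ involving higher orbitals you have no control over. Your suggestion of ``determinants that all contain the orbital $\psi$ and split the remaining occupation symmetrically'' cannot return the density $\rho$: the other orbitals contribute strictly positive amounts to the density everywhere. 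So the construction fails already for non-interacting $N=2$ fermions. Moreover, for $W\neq 0$ the ansatz $v=\Delta\sqrt{\rho}/(2\sqrt{\rho})$ has no justification at all: it is the inversion of the one-body Schr\"odinger equation, and there is no analogue that produces a many-body potential for a prescribed interaction. The paper's subdifferential argument bypasses both obstacles entirely, at the cost of being non-constructive.

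Two smaller points. First, your assertion that the distributional potential is ``not form-bounded by the kinetic term in the naive $\eps$-small sense'' is false in this setting; the paper's Lemma~\ref{lem:v-kinetic-bound} proves that every $v\in\affspace^*$ has relative kinetic bound zero, using the approximation $v=f+\nabla g$ with $f,g\in L^2$ and a density-functional Sobolev estimate, and this is what makes the KLMN theorem applicable without any ground-state substitution trick. Second, even showing that your candidate $v$ lies in $\affspace^*$ requires care: writing $\Delta\psi/\psi=\nabla(\nabla\psi/\psi)+(\nabla\psi/\psi)^2$ produces an $\nabla g$ term with $g\in L^2$ but an $f$ term only in $L^1$, so you would additionally need the representation theorem $L^1(\T)\subset H^{-1}(\T)$ (valid on the one-dimensional torus via $H^1\hookrightarrow L^\infty$). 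That part is repairable, but it does not rescue the $N$-particle and interacting cases.
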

\vspace{-.75em}}}
\vspace{.75em}

The interaction part $W$ is allowed to remain very general, it only needs to fulfil a property discussed in Corollary~\ref{cor:Hamiltonian}. Further details about the definitions in Theorem~\ref{th:main} will be given throughout the following sections. 
These discuss the basic setting of the $N$-particle Schrödinger problem on the one-dimensional torus in terms of quadratic forms (Section~\ref{sec:setting}), the considered spaces for densities and potentials (Section~\ref{sec:dens}), the convex formulation of DFT in the given setting (Section~\ref{sec:convex}), how representing potentials follow from a statement about subdifferentials (Section~\ref{sec:rep}), and the KLMN theorem for the generalised potentials (Section~\ref{sec:KLMN}). Finally, the proof of the main theorem above is given in the concluding Section~\ref{sec:conclusions}, building on all of the previous sections. The same section also discusses implications of our findings and future research directions. Appendix~\ref{app:kinetic-bounds-interactions} shows the critical property of `kinetic boundedness' (Definition~\ref{def:kinetic-bound}) for different classes of interactions.

\section{One-dimensional torus setting}
\label{sec:setting}

Our spatial domain is the one-dimensional torus $\T$, i.e., without loss of generality we take the interval $[0,1]$ where the points 0 and 1 are identified and we can recover any other one-dimensional periodic domain by a simple re-scaling. The reason to call it `one-dimensional torus' and not simply `ring' or `circle' is that this way we stay closer to the usual periodic setting in $d>1$ dimensions employed for extended solid-state systems.
Choosing a compact domain is critical for our proof strategy as it allows to infer that the density is gapped away from zero on the whole domain (cf.~Theorem~\ref{th:F-nonempty-subdiff}).
On it we consider a fixed number of $N$ spin-$\frac{1}{2}$ particles. The Hilbert space is then the $N$-fold tensor product of the spaces $L^2(\T)\otimes\C^2$ where only anti-symmetric wave functions are considered,
\begin{equation}
    \H = \bigl(L^2(\T)\otimes\C^2\bigr)^{\wedge N}.
\end{equation}
In what follows, we use the notations $\nabla_j$ and $\Delta_j$ for the first and second order (weak) derivatives w.r.t.\ $x_j$, even though the coordinate is one-dimensional.
For any wave function $\Psi \in \H$ we define the one-particle density on $\T$,
\begin{equation}\label{eq:def-dens}
    \rho_\Psi(x) = N \sum_{\sigma_1,\ldots,\sigma_N} \int \abs{\Psi (x\sigma_1, x_2\sigma_2, \dotsc, x_N\sigma_N)}^2 \d x_2 \dotsb \d x_N.
\end{equation}
Here, the sum over $\sigma_1,\dotsc,\sigma_N$ sums over all spin components, while all spatial particle coordinates except for one is integrated out as well. This density will be our main object of interest. Note that a $\Psi\in\H$ yields a $\rho_\Psi\in L^1(\T)$ like this, which does not allow for a point-wise evaluation $\rho_\Psi(x)$. This already points to the beneficial use of Sobolev spaces and indeed a density will later not only be pointwise well-defined but is even always a continuous function.

The Hamiltonian of the system is
\begin{equation}
    H = H_0 + V = -\frac{1}{2}\Delta + W + V = -\frac{1}{2}\sum_{j=1}^N \Delta_j + W + V,
\end{equation}
with the external potential $V=\sum_{j} v(x_j)$ being defined from a one-body potential $v$ and the interaction operator $W$ left as a general operator. This setting will soon be generalized. The expectation value of $V$ with respect to a wave function $\Psi$ is then given entirely in terms of the density as
\begin{equation}\label{eq:V-expectation}
    \langle \Psi, V \Psi \rangle = N \langle \Psi, v \Psi \rangle = \langle v,\rho_\Psi \rangle.
\end{equation}
Here, the first two angle brackets describe the inner product in $\H$ while the last one is the dual pairing between a density and a potential. Instead of linear operators we will consider the more general case of quadratic forms $V(\Phi,\Psi)$ and $W(\Phi,\Psi)$, usually still written as $\langle \Phi, V \Psi \rangle$ and $\langle \Phi, W \Psi \rangle$.

\begin{definition}[{\citet[Sec.~VIII.6]{reed-simon-1}}]
    A \emph{quadratic form} is a map $A : Q(A)\times Q(A) \to \C$, where $Q(A)$ is a dense subspace of $\H$ called the form domain. It is such that $A(\Phi,\Psi)$ is linear in the second argument and conjugate linear in the first. If $A(\Phi,\Psi) = \overline{A(\Psi,\Phi)}$ it is called \emph{symmetric} and if for all $\Psi\in Q(A)$ it holds $A(\Psi,\Psi) \geq 0$ then it is \emph{positive}.
\end{definition}

A self-adjoint or symmetric operator defines a symmetric quadratic form by $A(\Phi,\Psi) = \langle \Phi, A \Psi \rangle$ and equally a positive operator defines a positive quadratic form.
This makes it possible to take a distribution-valued $v$ as an external one-body potential (\citet[Sec.~VIII.6, Ex.~1]{reed-simon-1} also give such an example), and analogously to Eq.~\eqref{eq:V-expectation} we simply have
\begin{equation}
    V(\Psi,\Psi) = \langle v,\rho_\Psi \rangle.
\end{equation}
Now, the angle bracket means the application of the distribution $v$ on $\rho_\Psi$, which is the same as the dual pairing if a Banach space for $\rho_\Psi$ and its topological dual for $v$ are considered. The full quadratic form $V(\Phi,\Psi)$ can then be constructed with the polarization identity.
Adding such potentials to the Hamiltonian makes it a quadratic form as well. This means switching from the usual operator domain of the self-adjoint operator to the form domain \cite[Sec.~VIII.6]{reed-simon-1}. In case of the Laplacian $-\Delta = -\sum_{j} \Delta_j$, which is self-adjoint on all $\Psi\in\H$ that have $\Delta_j\Psi$ also $L^2$-integrable and where $\Psi$ and $\nabla_j\Psi$ are periodic, the resulting form domain is relatively easy to see. By symmetry of the wave function and by partial integration on the torus we have
\begin{equation}\label{eq:Laplace-expectation-val}
    \langle \Psi, -\Delta \Psi \rangle = \sum_{j=1}^N \langle \Psi, -\Delta_j \Psi \rangle = \sum_{j=1}^N \langle \nabla_j \Psi, \nabla_j \Psi \rangle,
\end{equation}
where the boundary terms cancel due to periodicity.
We note that $\langle \nabla_j \Psi, \nabla_j \Psi \rangle = \norm{\nabla_j\Psi}_{\H}^2$, so the form domain is constructed from $L^2(\T)$ functions that have an $L^2$-integrable weak derivative. This gives exactly the Sobolev space $H^1(\T)$ with norm $\norm{f}^2_{H^1} = \norm{f}^2_{L^2} + \norm{\nabla f}^2_{L^2}$. We find
\begin{equation}\label{eq:Laplace-form-domain}
    Q(-\Delta) = \bigl(H^1(\T)\otimes\C^2\bigr)^{\wedge N}.
\end{equation}
Under certain conditions, adding potentials in the form of quadratic forms to the Hamiltonian still allows to connect it to a self-adjoint operator. This is the content of the following classical theorem that is central to our work.

\begin{theorem}[KLMN theorem, \citet{reed-simon-2}, Th.~X.17]\label{th:KLMN}
    Let $A$ be a positive self-adjoint operator and $B$ a symmetric quadratic form on $Q(A)\times Q(A)$. Further, let $0\leq a < 1$ and $b\geq 0$ such that
    \begin{equation}
        |B(\Psi,\Psi)| \leq a\langle \Psi,A\Psi \rangle + b \langle \Psi,\Psi \rangle    
    \end{equation}
    for all $\Psi\in Q(A)$.
    Then there exists a unique self-adjoint operator $C$ with $Q(C)=Q(A)$, bounded below by $-b$, and 
    \begin{equation}
        \langle \Phi, C\Psi \rangle = \langle \Phi, A\Psi \rangle + B(\Phi,\Psi)
    \end{equation}
    for any $\Phi,\Psi\in Q(C)$.
\end{theorem}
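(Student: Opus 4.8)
The plan is to realize $C$ as the self-adjoint operator canonically associated with the quadratic form
\[
t(\Phi,\Psi) := \langle\Phi,A\Psi\rangle + B(\Phi,\Psi), \qquad Q(t) := Q(A),
\]
by checking that $t$ is densely defined, symmetric, semibounded and closed, and then running the Riesz-representation construction (i.e.\ Kato's first representation theorem). Density of $Q(A)$ in $\H$ and symmetry of $t$ are inherited directly from the self-adjoint operator $A$ and the symmetric form $B$. Semiboundedness follows by feeding the hypothesis into $A\geq 0$: since $B(\Psi,\Psi)\geq -a\langle\Psi,A\Psi\rangle - b\|\Psi\|^2$ with $a<1$, one gets $t(\Psi,\Psi)\geq(1-a)\langle\Psi,A\Psi\rangle - b\|\Psi\|^2\geq -b\|\Psi\|^2$, so $t$ is bounded below by $-b$.

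For closedness I would compare the form norm $\|\Psi\|_t^2 := t(\Psi,\Psi) + (b+1)\|\Psi\|^2$ with the form norm $\|\Psi\|_A^2 := \langle\Psi,A\Psi\rangle + \|\Psi\|^2$ of $A$ on $Q(A)$. Sandwiching with the two sides of the assumed bound gives $(1-a)\|\Psi\|_A^2 \leq \|\Psi\|_t^2 \leq \max(1+a,\,2b+1)\,\|\Psi\|_A^2$, where the lower estimate uses $a\geq 0$ (so $\|\Psi\|^2\geq(1-a)\|\Psi\|^2$). Because $A$ is self-adjoint and positive, $Q(A)=D(A^{1/2})$ is complete under $\|\cdot\|_A$, this being the graph norm of the closed operator $A^{1/2}$; by the norm equivalence it is also complete under $\|\cdot\|_t$, which is exactly the statement that the form $t$ is closed.

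Next, equip $Q(A)$ with the inner product attached to $\|\cdot\|_t$, obtaining a Hilbert space $\H_t$ continuously embedded in $\H$ (since $\|\Psi\|\leq\|\Psi\|_t$). For $f\in\H$ the functional $\Phi\mapsto\langle\Phi,f\rangle$ is bounded on $\H_t$, so by Riesz there is $Rf\in\H_t$ with $t(\Phi,Rf)+(b+1)\langle\Phi,Rf\rangle=\langle\Phi,f\rangle$ for all $\Phi\in Q(A)$. One checks in the standard way that $R\colon\H\to\H$ is bounded, positive, self-adjoint and injective with dense range; hence $C := R^{-1}-(b+1)$ is a self-adjoint operator on $D(C):=\operatorname{Ran}R$. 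Putting $\Psi=Rf$ into the defining relation gives $t(\Phi,\Psi)=\langle\Phi,C\Psi\rangle$ for $\Phi\in Q(A)$ and $\Psi\in D(C)$, which unwinds to $\langle\Phi,C\Psi\rangle=\langle\Phi,A\Psi\rangle+B(\Phi,\Psi)$; choosing $\Phi=\Psi$ yields $C\geq -b$; and the form domain is $Q(C)=D\big((C+b+1)^{1/2}\big)=D(R^{-1/2})=\operatorname{Ran}(R^{1/2})=\H_t=Q(A)$, using the standard fact that $R^{1/2}$ is a unitary from $\H$ onto $\H_t$. Uniqueness is the injectivity of the map "closed semibounded form $\mapsto$ self-adjoint operator": any self-adjoint $C'$ with $Q(C')=Q(A)$ and the same form has closed form $t$, hence reproduces the same $R$, so $C'=C$.

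The only genuinely delicate point is the identification $\operatorname{Ran}(R^{1/2})=\H_t$, which is what pins $Q(C)$ down to be exactly $Q(A)$ rather than a proper subspace; this step is where closedness of $t$ — reduced above to the two-line norm comparison — is indispensable, everything else being bookkeeping. I would also flag, as in Reed--Simon, that the displayed identity $\langle\Phi,C\Psi\rangle=\langle\Phi,A\Psi\rangle+B(\Phi,\Psi)$ for $\Phi,\Psi\in Q(C)$ is to be read as an equality of quadratic forms on $Q(C)\times Q(C)$; the genuine operator value $C\Psi$ is in general only defined on the smaller domain $D(C)\subsetneq Q(C)$. If one is willing to invoke Kato's representation theorem as a black box, then the first two paragraphs above already constitute the whole proof.
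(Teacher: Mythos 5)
The paper does not prove this statement at all: it is imported verbatim as a classical result, with the proof deferred to Reed--Simon (Th.~X.17). Your argument is correct and is essentially that standard proof — form-sum plus the two-sided norm comparison to get closedness and semiboundedness of $t$, then Kato's/Riesz's representation construction, with the second representation theorem pinning down $Q(C)=Q(A)$ — so there is nothing to reconcile with the paper beyond noting that your reading of the displayed identity as a form identity on $Q(C)\times Q(C)$ matches how the paper subsequently uses the theorem (e.g.\ in Corollary~\ref{cor:Hamiltonian}).
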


The theorem also explains why we can write $B(\Phi,\Psi)$ and $\langle \Phi, B\Psi \rangle$ interchangeably, since we could just set $B = C-A$ with the operators $A,C$ from above.
We now get back to our case of $A=-\Delta$ with the form domain $Q(-\Delta)$ given in Eq.~\eqref{eq:Laplace-form-domain}. When normalization is added, this gives us our basic space for wave functions,
\begin{equation}
    \psiset = \{ \Psi \in (H^1(\T)\otimes\C^2)^{\wedge N} \mid \|\Psi\|_{\H}=1 \}.
\end{equation}
We also give the kinetic energy for a wave function $\Psi \in \psiset$,
\begin{equation}
    T(\Psi) = -\frac{1}{2} \langle \Psi, \Delta \Psi \rangle = \frac{1}{2}\sum_{j=1}^N \| \nabla_j \Psi \|_\H^2,
\end{equation}
which is always finite because of the restriction to the Sobolev space $H^1(\T)$ in $\psiset$. The following property will be demanded for the potentials that are added to the Hamiltonian in order to still have a self-adjoint operator by means of the subsequent corollary.

\begin{definition}\label{def:kinetic-bound}
    A quadratic form $B$ on $\psiset\times\psiset$ is called \emph{kinetically bounded}, if there are $a,b \geq 0$ such that for all $\Psi \in \psiset$ it holds $|B(\Psi,\Psi)| \leq a T(\Psi) + b$. The infimum of all such $a$ is called the \emph{relative kinetic bound}.
\end{definition}

\begin{corollary}\label{cor:Hamiltonian}
    Let $W$ be a quadratic form that is positive and kinetically bounded with relative kinetic bound $<1$. Then the interaction-only Hamiltonian $H_0$ is a self-adjoint operator with form domain $Q(H_0)=Q(-\Delta)$ that is positive and for all $\Phi,\Psi\in Q(-\Delta)$ it holds
    \begin{equation}
        \langle \Phi, H_0\Psi \rangle = -\frac{1}{2}\langle \Phi, \Delta\Psi \rangle + W(\Phi,\Psi).
    \end{equation}
    Take further $V$ as a quadratic form that is kinetically bounded with relative kinetic bound $<1$. Then the full Hamiltonian $H$ is a self-adjoint operator with form domain $Q(H)=Q(-\Delta)$ that is bounded below and for all $\Phi,\Psi\in Q(-\Delta)$ it holds
    \begin{equation}\label{eq:weak-SE}
        \langle \Phi, H\Psi \rangle = -\frac{1}{2}\langle \Phi, \Delta\Psi \rangle + W(\Phi,\Psi) + V(\Phi,\Psi).
    \end{equation}
\end{corollary}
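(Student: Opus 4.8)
To prove Corollary~\ref{cor:Hamiltonian}, the plan is to invoke the KLMN theorem (Theorem~\ref{th:KLMN}) twice: first with the Laplacian and then with the interaction-only Hamiltonian $H_0$ playing the role of the positive unperturbed operator. Throughout I take $W$, and likewise $V$, to be symmetric, so that Theorem~\ref{th:KLMN} applies.

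For the first part I would set $A=-\frac{1}{2}\Delta$, which is positive and self-adjoint with form domain $Q(A)=Q(-\Delta)=(H^1(\T)\otimes\C^2)^{\wedge N}$ and obeys $\langle\Psi,A\Psi\rangle=T(\Psi)$. The kinetic bound on $W$ is only assumed on normalized states $\Psi\in\psiset$, so the first small step is to promote it to all of $Q(A)$: for $\Psi\neq 0$ apply the bound to $\Psi/\norm{\Psi}_{\H}$ and use that a quadratic form is homogeneous of degree two, giving $|W(\Psi,\Psi)|\le a\,T(\Psi)+b\norm{\Psi}_{\H}^{2}=a\langle\Psi,A\Psi\rangle+b\langle\Psi,\Psi\rangle$ for all $\Psi\in Q(A)$. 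Because the relative kinetic bound of $W$ is strictly below $1$, such an $a$ can be chosen with $0\le a<1$, so the hypotheses of Theorem~\ref{th:KLMN} hold. KLMN then yields a unique self-adjoint operator, which we name $H_0$, with $Q(H_0)=Q(-\Delta)$, bounded below, and $\langle\Phi,H_0\Psi\rangle=-\frac{1}{2}\langle\Phi,\Delta\Psi\rangle+W(\Phi,\Psi)$ on $Q(-\Delta)$. Positivity of $H_0$ is then immediate, since $\langle\Psi,H_0\Psi\rangle=T(\Psi)+W(\Psi,\Psi)\ge 0$ with both terms nonnegative.

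For the second part I would feed this $H_0$ into KLMN as the positive self-adjoint operator, with the same form domain $Q(-\Delta)$. Promoting the kinetic bound on $V$ to $Q(-\Delta)$ as before gives $|V(\Psi,\Psi)|\le a\,T(\Psi)+b\langle\Psi,\Psi\rangle$ with $0\le a<1$; the decisive observation is that positivity of $W$ implies $T(\Psi)=\langle\Psi,H_0\Psi\rangle-W(\Psi,\Psi)\le\langle\Psi,H_0\Psi\rangle$, hence $|V(\Psi,\Psi)|\le a\langle\Psi,H_0\Psi\rangle+b\langle\Psi,\Psi\rangle$ with the same $a<1$. A second application of Theorem~\ref{th:KLMN} then produces a unique self-adjoint operator $H$ with $Q(H)=Q(H_0)=Q(-\Delta)$, bounded below, and $\langle\Phi,H\Psi\rangle=\langle\Phi,H_0\Psi\rangle+V(\Phi,\Psi)=-\frac{1}{2}\langle\Phi,\Delta\Psi\rangle+W(\Phi,\Psi)+V(\Phi,\Psi)$, which is Eq.~\eqref{eq:weak-SE}.

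The one genuine subtlety — and the reason the statement is split in two rather than simply adding $W+V$ to $-\frac{1}{2}\Delta$ at once — is that relative form bounds do not add: $a_W+a_V$ can exceed $1$ even when each is below $1$. Absorbing the positive form $W$ into the unperturbed operator for the second KLMN application is exactly what keeps the relative bound of $V$ strictly below $1$. The remaining points (the homogeneity rescaling from $\psiset$ to $Q(A)$, verifying that the lower bound is inherited, and keeping the form domains aligned) are routine.
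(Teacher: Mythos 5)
Your proposal is correct and follows the same route as the paper, which simply states that the result is ``a double application of Theorem~\ref{th:KLMN} (KLMN theorem), first for $W$, which still leaves the combined operator positive, then for $V$.'' You have merely spelled out the details the paper leaves implicit — the homogeneity rescaling from $\psiset$ to $Q(-\Delta)$, the positivity $T(\Psi)\le\langle\Psi,H_0\Psi\rangle$ that lets the kinetic bound on $V$ transfer to a form bound relative to $H_0$, and the reason the statement is split — all of which are consistent with the paper's (much terser) reasoning.
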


The proof is just a double application of Theorem~\ref{th:KLMN} (KLMN theorem), first for $W$, which still leaves the combined operator positive, then for $V$. That $-\Delta$ is self-adjoint and positive itself was already noted after Eq.~\eqref{eq:Laplace-expectation-val}.
All considered potentials $V$ and $W$ will be such that the above corollary holds. Lemma~\ref{lem:v-kinetic-bound} shows that all potentials $V$ from the considered class have the desired property and Appendix~\ref{app:kinetic-bounds-interactions} shows it for a very broad variety of possible interactions $W$.

\section{Densities on the one-dimensional torus}
\label{sec:dens}

We define the set of \emph{physical} densities
\begin{equation}
    \densset = \{ \rho \in L^1(\T) \mid \nabla\sqrt{\rho} \in L^2(\T), \rho \geq 0, \smallint \rho = N \}.
\end{equation}
The special significance of this set is revealed by the following two results that connect it to wave functions and their kinetic-energy content.

\begin{lemma}\label{lem:psiset-implies-densset}
	Let $\Psi \in \psiset$. Then $\rho_\Psi\in\densset$ and it holds
    \begin{equation}\label{eq:psi-dens-estimate-1}
        \int \left|\nabla\sqrt{\rho_\Psi(x)}\right|^2 \d x \leq 2T(\Psi)
    \end{equation}
\end{lemma}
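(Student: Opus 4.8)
The plan is to verify the three defining conditions of $\densset$ for $\rho_\Psi$; two are immediate, and the third follows from the estimate \eqref{eq:psi-dens-estimate-1} together with the finiteness of $T(\Psi)$ on $\psiset$. Non-negativity $\rho_\Psi\geq 0$ is clear from the definition \eqref{eq:def-dens}, and integrating \eqref{eq:def-dens} over $x\in\T$ and applying Fubini together with $\norm{\Psi}_\H=1$ gives $\smallint\rho_\Psi=N$ and in particular $\rho_\Psi\in L^1(\T)$. It then remains to prove \eqref{eq:psi-dens-estimate-1}, since $\nabla\sqrt{\rho_\Psi}\in L^2(\T)$ follows once the right-hand side is known to be finite.

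To establish the estimate, first I would differentiate \eqref{eq:def-dens} under the integral sign. Writing $\sigma$ for the collection of spin indices, the assumption $\Psi\in H^1$ makes $\Psi$ and $\nabla_1\Psi$ square-integrable over the full configuration space, so $\overline{\Psi}\,\nabla_1\Psi$ is integrable in the remaining variables by Cauchy--Schwarz, and testing against smooth $\phi$ on $\T$ (boundary terms cancel by periodicity) yields, in the weak sense,
\[
  \nabla\rho_\Psi(x) = 2N\sum_{\sigma}\int \Re\bigl(\overline{\Psi}\,\nabla_1\Psi\bigr)(x\sigma_1,x_2\sigma_2,\dots,x_N\sigma_N)\,\d x_2\dotsb\d x_N ,
\]
so that $\rho_\Psi\in W^{1,1}(\T)$. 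Next I would pass through the square root by the standard regularisation: for $\eps>0$ the Sobolev chain rule gives $\nabla\sqrt{\rho_\Psi+\eps}=\nabla\rho_\Psi/(2\sqrt{\rho_\Psi+\eps})$, and a Cauchy--Schwarz estimate in the summed-out variables bounds the numerator,
\[
  \abs{\nabla\rho_\Psi(x)} \leq 2N\Bigl(\sum_{\sigma}\int\abs{\Psi}^2\Bigr)^{1/2}\Bigl(\sum_{\sigma}\int\abs{\nabla_1\Psi}^2\Bigr)^{1/2} = 2\sqrt{N\,\rho_\Psi(x)}\,\Bigl(\sum_{\sigma}\int\abs{\nabla_1\Psi}^2\,\d x_2\dotsb\d x_N\Bigr)^{1/2},
\]
where the integrals without a written measure are over $x_2,\dots,x_N$. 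Squaring, dividing by $4(\rho_\Psi+\eps)\geq 4\rho_\Psi$ on $\{\rho_\Psi>0\}$, and noting that $\nabla\rho_\Psi=0$ a.e.\ on $\{\rho_\Psi=0\}$, one obtains the a.e.\ pointwise bound
\[
  \abs{\nabla\sqrt{\rho_\Psi+\eps}}^2 \;\leq\; N\sum_{\sigma}\int\abs{\nabla_1\Psi}^2\,\d x_2\dotsb\d x_N ,
\]
uniformly in $\eps$. Integrating over $x=x_1\in\T$ and using the antisymmetry of $\Psi$, so that $\norm{\nabla_1\Psi}_\H^2=\norm{\nabla_j\Psi}_\H^2$ for all $j$, gives $\int_\T\abs{\nabla\sqrt{\rho_\Psi+\eps}}^2\,\d x \leq N\norm{\nabla_1\Psi}_\H^2 = \sum_{j=1}^N\norm{\nabla_j\Psi}_\H^2 = 2T(\Psi)$. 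Finally, as $\eps\downarrow 0$ one has $\sqrt{\rho_\Psi+\eps}\to\sqrt{\rho_\Psi}$ in $L^2(\T)$ while the gradients stay bounded in $L^2(\T)$; weak compactness and weak lower semicontinuity of the $L^2$-norm then give $\sqrt{\rho_\Psi}\in H^1(\T)$ with $\int_\T\abs{\nabla\sqrt{\rho_\Psi}}^2\,\d x\leq 2T(\Psi)$, which is \eqref{eq:psi-dens-estimate-1}, and hence $\rho_\Psi\in\densset$.

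The step I expect to require the most care is this passage through the square root when $\rho_\Psi$ is allowed to vanish: one must justify both that $\rho_\Psi\in W^{1,1}(\T)$ (differentiation under the integral, relying only on $\Psi\in H^1$) and that the formal identity $\nabla\sqrt{\rho_\Psi}=\nabla\rho_\Psi/(2\sqrt{\rho_\Psi})$ survives the limit $\eps\to 0$. This is the classical Hoffmann-Ostenhof-type argument — equivalently, the statement that the von Weizs\"acker energy $\tfrac12\int_\T\abs{\nabla\sqrt{\rho_\Psi}}^2$ bounds $T(\Psi)$ from below — and it is the only place where a genuine approximation procedure rather than a pointwise algebraic manipulation enters; positivity, normalisation, and the Cauchy--Schwarz/antisymmetry steps are routine.
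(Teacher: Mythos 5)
Your argument is correct and follows essentially the same route as the paper, which for the estimate \eqref{eq:psi-dens-estimate-1} simply cites \citet[Th.~1.1]{Lieb1983}; what you have done is reproduce that (Hoffmann-Ostenhof--type) proof in detail, including the $\eps$-regularisation of the square root and the adaptation to the periodic domain where the boundary terms in the integration by parts cancel. The elementary parts — non-negativity, normalisation, and $L^1$-membership — are handled in the paper exactly as you do, directly from the definition \eqref{eq:def-dens}.
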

\begin{proof}
The properties $\rho_\Psi \in L^1(\T)$, $\rho_\Psi \geq 0$ and $\smallint \rho_\Psi = N$ follow straight from the definition of the density in Eq.~\eqref{eq:def-dens}. The proof for the estimate is then the same as in \citet[Th.~1.1]{Lieb1983}. With this estimate we also have $\nabla\sqrt{\rho_\Psi} \in L^2(\T)$ and consequently $\rho_\Psi \in \densset$.
\end{proof}
\begin{theorem}\label{th:N-rep}
    There exist two constants $C^T_1,C^T_2 > 0$ such that for all $\rho \in \densset$ there is a $\Psi \in \psiset$ that has $\rho_\Psi=\rho$ ($N$-representability) and the following estimate holds
    \begin{equation}\label{eq:psi-dens-estimate-2}
        T(\Psi)\leq C^T_1 + C^T_2 \int \left| \nabla \sqrt{\rho(x)}\right | ^2 \d x.
    \end{equation}
\end{theorem}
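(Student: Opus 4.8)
The plan is to realise $\rho$ as the density of a single Slater determinant built from Harriman-type orbitals adapted to the torus, in the spirit of \citet[Th.~1.2]{Lieb1983}, exploiting the spin degree of freedom to halve the number of distinct spatial orbitals and the compactness of $\T$ to keep all estimates explicit; the constants $C^T_1,C^T_2$ will be allowed to depend on the fixed particle number $N$. Concretely, set $u(x)=\tfrac1N\int_0^x\rho(t)\,\d t$, which is absolutely continuous, non-decreasing, and satisfies $u(0)=0$, $u(1)=1$; choose $\lceil N/2\rceil$ distinct integers $c_1,c_2,\dots$; and put
\[
  \phi_j(x)=\sqrt{\tfrac{\rho(x)}{N}}\;\e^{2\pi\i c_j u(x)}.
\]
Each $\phi_j$ with $j\le\lfloor N/2\rfloor$ is occupied with both spin labels, and for odd $N$ one further orbital carries a single spin label, giving $N$ orthonormal spin-orbitals; $\Psi$ is their Slater determinant.

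First I would check that $\Psi\in\psiset$ and $\rho_\Psi=\rho$. Density: each of the $N$ occupied spin-orbitals contributes $\abs{\phi_j(x)}^2=\rho(x)/N$ to $\rho_\Psi$, so $\rho_\Psi=N\cdot\rho/N=\rho$. Orthonormality of the $\phi_j$ follows from the substitution $t=u(x)$,
\[
  \langle\phi_j,\phi_k\rangle=\frac1N\int_0^1\rho(x)\,\e^{2\pi\i(c_k-c_j)u(x)}\,\d x=\int_0^1\e^{2\pi\i(c_k-c_j)t}\,\d t=\delta_{jk},
\]
which stays valid even when $\rho$ vanishes on a set of positive measure, since only absolute continuity of $u$ is used. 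Periodicity: $c_j\in\Z$ and $\rho$ lives on $\T$, so $\phi_j(1)=\sqrt{\rho(0)/N}=\phi_j(0)$. Regularity: in one dimension $\sqrt\rho\in H^1(\T)\hookrightarrow C(\T)$, hence $\rho\in L^\infty(\T)$ and $u$ is Lipschitz, so $\e^{2\pi\i c_j u}$ is a bounded Lipschitz multiplier; combined with $\nabla\sqrt\rho\in L^2(\T)$ and $\rho^{3/2}\in L^\infty(\T)\subset L^2(\T)$ the product rule gives $\nabla\phi_j\in L^2(\T)$, so $\phi_j\in H^1(\T)$, $\Psi\in(H^1(\T)\otimes\C^2)^{\wedge N}$, and, being normalised, $\Psi\in\psiset$.

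The quantitative step is the kinetic bound. From $\nabla\phi_j=\bigl(\nabla\sqrt{\rho/N}+2\pi\i c_j\sqrt{\rho/N}\,u'\bigr)\e^{2\pi\i c_j u}$ with $u'=\rho/N$ one gets, summing over the occupied spin-orbitals,
\[
  T(\Psi)=\frac12\int\bigl|\nabla\sqrt{\rho(x)}\bigr|^2\,\d x+c_N\int\rho(x)^3\,\d x,
\]
with $c_N>0$ depending only on $N$ and the chosen $c_j$. It then remains to bound $\int\rho^3$ \emph{linearly} in $\int\abs{\nabla\sqrt\rho}^2$. This I would do with the one-dimensional interpolation (Agmon-type) inequality $\norm{\sqrt\rho}_{L^\infty(\T)}^2\le\norm{\sqrt\rho}_{L^2(\T)}^2+2\norm{\sqrt\rho}_{L^2(\T)}\norm{\nabla\sqrt\rho}_{L^2(\T)}$, which combined with $\int\rho=N$ gives $\norm{\rho}_{L^\infty}\le N+2\sqrt N\,\norm{\nabla\sqrt\rho}_{L^2}$ and hence $\int\rho^3\le N\norm{\rho}_{L^\infty}^2\le 2N^3+8N^2\int\abs{\nabla\sqrt\rho}^2$. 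Inserting this into the displayed identity yields $T(\Psi)\le C^T_1+C^T_2\int\abs{\nabla\sqrt\rho}^2$ with constants depending only on $N$, as claimed.

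I expect the main obstacle to be exactly this last linear estimate on the phase-gradient energy: a naive use of the Sobolev embedding in the form $\norm{\rho}_{L^\infty}\lesssim\norm{\sqrt\rho}_{H^1}^2$ would only bound $\int\rho^3$ by a multiple of $\bigl(\int\abs{\nabla\sqrt\rho}^2\bigr)^2$, which is too weak for the stated estimate, so the sharp interpolation inequality together with the use of the fixed mass $\int\rho=N$ is essential. The remaining points — validity of the change of variables when $\rho$ may vanish, and the product/chain rule for $\sqrt\rho\,\e^{2\pi\i c_j u}$ in the $H^1(\T)$ class — are routine once $\rho\in L^\infty(\T)$ is in hand.
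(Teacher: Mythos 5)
Your construction, computation of the orbital kinetic energies, and the interpolation step (bounding $\|\rho\|_{L^\infty}$ by $N+2\sqrt{N}\,\|\nabla\sqrt{\rho}\|_{L^2}$ via the minimum of $\sqrt{\rho}$ and then getting a linear bound on $\int\rho^3$) are precisely the route the paper takes, following Lieb's Harriman-type ansatz adapted to the torus. The only cosmetic differences are that you exploit the spin degree of freedom to halve the number of distinct spatial orbitals and you carry slightly different constants; the key identity and the essential estimate are the same.
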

\begin{proof}
    We proceed as in \citet[Th.~1.2]{Lieb1983}. However, due to boundedness and periodicity of the domain there are some significant changes. As an ansatz for $N$-representability, we define $f(x)=(2\pi/N) \int_0^x \rho(s) \d s$ and for $k=0,\dotsc,N-1$ the orbitals
    \begin{equation}
        \phi_k(x)= \left(\frac{\rho(x)}{N} \right)^{1/2} \exp({\i k f(x)}). 
    \end{equation}
    One can check in \citet[Th.~1.2]{Lieb1983} that these orbitals are orthonormal and that the Slater-determinant of all $\phi_k$ gives indeed the density $\rho$. We focus here on estimating the kinetic energy that is the sum of all $\|\nabla\phi_k\|_{L^2}^2$. We have 
    \begin{equation}
    \label{eq:orbital-kin-energy}
        \norm{\nabla\phi_k}_{L^2}^2 = \int \abs{ \nabla \phi_k(x) }^2 \d x = \frac{1}{N} \int \left( \abs*{ \nabla \sqrt{\rho(x)}}^2+\left(\frac{2\pi k}{N}\right)^2 \rho(x)^3 \right) \d x.
    \end{equation}
    We introduce $g=\sqrt{\rho}$ and define 
    \begin{equation}
        A=\int \left|\nabla g(x)\right|^2  \d x =\int \left|\nabla \sqrt{\rho(x)}\right|^2 \d x.
    \end{equation}
    Since $\rho \geq 0$ and $\rho$ integrates to $N$, there needs to be one point $x_0\in\T$ where $g(x_0) \leq \sqrt{N}$ (a pointwise evaluation of the function is possible because of an embedding into the continuous functions, see Lemma~\ref{lem:embeddings}). Then we have
    \begin{equation}
        g(x)^2= g(x_0)^2 + 2 \int_{x_0}^x g(y) \nabla g(y) \d y\leq N + 2\left[\int g(y)^2 \d y \right]^{1/2}\left[\int |\nabla g(y)|^2 \d y \right]^{1/2}= N+2\sqrt{NA},
    \end{equation}
    where we used Hölder's inequality, extended the integration over the full range of the torus, and that $g(x_0) \leq \sqrt{N}$ as well as $g^2=\rho$. Using $2\sqrt{NA} \leq N + A$ it then follows $g(x)^4 \leq 3N^2+6NA$. The second term in \eqref{eq:orbital-kin-energy}
    can thus be bounded from above by
    \begin{equation}
        \frac{(2\pi k)^2}{N^3} \int \rho(x)^3 \d x = \frac{(2\pi k)^2}{N^3} \int \rho(x) g(x)^4 \d x \leq \left(\frac{2\pi k}{N}\right)^2 (3N^2+6NA),
    \end{equation}
    and so the statement of Eq.~\eqref{eq:psi-dens-estimate-2} follows from summing $k$ from $0$ to $N-1$.
\end{proof}

Although the results above show that $\densset$ gives an exhaustive set of physical densities, i.e., every such density has a representing $N$-particle wave function and every wave function in $\psiset$ gives a density in $\densset$, it is still unpractical for our purposes, because by the condition $\nabla\sqrt{\rho} \in L^2(\T)$ it does not yield a \emph{linear} space. Our main space of interest is instead the \emph{affine space} of codimension 1 (having one linear constraint, the normalization of the density)
\begin{equation}
    \affspace = \{ \rho \in H^1(\T) \mid \smallint \rho = N \}.
\end{equation}

Remember that a normed space $X \subseteq Y$ is called \emph{continuously embedded} in $Y$, written $X\hookrightarrow Y$, if there is a constant $C$ such that for all $\xi \in X$ it holds $\|\xi\|_Y \leq C \|\xi\|_X$. The following lemma shows a whole chain of (continuous) embeddings.

\begin{lemma}\label{lem:embeddings}
    $\densset \subset \affspace \subset H^{1}(\T)\hookrightarrow C^0(\T) \hookrightarrow L^\infty(\T) \hookrightarrow L^3(\T) \hookrightarrow L^2(\T) \hookrightarrow L^1(\T)$.
\end{lemma}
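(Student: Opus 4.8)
The plan is to verify the chain from right to left, disposing of the elementary $L^p$-type embeddings first, then the one-dimensional Sobolev embedding, and finally the two set inclusions on the left.

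First, the embeddings $L^\infty(\T) \hookrightarrow L^3(\T) \hookrightarrow L^2(\T) \hookrightarrow L^1(\T)$ are immediate from the fact that $\T$ has total measure $1$: for $1\le p<q\le\infty$ one has $\norm{f}_{L^p(\T)} \le \norm{f}_{L^q(\T)}$ by Hölder's (or Jensen's) inequality, so each embedding constant can be taken as $1$. Likewise $C^0(\T) \hookrightarrow L^\infty(\T)$ with constant $1$, since a continuous function on the compact set $\T$ is bounded and its essential supremum equals its supremum. For the Sobolev embedding $H^1(\T) \hookrightarrow C^0(\T)$, recall that every $f\in H^1(\T)$ has an absolutely continuous representative — the periodic condition $f(0)=f(1)$ being part of the definition of $H^1(\T)$ — for which the fundamental theorem of calculus holds, $f(x)-f(y) = \int_y^x \nabla f(s)\,\d s$. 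Cauchy--Schwarz then yields the Hölder bound $\abs{f(x)-f(y)} \le \abs{x-y}^{1/2}\norm{\nabla f}_{L^2(\T)}$, so $f$ is genuinely continuous on $\T$. To control the supremum, the mean value property provides a point $x_0\in\T$ with $f(x_0)^2 \le \smallint f^2 = \norm{f}_{L^2(\T)}^2$, and then for any $x\in\T$,
\[
  \abs{f(x)} \le \abs{f(x_0)} + \smallint \abs{\nabla f} \le \norm{f}_{L^2(\T)} + \norm{\nabla f}_{L^2(\T)} \le \sqrt2\,\norm{f}_{H^1(\T)},
\]
using $\abs{\T}=1$. Hence $\norm{f}_{C^0(\T)} \le \sqrt2\,\norm{f}_{H^1(\T)}$.

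The relation $\affspace \subset H^1(\T)$ holds by definition of $\affspace$; note that $\affspace$ is an affine subspace (one linear constraint $\smallint\rho=N$), so this is a set inclusion rather than an embedding of normed linear spaces, and the same caveat applies to the leftmost relation. For $\densset\subset\affspace$, let $\rho\in\densset$ and set $g=\sqrt\rho\ge 0$. Since $\smallint g^2 = \smallint\rho = N$ and $\nabla g\in L^2(\T)$ by the definition of $\densset$, we have $g\in H^1(\T)$, hence $g$ is continuous and — by the pointwise bound already established in the proof of Theorem~\ref{th:N-rep}, with $A=\smallint\abs{\nabla g}^2$ — satisfies $g(x)^4 \le 3N^2+6NA$ for all $x$, so $g\in L^\infty(\T)$. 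Consequently $\rho=g^2\in L^\infty(\T)\subset L^2(\T)$ and $\nabla\rho = 2g\,\nabla g\in L^2(\T)$, the latter because $g\in L^\infty(\T)$ and $\nabla g\in L^2(\T)$; together with $\smallint\rho=N$ this gives $\rho\in\affspace$.

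I expect no serious obstacle here — the statement is essentially a catalogue of standard facts — and the only points that require care are (i) working with the (unique) continuous, absolutely continuous representative and respecting the periodicity built into $H^1(\T)$ when invoking the fundamental theorem of calculus on the torus, and (ii) recognizing that $\densset$ and $\affspace$ are not normed linear spaces, so the two leftmost relations are set inclusions, for which one reuses the pointwise $L^\infty$-bound on $\sqrt\rho$ obtained in Theorem~\ref{th:N-rep} to place $\densset$ inside $H^1(\T)$.
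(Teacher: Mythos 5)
Your proof is correct and follows essentially the same route as the paper for most links in the chain. Two places where you deviate are worth commenting on.

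For the Sobolev embedding $H^1(\T)\hookrightarrow C^0(\T)$, you prove it from scratch via the absolutely continuous representative, the fundamental theorem of calculus, Cauchy--Schwarz, and the mean-value choice of $x_0$, and you extract the explicit constant $\sqrt{2}$; the paper simply cites the Sobolev embedding theorem (which also yields compactness, a fact the paper remarks upon but the lemma does not require). Your elementary version is fine and actually slightly more informative.

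For $\densset\subset\affspace$ the underlying argument is the same as the paper's: set $g=\sqrt\rho\in H^1(\T)$, deduce $g$ is continuous and bounded, then conclude $\nabla\rho=2g\nabla g\in L^2(\T)$ (the paper writes this as the estimate $\|\nabla\rho\|_{L^2}^2 \leq 4\|\rho\|_{L^\infty}\|\nabla\sqrt\rho\|_{L^2}^2$). However, your appeal to the bound $g(x)^4 \le 3N^2+6NA$ from the proof of Theorem~\ref{th:N-rep} is a logical wart: the proof of Theorem~\ref{th:N-rep} itself invokes Lemma~\ref{lem:embeddings} to justify the pointwise evaluation of $g$, so citing it here creates an apparent circularity. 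It is easily avoided and in fact unnecessary -- you have already established $H^1(\T)\hookrightarrow L^\infty(\T)$ earlier in your own proof, which directly gives $g\in L^\infty(\T)$ and hence $\rho\in L^\infty(\T)\subset L^2(\T)$ and $\nabla\rho=2g\nabla g\in L^2(\T)$. Drop the reference to Theorem~\ref{th:N-rep} and rely on the embedding you just proved, and the argument is clean.
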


\begin{proof}
    The continuous (even compact) embedding $H^{1}(\T)\hookrightarrow C^0(\T)$ is a consequence of the Sobolev embedding theorem \cite[Th.~8.8]{Brezis-book}. Here the space of continuous functions is equipped with the usual maximum norm, so the next continuous embedding $C^0(\T) \hookrightarrow L^\infty(\T)$ follows directly. The remaining embeddings to the right hold since $\T$ is a finite domain. Since all $\rho \in \densset$ have $\sqrt{\rho} \in L^2(\T)$ and $\nabla\sqrt{\rho} \in L^2(\T)$, equivalent to $\sqrt{\rho} \in H^1(\T)$, we have for such densities that $\sqrt{\rho} \in C^0(\T)$ and consequently also $\rho$ is continuous. This allows the following estimate,
    \begin{equation}\label{eq:lem:embeddings}
        \|\nabla\rho\|_{L^2}^2 = \int |\nabla\rho(x)|^2 \d x = 4 \int \rho(x) \left|\nabla\sqrt{\rho(x)}\right|^2 \d x \leq 4 \| \rho \|_{L^\infty} \|\nabla\sqrt{\rho}\|_{L^2}^2,
    \end{equation}
    which shows the first inclusion $\densset \subset \affspace$. Finally, $\affspace \subset H^{1}(\T)$ trivially holds since $\affspace$ is an affine space in $H^1(\T)$.
\end{proof}

For later purposes we will also study the dual space of $\affspace$ denoted as $\affspace^*$, later to be identified with the space of external potentials. The dual space of the Sobolev space $H^1(\T)$, denoted $H^{-1}(\T)$, is composed of all distributions $v:H^1(\T)\to\R$ that can be written as $v=f+\nabla g$ with $f,g\in L^2(\T)$. This dual space is endowed with the norm $\| v \|^2_{H^{-1}}= \min \{ \| f \|^2_{L^2} + \| g \|^2_{L^2} \mid f,g \in L^2(\T), v=f+ \nabla g\}$ \cite[Th.~3.9]{adams-book}. An element $v\in H^{-1}(\T)$ acts on a function $\varphi \in H^1(\T)$ as $v(\varphi) = \langle v,\varphi \rangle = \langle f,\varphi \rangle - \langle g,\nabla \varphi \rangle$. Additionally to $H^{1}(\T)$, the affine space $\affspace$ is defined by the normalization of its elements to the particle number $N$, so if two elements of the dual space $\affspace^*$ just differ by a constant, $v-v'=c$, then their difference acts as $\langle v-v',\rho \rangle = c N$ on any arbitrary $\rho\in \affspace$. Consequently, $v$ and $v'$ cannot be distinguished and the dual space $\affspace^*$ corresponds to $H^{-1}(\T)$ modulo the addition of constant functions. We thus have $\affspace^*$ given by equivalence classes,
\begin{equation}\label{eq:v-dual-space}
    \affspace^* = \{ [v] = \{ v + c \mid c\in \R \} \mid v=f+ \nabla g \;\text{with}\; f,g \in L^2(\T) \}.
\end{equation}
In what follows, we commonly write $v$ instead of $[v]$ and by this mean any representative of the equivalence class. Physically, if $v$ is a potential, this form of equivalence is well known, since the addition of a constant only shifts the energy but does not change any ground-state properties. This equivalence is due to the \emph{gauge freedom} that we have when mathematically representing an electromagnetic field in terms of potentials.

\begin{example}\label{ex:delta}
The delta distribution $\delta : H^1(\T) \to \R$ is an element of $H^{-1}(\T)$. For this we have to show that there are $f,g\in L^2(\T)$ such that $\delta = f + \nabla g$. Take $g(x)=-x$ and $f(x) = 1$ on $[0,1) \simeq \T$. Then $\delta(\varphi) = \langle \delta,\varphi \rangle = \langle f,\varphi \rangle - \langle g,\nabla\varphi \rangle$, or written out with integrals,
\begin{equation}
    \delta(\varphi) = \int_0^1 \varphi(x) \d x + \int_0^1 x\nabla\varphi(x)\d x.
\end{equation}
Then, by partial integration in the second integral we get
\begin{equation}
    \delta(\varphi) = \int_0^1 \varphi(x) \d x - \int_0^1 \varphi(x)\d x + x\varphi(x)\Big|_0^1 = \varphi(1).
\end{equation}
By Lemma~\ref{lem:embeddings} the $\varphi \in H^1(\T)$ is continuous and it is also periodic, thus $\varphi(1) = \varphi(0)$, and we get $\delta(\varphi) = \varphi(0)$, precisely the action of the delta distribution.
\end{example}

\begin{example}\label{ex:potentials-act}
How does an element $v=f+\nabla g\in\affspace^*$ act as a potential on wave functions or directly on densities in the dual pairing $\langle v,\rho \rangle$ that gives the potential energy? We have $V=\sum_k v(x_k)$ as the associated operator and thus for $V\Psi$ in a distributional sense for any test function $\Phi\in (C^\infty(\T) \otimes \C^2)^{\otimes N}$ it holds
\begin{equation}
    \langle \Phi, V\Psi \rangle = \sum_{k=1}^N \left( \langle \Phi, f(x_k)\Psi \rangle - \langle \Phi, g(x_k)\nabla\Psi \rangle - \langle \nabla\Phi, g(x_k)\Psi \rangle \right).
\end{equation}
This has a certain resemblance to the action of the Hamiltonian in a magnetic Schrödinger equation where terms like $\i\mathbf{A}\cdot\nabla \Psi$ appear. But the complex prefactor $\i$ and the missing $\langle \nabla\Phi,\cdot\rangle$ term amount to a critical difference. In the dual pairing the effect of the distributional potential is seen way simpler as
\begin{equation}
    \langle v,\rho \rangle = \langle f,\rho \rangle - \langle g,\nabla\rho \rangle = \int_0^1 f(x)\rho(x) \d x - \int_0^1 g(x)\nabla\rho(x)\d x,
\end{equation}
thus involving the density in a semi-local fashion.
\end{example}

\section{Convex formulation of density-functional theory}
\label{sec:convex}

The ground-state problem consists of finding the eigenstate (or the whole eigenspace in the case of degeneracy) corresponding to the lowest eigenvalue of $H_0 + V$. Here, we reformulate this eigenproblem into a more general variational problem, by seeking the infimum of the spectrum. For this, we rely on the variational principle for the lower bound of the spectrum of a self-adjoint operator \cite[Th.~XIII.1 for $n=1$]{reed-simon-4},

\begin{equation}\label{eq:E-def}
    E(v) \coloneqq \inf \sigma(H) = \inf \sigma(H_0+V) = \inf_{\Psi\in\psiset} \langle \Psi, (H_0+V) \Psi \rangle < \infty.
\end{equation}
The notation $E(v)$, with the one-body potential $v$ as an argument, is due to the fact that this $v$ will be the only variable component of the system, while $H_0$ is considered fixed.
The definition with an infimum directly implies for $\lambda \in [0,1]$ that
\begin{equation}\begin{aligned}
    E(\lambda v + (1-\lambda) v') &= \inf_{\Psi\in\psiset} \langle \Psi, (\lambda H_0 + (1-\lambda) H_0 +\lambda V + (1-\lambda) V') \Psi \rangle \\
    &\geq \lambda \inf_{\Psi\in\psiset} \langle \Psi, (H_0+V) \Psi \rangle + (1-\lambda) \inf_{\Psi\in\psiset} \langle \Psi, (H_0+V') \Psi \rangle = \lambda E(v) + (1-\lambda) E(v'),
\end{aligned}\end{equation}
so the functional $E$ is concave.
It is possible to separate the variation over wave functions in Eq.~\eqref{eq:E-def} into first a variation over all densities that follow from wave functions in $\psiset$, i.e., the set $\densset$ by Lemma~\ref{lem:psiset-implies-densset}, and then vary over all wave functions that give such a density, denoted as $\Psi\mapsto\rho$. We thus have
\begin{equation}\label{eq:E-from-tilde-F}
    E(v) = \inf_{\rho\in\densset} \{ \tilde F(\rho) + \langle v,\rho \rangle \}, \quad\quad \tilde F(\rho) \coloneqq \inf_{\Psi\mapsto\rho} \langle \Psi, H_0 \Psi \rangle.
\end{equation}
$\tilde F$ is called the \emph{pure-state constrained-search functional}~\cite{levy1979universal,Lieb1983}.
Note that any $\rho \in \affspace \setminus \densset$ is either not a (non-negative) density or it is connected to infinite kinetic energy by the estimate Eq.~\eqref{eq:psi-dens-estimate-1} in Lemma~\ref{lem:psiset-implies-densset}. In both cases we set $\tilde F(\rho)=\infty$. Consequently, the variation in Eq.~\eqref{eq:E-from-tilde-F} can be extended from $\densset$ to $\affspace$ without changing the result and we get
\begin{equation}
    E(v) = \inf_{\rho\in\affspace} \{ \tilde F(\rho) + \langle v,\rho \rangle \}.
\end{equation}
Next, define a functional $F$ on $\affspace$ by means of convex conjugation (Legendre--Fenchel transformation),
\begin{equation}\label{eq:F-def}
    F(\rho) \coloneqq \sup_{v\in\affspace^*} \{ E(v) - \langle v,\rho \rangle \}.
\end{equation}
This gives $F$ as the biconjugate of $\tilde F$ and as such $F$ is convex and lower-semicontinuous (even weakly lower-semicontinuous) and $F\leq  \tilde F$ \cite[Prop.~2.19]{Barbu-Precupanu}. We can also arrive at $F$ by taking the lower-semicontinuous convex envelope of $\tilde F$ \cite[Cor.~2.23]{Barbu-Precupanu} or by generalizing the constrained search in Eq.~\eqref{eq:E-from-tilde-F} from wave functions to density matrices~\cite{valone1980b,Lieb1983}. This fact will later prove critical for showing the final step for $v$-representability, in order to connect the density-functional formulation back to quantum-mechanical states.

\begin{theorem}\label{th:FDM}
The functional $F$ on $\densset$ is equal to the constrained search over density matrices, i.e., for all $\rho\in\densset$ it holds
\begin{equation}
    F(\rho) = \FDM(\rho) \coloneqq \inf_{\Gamma\mapsto\rho}\{ \Tr \Gamma H_0 \}.
\end{equation}
Here, all considered density matrices are (possibly infinite) convex sums $\Gamma=\sum_k \lambda_k \langle \Psi_k,\cdot \rangle \Psi_k$, $\lambda_k\geq 0$, $\sum_k\lambda_k=1$, with $\Psi_k \in \psiset$ all orthogonal. The density is then also a convex sum, $\rho=\sum_k\lambda_k\rho_k$, with $\Psi_k \mapsto \rho_k$.
\end{theorem}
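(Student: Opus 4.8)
The plan is to establish the two inequalities $F \le \FDM$ and $\FDM \le F$ separately on all of $\affspace$ (with both functionals set to $+\infty$ on $\affspace\setminus\densset$), exploiting the characterisation of $F$ as the lower-semicontinuous convex envelope of the pure-state functional $\tilde F$ recorded after Eq.~\eqref{eq:F-def}. Restricting the resulting identity to $\densset$ then yields the claim.

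\emph{The inequality $F \le \FDM$ is the easy one.} Every density matrix $\Gamma = \sum_k\lambda_k\langle\Psi_k,\cdot\rangle\Psi_k$ with $\Psi_k\in\psiset$ is an admissible trial state for the variational principle Eq.~\eqref{eq:E-def}: since $\langle\Psi_k,(H_0+V)\Psi_k\rangle \ge \inf\sigma(H_0+V) = E(v)$, we get $\Tr\Gamma(H_0+V) = \sum_k\lambda_k\langle\Psi_k,(H_0+V)\Psi_k\rangle \ge E(v)$, and taking the infimum over $\Gamma\mapsto\rho$ gives $\FDM(\rho) + \langle v,\rho\rangle \ge E(v)$ for every $v\in\affspace^*$. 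Hence $E(v) - \langle v,\rho\rangle \le \FDM(\rho)$ for all $v$, so $F(\rho) = \sup_v\{E(v) - \langle v,\rho\rangle\} \le \FDM(\rho)$. In the same vein, pure states are density matrices, so $\FDM \le \tilde F$; in particular $\FDM$ is finite on $\densset$ by the $N$-representability result Theorem~\ref{th:N-rep}. Convexity of $\FDM$ is also immediate: given $\Gamma_1\mapsto\rho_1$, $\Gamma_2\mapsto\rho_2$ and $\lambda\in[0,1]$, the operator $\lambda\Gamma_1 + (1-\lambda)\Gamma_2$ is again positive, trace one, and (after spectral decomposition) of the required form with orthogonal eigenfunctions in $\psiset$; it maps to $\lambda\rho_1+(1-\lambda)\rho_2$ and has energy $\lambda\Tr\Gamma_1 H_0 + (1-\lambda)\Tr\Gamma_2 H_0$, which shows $\FDM(\lambda\rho_1+(1-\lambda)\rho_2) \le \lambda\FDM(\rho_1) + (1-\lambda)\FDM(\rho_2)$. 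A convex-combination version of the estimate in Lemma~\ref{lem:psiset-implies-densset} — using that the von Weizs\"acker term $\rho\mapsto\int\abs{\nabla\sqrt\rho}^2$ is convex — moreover gives $\int\abs{\nabla\sqrt{\rho_\Gamma}}^2 \le 2\Tr\Gamma H_0$ for any admissible $\Gamma$, so $\FDM = +\infty$ off $\densset$, matching $\tilde F$.

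\emph{The inequality $\FDM \le F$ requires lower semicontinuity of $\FDM$, and this is the main obstacle.} Once $\FDM$ is shown convex, lower-semicontinuous, and $\le\tilde F$, it is dominated by the largest lsc convex minorant of $\tilde F$, which is exactly $F$, so $\FDM \le F$ follows. To prove lower semicontinuity, take $\rho_n\to\rho$ in $H^1(\T)$ with $\liminf_n\FDM(\rho_n) = L < \infty$; passing to a subsequence we may assume $\FDM(\rho_n)\to L$ and pick $\Gamma_n\mapsto\rho_n$ with $\Tr\Gamma_n H_0 \le L+1$. Because $W \ge 0$ this bounds the kinetic energies $\Tr\Gamma_n(-\tfrac12\Delta)$ uniformly, and since $-\Delta$ has compact resolvent on the compact torus, the set $\{\Gamma \ge 0 : \Tr\Gamma = 1,\ \Tr\Gamma(-\Delta) \le C\}$ is compact in trace norm; along a further subsequence $\Gamma_n \to \Gamma$ in trace norm with $\Gamma \ge 0$, $\Tr\Gamma = 1$. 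Trace-norm convergence carries over to the associated one-particle densities in $L^1(\T)$, so $\rho_\Gamma = \lim_n \rho_{\Gamma_n} = \lim_n\rho_n = \rho$, i.e.\ $\Gamma\mapsto\rho$. Finally, $\Gamma\mapsto\Tr\Gamma H_0$ is lower-semicontinuous with respect to trace-norm convergence, because $H_0$ is bounded below and $\Tr\Gamma H_0 = \sup_m \Tr\Gamma\,(H_0)_{\le m}$ is a supremum of trace-norm-continuous functionals built from bounded spectral truncations of $H_0$. Therefore $\FDM(\rho) \le \Tr\Gamma H_0 \le \liminf_n\Tr\Gamma_n H_0 = L$, which is lower semicontinuity.

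The formal part of the argument (both inequalities, convexity, the envelope characterisation) is routine; the delicate point is the compactness-and-stability step for minimising sequences of density matrices — extracting a trace-norm limit from the kinetic bound, checking that it is still a normalised density matrix, and that its density is the prescribed $\rho$ — together with the lower semicontinuity of $\Gamma\mapsto\Tr\Gamma H_0$. Combining $F \le \FDM$ and $\FDM \le F$ gives $F = \FDM$ on $\affspace$, hence on $\densset$.
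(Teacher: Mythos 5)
Your proof is correct but takes a genuinely different route from the paper. The paper handles the direction $F\geq\FDM$ by citing \citet[Th.~4.3]{Lieb1983}, which identifies $\FDM$ with $\FLieb$, the lower-semicontinuous convex envelope of $\tilde F$ relative to the $L^1\cap L^3$-topology, and then observing that the $H^1$-topology is finer (Lemma~\ref{lem:embeddings}), so the $H^1$-envelope $F$ dominates the $L^1\cap L^3$-envelope $\FLieb=\FDM$. You instead prove lower semicontinuity of $\FDM$ directly in the $H^1$-topology, via trace-norm compactness of $\{\Gamma\geq 0:\Tr\Gamma=1,\ \Tr\Gamma(-\Delta)\leq C\}$, a tightness argument using compact resolvent of $-\Delta$ on the torus, Grümm-type convergence (weak-$*$ plus convergence of trace implies trace-norm), and lsc of $\Gamma\mapsto\Tr\Gamma H_0$ via bounded spectral truncations. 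Both arguments are valid. The paper's route is shorter and leans on Lieb's machinery, which also makes it robust across domains (it would survive replacement of $\T$ by $\R^3$). Your route is more self-contained — it does not require importing $\FDM=\FLieb$ — but it relies more heavily on compactness of the spatial domain; on $\R^d$ the kinetic-energy ball of density matrices is not trace-norm compact, and a concentration-compactness repair would be needed. Your route also delivers, as a byproduct, that the infimum defining $\FDM(\rho)$ is attained, which the paper instead takes from \citet[Cor.~4.5(ii)]{Lieb1983} in the proof of Theorem~\ref{th:main}.

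One small slip: you pick $\Gamma_n\mapsto\rho_n$ with $\Tr\Gamma_n H_0\leq L+1$, but the final chain needs $\limsup_n\Tr\Gamma_n H_0\leq L$; you should take $\Gamma_n$ with $\Tr\Gamma_n H_0\leq\FDM(\rho_n)+1/n$. You should also note explicitly that the trace-norm limit $\Gamma$ satisfies $\Tr\Gamma(-\Delta)<\infty$ (by the same spectral-truncation lsc argument applied to $-\Delta$), so that all eigenvectors of $\Gamma$ with nonzero weight lie in $\psiset$ and $\Gamma$ is an admissible competitor in the definition of $\FDM$. These are cosmetic and do not affect the correctness of the argument.
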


\begin{proof}
For the first part, $F=\FDM$, remember that \citet[Th.~4.3]{Lieb1983} shows that $\FDM=\FLieb$ (the functional corresponding to Eq.~\eqref{eq:F-def} but with $v\in L^{3/2}+L^\infty$),  with the latter also equal to the lower-semicontinuous convex envelope of the same $\tilde F$ but w.r.t.\ the $L^1\cap L^3$-topology. Now with the embeddings from Lemma~\ref{lem:embeddings} we have that the topology of $H^1(\T)$ is finer than those of $L^1(\T)$ and $L^3(\T)$ and thus also of $L^1(\T)\cap L^3(\T)$. This implies that a function that is continuous (or lower-semicontinuous) w.r.t.\ the $L^1\cap L^3$-norm is also continuous (or lower-semicontinuous) w.r.t.\ the $H^1$-norm. In other words, the set of all functions continuous (or lower-semicontinuous) w.r.t.\ the $H^1$-norm is larger than the set of functions continuous (or lower-semicontinuous) w.r.t.\ the coarser topology. We conclude that the convex envelope (as the pointwise supremum over all convex, lower-semicontinuous functions that lie below) w.r.t.\ the $H^1$-topology is always larger or equal to the convex envelope w.r.t.\ the $L^1\cap L^3$-topology, thus $F \geq \FLieb = \FDM$. The inequality in the other direction follows easily by noting that the ground-state energy can be equally well defined by considering density matrices instead of pure states,
\begin{equation}
    E(v) = \inf_{\rho\in\densset} \{ \FDM(\rho) + \langle v,\rho \rangle \},
\end{equation}
and so for all $\rho\in\densset$ we have $E(v)\leq \FDM(\rho) + \langle v,\rho \rangle$. But by definition
\begin{equation}
    F(\rho) = \sup_{v\in\affspace^*} \{ E(v) - \langle v,\rho \rangle \} \leq \sup_{v\in\affspace^*} \FDM(\rho) = \FDM(\rho),
\end{equation}
which shows $F=\FDM=\FLieb$ on $\densset$.\\
That the density matrix is always a sum of orthogonal and normalized states $\Psi_k\in\H$ is a standard result \cite[Sec.~4.B]{Lieb1983}. That they are all $\Psi_k\in\psiset$ is clear since the kinetic energy expectation value would be infinite otherwise.
\end{proof}


Note that something critical happened in Eq.~\eqref{eq:F-def} above. The potential $v$ is now from the space $\affspace^*$ that could in general be different from the class of allowed potentials as demanded by Corollary~\ref{cor:Hamiltonian}.  This means one potentially loses the connection to a self-adjoint operator formalism and switches entirely to a convex analysis treatment of the ground-state problem. In Section~\ref{sec:KLMN} we will take special care of this and make the connection back to a self-adjoint Hamiltonian with potential $v\in\affspace^*$.

Finally, by biconjugation it is also possible to go back to $E$ from $F$ with the opposite convex conjugation,
\begin{equation}\label{eq:E-from-F}
    E(v) = \inf_{\rho\in\affspace} \{ F(\rho) + \langle v,\rho \rangle \}.
\end{equation}
This shows the common feature of all formulations of DFT: The previous ground-state problem given by a \emph{linear} many-particle Schrödinger equation with its high-dimensional configuration space is translated into a variational problem incorporating a \emph{non-linear} convex functional on just the one-particle density space.
But note that there is a critical difference between this way of defining the so-called \emph{universal density functional} $F$ and how this is done in most of the DFT literature. This difference lies in our restriction to the affine space $\affspace$ where the normalization to the particle number is already included. If $F$ is defined on a larger space like $H^1(\T)$ then arbitrarily close to any physical density there lie elements that are not even $N$-representable. This can be avoided if we limit ourselves to the affine space $\affspace$ as we will see in the next section. One important ingredient is the following bound on $F(\rho)$ for $\rho\in\densset$.

\begin{lemma}\label{lem:F-bound}
    There exist two constants $C_1^F, C_2^F > 0$ such that for all $\rho \in \densset$ it holds
    \begin{equation}
        \frac{1}{2}\| \nabla \sqrt{\rho} \|_{L^2}^2 \leq F(\rho) \leq C_1^F + C_2^F \| \nabla \sqrt{\rho} \|_{L^2}^2.    
    \end{equation}
    If $\rho \notin \densset$ then $F(\rho)= \infty$.
\end{lemma}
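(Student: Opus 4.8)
The plan is to establish the lower and upper bounds separately, using the two-sided relationship between $F$ and the kinetic energy that is already available from the previous sections. For the \emph{lower bound}, I would start from the identity $F = \FDM$ on $\densset$ (Theorem~\ref{th:FDM}), so that $F(\rho) = \inf_{\Gamma\mapsto\rho}\Tr\Gamma H_0$. Since $W$ is positive by the hypothesis of Corollary~\ref{cor:Hamiltonian}, we have $\Tr\Gamma H_0 \geq \Tr\Gamma(-\tfrac12\Delta) = \sum_k \lambda_k T(\Psi_k)$ for any admissible $\Gamma = \sum_k\lambda_k\langle\Psi_k,\cdot\rangle\Psi_k$ with $\Psi_k\mapsto\rho_k$ and $\rho = \sum_k\lambda_k\rho_k$. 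Applying Lemma~\ref{lem:psiset-implies-densset} to each $\Psi_k$ gives $T(\Psi_k)\geq\tfrac12\|\nabla\sqrt{\rho_k}\|_{L^2}^2$, and then the convexity of the map $\rho\mapsto\|\nabla\sqrt\rho\|_{L^2}^2$ (equivalently, of the von~Weizsäcker functional, which is well known to be convex in $\rho$) yields $\sum_k\lambda_k\|\nabla\sqrt{\rho_k}\|_{L^2}^2 \geq \|\nabla\sqrt{\sum_k\lambda_k\rho_k}\|_{L^2}^2 = \|\nabla\sqrt\rho\|_{L^2}^2$. Taking the infimum over $\Gamma$ produces $F(\rho)\geq\tfrac12\|\nabla\sqrt\rho\|_{L^2}^2$.

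For the \emph{upper bound}, the natural route is the pure-state constrained-search functional $\tilde F$, since $F\leq\tilde F$ always holds by biconjugation. Given $\rho\in\densset$, Theorem~\ref{th:N-rep} supplies a single Slater-determinant wave function $\Psi\in\psiset$ with $\rho_\Psi=\rho$ and $T(\Psi)\leq C_1^T + C_2^T\|\nabla\sqrt\rho\|_{L^2}^2$. Then $F(\rho)\leq\tilde F(\rho)\leq\langle\Psi,H_0\Psi\rangle = T(\Psi) + W(\Psi,\Psi)$. The term $T(\Psi)$ is already controlled; for $W(\Psi,\Psi)$ I would invoke the kinetic boundedness of $W$ from Corollary~\ref{cor:Hamiltonian}: there are $a,b\geq0$ (with $a<1$, though any finite $a$ suffices here) such that $W(\Psi,\Psi)\leq aT(\Psi)+b$. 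Combining, $F(\rho)\leq(1+a)T(\Psi)+b\leq (1+a)C_1^T + b + (1+a)C_2^T\|\nabla\sqrt\rho\|_{L^2}^2$, which is of the required form with $C_1^F = (1+a)C_1^T+b$ and $C_2^F = (1+a)C_2^T$.

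The final claim, that $F(\rho)=\infty$ for $\rho\notin\densset$, splits into $\rho\notin\affspace$ and $\rho\in\affspace\setminus\densset$. In the first case $\tilde F(\rho)=\infty$ by the convention fixed in Section~\ref{sec:convex}, and since $F$ agrees with $\FDM$ only on $\densset$ one should argue directly: if $\rho$ is not a nonnegative $L^1$ density integrating to $N$, no density matrix maps to it, so $\FDM(\rho)=\infty$, and likewise the Legendre--Fenchel definition gives $F(\rho)=\infty$ because $\rho$ lies outside the effective domain of the biconjugate (here one uses that $\tilde F$, hence its convex envelope, has effective domain contained in $\densset$). In the case $\rho\in\affspace\setminus\densset$, the obstruction is $\nabla\sqrt\rho\notin L^2(\T)$; by Lemma~\ref{lem:psiset-implies-densset} every $\Psi\mapsto\rho$ would need $T(\Psi)\geq\tfrac12\|\nabla\sqrt\rho\|_{L^2}^2=\infty$, so $\tilde F(\rho)=\infty$, and the lower bound just proven (which only used $F=\FDM$ and convexity, valid wherever $\rho$ is a genuine density) forces $F(\rho)=\infty$ as well.

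I expect the main subtlety to be the convexity step in the lower bound: one must be careful that the von~Weizsäcker functional $\rho\mapsto\|\nabla\sqrt\rho\|_{L^2}^2$ is jointly convex along the convex combinations arising from ensemble density matrices, and that the possibly infinite sums $\Gamma=\sum_k\lambda_k\langle\Psi_k,\cdot\rangle\Psi_k$ are handled by a limiting argument (truncate the sum, apply convexity, pass to the limit using lower semicontinuity of the $L^2$-norm of the weak gradient). The bookkeeping for $\rho\notin\densset$ also requires matching the convention for $\tilde F$ with the Legendre--Fenchel definition of $F$, but this is routine once one recalls that $F$ is the lower-semicontinuous convex envelope of $\tilde F$ on $\affspace$.
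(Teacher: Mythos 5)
Your upper bound argument is essentially identical to the paper's: apply Theorem~\ref{th:N-rep} to get a representing $\Psi$, control $T(\Psi)$ by $C_1^T+C_2^T\|\nabla\sqrt\rho\|_{L^2}^2$, and then bound $W(\Psi,\Psi)$ by kinetic boundedness. Your \emph{lower bound}, however, takes a genuinely different route. The paper simply cites Lieb's Theorem~3.8, which already gives $\tfrac12\|\nabla\sqrt\rho\|_{L^2}^2\leq\FLieb(\rho)$ for all $\rho\in L^1\cap L^3$, and combines this with the inequality $F\geq\FLieb$ established in the proof of Theorem~\ref{th:FDM}. You instead reprove that content from the paper's own machinery: start from $F=\FDM$ on $\densset$, drop $W\geq 0$, apply the per-state estimate of Lemma~\ref{lem:psiset-implies-densset}, and then use convexity of the von~Weizsäcker functional $\rho\mapsto\|\nabla\sqrt\rho\|_{L^2}^2$ to push the convex combination inside. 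This is valid and more self-contained in spirit, but it shifts the burden onto two facts you would need to supply (the paper avoids both by citation): the joint convexity of the map $(a,b)\mapsto a^2/b$ underlying the von~Weizsäcker convexity, and the lower-semicontinuity argument for the possibly infinite convex sums in $\Gamma$. You correctly flag both as the subtle steps.

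For the final claim $F(\rho)=\infty$ outside $\densset$ your reasoning has a small logical gap. You invoke ``the lower bound just proven'' to force $F(\rho)=\infty$ for $\rho\in\affspace\setminus\densset$, but that lower bound was derived through $F=\FDM$, which Theorem~\ref{th:FDM} establishes \emph{only on} $\densset$. To close the gap you need the one-sided inequality $F\geq\FLieb=\FDM$ on all of $\affspace$ (this is in fact shown inside the proof of Theorem~\ref{th:FDM}, just not in its statement), after which either your argument or Lieb's Theorem~3.8 yields $F(\rho)\geq\tfrac12\|\nabla\sqrt\rho\|_{L^2}^2=\infty$. The paper's shortcut via Lieb's Theorem~3.8 deals with all $\rho\in L^1\cap L^3$ at once and therefore handles the outside-$\densset$ case with no extra bookkeeping. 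Also note that only the case $\rho\in\affspace\setminus\densset$ is at stake: $F$ is defined as a supremum over $\affspace^*$ with $\rho\in\affspace$, so the subcase $\rho\notin\affspace$ you discuss is not part of the claim.
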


\begin{proof}
Define $G(\rho) = \frac{1}{2}\|\nabla \sqrt{\rho} \|_{L^2}^2$ for $\rho\in\densset$ and $G(\rho)=\infty$ otherwise.
Then for the lower bound use that for all $\rho\in L^1\cap L^3$ \citet[Th.~3.8]{Lieb1983} already showed that $G(\rho) \leq F(\rho)$. This already implies that for $\rho\notin\densset$ we have $G(\rho)=F(\rho)=\infty$.\\
For the upper bound we employ Eq.~\eqref{eq:psi-dens-estimate-2} of Theorem~\ref{th:N-rep} that states that for $\rho \in \densset$ there is a $\Psi\in\psiset$ with $\rho_\Psi=\rho$ and $T(\Psi) \leq  C_1^T + C_2^T \| \nabla\sqrt{\rho} \|_{L^2}^2 $. For the interaction term we have due to the kinetic boundedness according to Definition~\ref{def:kinetic-bound} $|\langle \Psi,W\Psi \rangle| \leq aT(\Psi)+b$. Both estimates together give an upper bound for $F(\rho)$,
\begin{equation}
    F(\rho) \leq \tilde F(\rho) \leq \langle \Psi, H_0 \Psi \rangle \leq T(\Psi) + \langle \Psi, W \Psi \rangle \leq  C_1^F + C_2^F \| \nabla\sqrt{\rho} \|_{L^2}^2, 
\end{equation}
which concludes the proof.
\end{proof}

\section{\texorpdfstring{$v$}{v}-representing potentials from non-empty subdifferentials}
\label{sec:rep}

In the previous section we found the ground-state energy and density from the variational problem
\begin{equation}\label{eq:sec:rep:E-from-F}
    E(v) = \inf_{\rho\in\affspace} \{ F(\rho) + \langle v,\rho \rangle \},
\end{equation}
where $F$ is a convex and lower-semicontinuous functional on $\affspace$. A convex functional allows the definition of the \emph{subdifferential} at $\rho\in\affspace$ as a set in the dual space $\affspace^*$,
\begin{equation}
    \partial F(\rho) = \{ u\in \affspace^* \mid \forall \rho'\in\affspace : F(\rho)-F(\rho') \leq \langle u,\rho-\rho' \rangle \}.
\end{equation}
An element $u \in \partial F(\rho)$ is then called a \emph{subgradient}. The subdifferential may be empty. If $F$ is (Gateaux) differentiable at some point $\rho\in\affspace$ then $\partial F(\rho)$ contains just a single element that gives exactly the derivative. Note that while the gradient is a local property for general functionals, the subgradient contains \emph{global} information of the convex functional through the inequality $F(\rho) - \langle u,\rho \rangle \leq F(\rho') - \langle u,\rho'\rangle$ which holds for any $\rho'\in\affspace$ if $u \in \partial F(\rho)$. If we just replace $u=-v$ then we see that this is exactly the condition for minimizing the variational problem of Eq.~\eqref{eq:sec:rep:E-from-F}.
This means the variational problem of finding the density that minimizes the functional $\rho \mapsto F(\rho) + \langle v,\rho \rangle$ can equivalently be replaced by the condition $-v \in \partial F(\rho)$. This in turn means that if $\partial F(\rho)$ is non-empty, we know that at least one $v\in\affspace^*$ exists for which $\rho$ is a minimizer in Eq.~\eqref{eq:sec:rep:E-from-F}.

The critical ingredients in order to get a non-empty subdifferential are then the following definition and two lemmata.

\begin{definition}\label{def:eff-domain}
    The \emph{effective domain} of a convex functional $F$ on $\affspace$ consists of all points $\rho\in\affspace$ where $F(\rho)<\infty$.
\end{definition}

\begin{lemma}[\citet{Barbu-Precupanu}, Th.~2.14]\label{lem:F-continuous}
    Let $F$ be a convex functional on $\affspace$ then $F$ is continuous on the whole interior of its effective domain if and only if $F$ is uniformly bounded from above on a neighborhood of any interior point of its effective domain.
\end{lemma}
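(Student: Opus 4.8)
The statement is a classical fact of convex analysis on normed spaces (it is quoted here as \citet[Th.~2.14]{Barbu-Precupanu}), so the plan is to reproduce the standard argument in the affine space $\affspace$. The forward implication is immediate and I would dispatch it in one line: if $F$ is continuous at an interior point $\rho_0$ of its effective domain, then $F$ stays below $F(\rho_0)+1$ on a sufficiently small ball around $\rho_0$, which is exactly the claimed local upper bound. All the substance lies in the converse, for which I would assume $F\le M$ on some ball $B(\rho_0,r)\subset\affspace$ around an interior point $\rho_0$ of the effective domain.

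First I would upgrade the one-sided bound to a two-sided bound on the same ball. For $\rho\in B(\rho_0,r)$ the reflected point $2\rho_0-\rho$ again lies in $B(\rho_0,r)$, and applying convexity to $\rho_0=\tfrac12\rho+\tfrac12(2\rho_0-\rho)$ gives $F(\rho)\ge 2F(\rho_0)-M$; hence $|F|\le M'$ on $B(\rho_0,r)$ with $M'=\max\{|M|,|2F(\rho_0)-M|\}$. Next I would turn this into local Lipschitz continuity by the usual three-point estimate: given distinct $\rho,\rho'\in B(\rho_0,r/2)$, push $\rho'$ a further distance $r/2$ along the unit vector $(\rho'-\rho)/\|\rho'-\rho\|$ to a point $z\in B(\rho_0,r)$, observe that $\rho'$ is then the convex combination $(1-\mu)\rho+\mu z$ with $\mu=\|\rho'-\rho\|/(\|\rho'-\rho\|+r/2)$, and conclude from convexity and $|F|\le M'$ that $F(\rho')-F(\rho)\le \mu\cdot 2M'\le (4M'/r)\|\rho'-\rho\|$; by symmetry $F$ is $(4M'/r)$-Lipschitz, hence continuous, on $B(\rho_0,r/2)$.

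Finally I would propagate continuity from $\rho_0$ to an arbitrary interior point $\rho_1$ of the effective domain by a homothety. Since $\rho_1$ is interior to the effective domain, for small $t>0$ the point $\rho_2=\rho_1+t(\rho_1-\rho_0)$ still lies in the effective domain, so with $\lambda=1/(1+t)\in(0,1)$ one has $\rho_1=(1-\lambda)\rho_0+\lambda\rho_2$ and $F(\rho_2)<\infty$. The map $y\mapsto(1-\lambda)y+\lambda\rho_2$ carries $B(\rho_0,r)$ onto the ball $B(\rho_1,(1-\lambda)r)$, and on that ball convexity gives $F\le(1-\lambda)M+\lambda F(\rho_2)<\infty$. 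So $F$ is bounded above on a neighborhood of $\rho_1$, and the previous paragraph then yields continuity of $F$ at $\rho_1$. I expect the only place that needs a little care — the ``hard part,'' such as it is in this routine lemma — to be this last propagation step: one must check that the homothety of a norm ball is again a norm ball of the controlled radius (true in any normed space) and, more importantly, that ``$\rho_1$ interior to the effective domain'' genuinely permits stepping a bit past $\rho_1$ along the ray from $\rho_0$ while remaining in the effective domain. Everything else is standard convexity bookkeeping.
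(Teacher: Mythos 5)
The paper does not supply a proof of this lemma at all: it is quoted as a known result, \citet[Th.~2.14]{Barbu-Precupanu}, and used as a black box. So there is no ``paper proof'' to match against; what you have done is reconstruct the standard textbook argument, and you have done so correctly. The forward direction is indeed trivial. For the converse, the three stages are all sound: the reflection $2\rho_0-\rho$ turns the upper bound into a two-sided bound via convexity at the midpoint; the three-point estimate (pushing $\rho'$ out a fixed distance $r/2$ along the chord) yields a local Lipschitz constant $4M'/r$ on the half-ball; and the homothety $y\mapsto(1-\lambda)y+\lambda\rho_2$ with $\lambda=1/(1+t)$ transports the local upper bound to an arbitrary interior point $\rho_1$, using that the effective domain of a convex functional is convex so that stepping slightly past $\rho_1$ along the ray from $\rho_0$ stays inside it. One minor remark on hygiene: in the Lipschitz step you should note that the auxiliary point $z$ lands in $B(\rho_0,r)$ only if $\rho,\rho'$ are taken in the \emph{open} ball of radius $r/2$ (or one shrinks slightly), so that $\|z-\rho_0\|\le\|z-\rho'\|+\|\rho'-\rho_0\|<r/2+r/2=r$ strictly; you have this implicitly, but it is worth making explicit since $F$ is only assumed bounded on the open ball. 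This does not affect correctness.
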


\begin{lemma}[\citet{Barbu-Precupanu}, Prop.~2.36]\label{lem:F-continuous-nonempty-subdiff}
    If a convex functional $F$ on $\affspace$ is continuous at some $\rho\in\affspace$ then it has a non-empty subdifferential at this point.
\end{lemma}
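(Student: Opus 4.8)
The plan is to obtain a subgradient by a Hahn--Banach separation argument applied to the epigraph of $F$, which is the classical route and uses only that continuity at a point forces local boundedness from above. I would work intrinsically on the affine space $\affspace$, viewing it as a translate of the closed codimension-$1$ subspace $\affspace_0 = \{\sigma\in H^1(\T) \mid \smallint\sigma = 0\}$ of $H^1(\T)$, so that $\affspace^*$ is genuinely its topological dual, and consider
\[
    \mathrm{epi}\,F = \{ (\rho',t)\in\affspace\times\R \mid t \geq F(\rho') \}.
\]
Since $F$ is convex, this set is convex. By hypothesis $F$ is continuous at $\rho$, hence finite there and, directly from the definition of continuity, bounded above by some $M<\infty$ on a neighbourhood $U$ of $\rho$ in $\affspace$; therefore $U\times(M,\infty)$ is a nonempty open subset of $\mathrm{epi}\,F$, so $\mathrm{epi}\,F$ has nonempty interior. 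On the other hand $(\rho,F(\rho))$ lies in $\mathrm{epi}\,F$ but not in its interior, since $(\rho,F(\rho)-\eps)\notin\mathrm{epi}\,F$ for every $\eps>0$.

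Next I would invoke the geometric Hahn--Banach theorem, in the form of existence of a supporting hyperplane at a boundary point of a convex set with nonempty interior, to get a nonzero continuous linear functional $(\ell,\alpha)\in\affspace^*\times\R$ with
\[
    \langle \ell,\rho' \rangle + \alpha t \;\geq\; \langle \ell,\rho \rangle + \alpha F(\rho) \qquad \text{for all } (\rho',t)\in\mathrm{epi}\,F .
\]
The hard part is showing the hyperplane is non-vertical, i.e.\ $\alpha>0$. Taking $\rho'=\rho$ and letting $t\to+\infty$ gives $\alpha(t-F(\rho))\geq 0$ for all $t\geq F(\rho)$, so $\alpha\geq 0$. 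If $\alpha=0$, then $\langle\ell,\rho'\rangle \geq \langle\ell,\rho\rangle$ for every $\rho'$ with $F(\rho')<\infty$, in particular for every $\rho'\in U$; writing $\rho'=\rho+\sigma$ this says $\langle\ell,\sigma\rangle\geq 0$ for all $\sigma$ in a neighbourhood of $0$ in $\affspace_0$, and replacing $\sigma$ by $-\sigma$ forces $\ell=0$, contradicting $(\ell,\alpha)\neq 0$. Hence $\alpha>0$, and after dividing by $\alpha$ we may take $\alpha=1$. This is precisely the step where the local boundedness from above coming from continuity is essential; without it the supporting hyperplane could be vertical and carry no information.

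Finally, inserting $t=F(\rho')$ into the separation inequality yields $F(\rho')+\langle\ell,\rho'\rangle \geq F(\rho)+\langle\ell,\rho\rangle$ for all $\rho'\in\affspace$, equivalently
\[
    F(\rho)-F(\rho') \;\leq\; \langle \ell,\rho'-\rho \rangle \;=\; \langle -\ell,\rho-\rho' \rangle ,
\]
so $u=-\ell\in\affspace^*$ satisfies the defining inequality of $\partial F(\rho)$, and the subdifferential is non-empty. Apart from the non-verticality argument, the remaining issues are pure bookkeeping: that the separating functional genuinely lies in $\affspace^*$ (automatic from Hahn--Banach once one works on the Banach space underlying $\affspace$) and the elementary rearrangement matching the sign convention in the definition of $\partial F$.
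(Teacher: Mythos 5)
The paper does not prove this lemma; it is cited verbatim from Barbu--Precupanu (Prop.~2.36), so there is no in-text proof to compare against. Your argument is the classical Hahn--Banach separation proof of that proposition and is correct: continuity at $\rho$ gives local boundedness from above, hence a non-empty interior of $\mathrm{epi}\,F$; a supporting hyperplane at the boundary point $(\rho,F(\rho))$ exists; the non-verticality argument via $\alpha\geq 0$ and the open-mapping-to-a-neighbourhood contradiction is exactly the crux, and you handle it properly by passing to a symmetric neighbourhood of $0$ in the codimension-$1$ subspace $\affspace_0$. Your treatment of the affine structure is also the right one --- translating $\affspace$ to the linear space $\affspace_0$ so that Hahn--Banach is applied on a genuine Banach space, with $\affspace_0^*\cong\affspace^*$ as the paper defines it modulo constants. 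This is, in essence, the same route that the cited reference takes, so there is nothing to flag.
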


The strategy is then clear. We need to identify a set in the interior of the effective domain of $F$ and show the necessary bound from above. For this we define a set of densities in $\affspace$ that are gapped away from zero,
\begin{equation}
    \vrepset = \{ \rho\in\affspace \mid \forall x\in\T : \rho(x)>0 \}.
\end{equation}

\begin{lemma}\label{lem:open}
    The set $\vrepset$ is open in $\affspace$.
\end{lemma}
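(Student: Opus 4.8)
The plan is to show that the complement of $\vrepset$ in $\affspace$ is closed, equivalently that $\vrepset$ is open. Recall from Lemma~\ref{lem:embeddings} that $\affspace \subset H^1(\T) \hookrightarrow C^0(\T)$ with the continuous (maximum-norm) embedding, so every element of $\affspace$ has a well-defined pointwise continuous representative and convergence in the $H^1$-norm implies uniform convergence. Fix $\rho\in\vrepset$. Since $\rho$ is continuous on the compact domain $\T$ and strictly positive, it attains a minimum $m = \min_{x\in\T}\rho(x) > 0$. I would then take any $\rho'\in\affspace$ with $\|\rho'-\rho\|_{H^1} < \delta$, where $\delta$ is chosen so that $C\delta < m/2$ with $C$ the embedding constant of $H^1(\T)\hookrightarrow C^0(\T)$. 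Then $\|\rho'-\rho\|_{C^0} \le C\|\rho'-\rho\|_{H^1} < m/2$, hence for every $x\in\T$,
\begin{equation}
    \rho'(x) \ge \rho(x) - |\rho'(x)-\rho(x)| \ge m - m/2 = m/2 > 0,
\end{equation}
so $\rho'\in\vrepset$. This exhibits an open ball in $\affspace$ around $\rho$ contained in $\vrepset$, proving openness.

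The only subtlety worth flagging is the one already handled by Lemma~\ref{lem:embeddings}: a priori a point of $\affspace$ is merely an $H^1$-equivalence class, and pointwise positivity only makes sense once we pass to the continuous representative; the embedding into $C^0(\T)$ is exactly what legitimizes both the pointwise statement defining $\vrepset$ and the uniform estimate above. Everything else is routine, so I expect no genuine obstacle — the argument is a direct consequence of strict positivity of a continuous function on a compact set together with the $H^1\hookrightarrow C^0$ bound.
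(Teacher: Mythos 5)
Your proof is correct and amounts to the same basic idea as the paper's — use the continuous embedding $H^1(\T)\hookrightarrow C^0(\T)\hookrightarrow L^\infty(\T)$ from Lemma~\ref{lem:embeddings} to turn the pointwise condition $\rho>0$ into a quantitative one that is stable under small $H^1$-perturbations. The one difference is that the paper first claims openness of $\{f\in L^\infty(\T)\mid\forall x:\,f(x)>0\}$ in $L^\infty(\T)$ and then pulls this back along the embedding, while you work directly with the continuous representative and exploit compactness of $\T$ to produce a strictly positive minimum $m$. Your route is actually a bit tighter: the $L^\infty$-set as the paper writes it is not open without the extra quantitative statement $\operatorname{ess\,inf} f>0$ (a function positive everywhere with essential infimum zero is on the boundary), whereas in your argument that quantitative lower bound is exactly what compactness plus continuity supplies for free, so there is no gap. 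Both proofs rest on the same embedding and the same stability estimate; yours just makes explicit the step that the paper leaves implicit.
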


\begin{proof}
    Note that the set $\{ f\in L^\infty(\T) \mid \forall x\in\T : f(x)>0 \}$ is open in $L^\infty(\T)$. Because of the continuous embedding $H^1(\T) \hookrightarrow L^\infty(\T)$ from Lemma~\ref{lem:embeddings} the set $\{ f\in H^1(\T) \mid \forall x\in\T : f(x)>0 \}$ is then open in $H^1(\T)$. Hence the restriction to the affine subspace $\affspace$ is also open.
\end{proof}

We then have the desired result that guarantees us a non-empty subdifferential.

\begin{theorem}\label{th:F-nonempty-subdiff}
    For any $\rho \in \vrepset$ it holds $\rho\in\densset$ and the subdifferential $\partial F(\rho) \subset \affspace^*$ is non-empty.
\end{theorem}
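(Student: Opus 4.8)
The plan is to prove the two assertions separately: membership $\rho\in\densset$ by a short Sobolev-space argument, and non-emptiness of the subdifferential by verifying the hypotheses of Lemmas~\ref{lem:F-continuous} and~\ref{lem:F-continuous-nonempty-subdiff}.

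For the first part, I would start from $\rho\in\vrepset\subset H^1(\T)$ and use the embedding $H^1(\T)\hookrightarrow C^0(\T)$ of Lemma~\ref{lem:embeddings} to treat $\rho$ as a genuine continuous function; since $\rho(x)>0$ for every $x$ on the compact torus, $\rho$ attains a positive minimum $\delta>0$. The only condition for $\rho\in\densset$ that is not immediate is $\nabla\sqrt\rho\in L^2(\T)$, and I would obtain it from the chain rule for Sobolev functions: $s\mapsto\sqrt s$ is $C^1$ on $[\delta,\infty)$ with derivative bounded by $(2\sqrt\delta)^{-1}$, so $\sqrt\rho\in H^1(\T)$ with $\nabla\sqrt\rho=\nabla\rho/(2\sqrt\rho)$ and $|\nabla\sqrt\rho|\le|\nabla\rho|/(2\sqrt\delta)$, which lies in $L^2$ since $\rho\in H^1$. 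Together with $\rho\ge0$, $\smallint\rho=N$ and $\rho\in L^1(\T)$ (finite domain) this gives $\rho\in\densset$, and in particular $\vrepset\subseteq\densset$.

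For the subdifferential I would argue that $F$ is continuous at $\rho$, which by Lemma~\ref{lem:F-continuous-nonempty-subdiff} already yields $\partial F(\rho)\neq\emptyset$. By Lemma~\ref{lem:F-bound} the effective domain of $F$ is exactly $\densset$, and since $\vrepset\subseteq\densset$ is open in $\affspace$ (Lemma~\ref{lem:open}), every $\rho\in\vrepset$ is an interior point of it. By Lemma~\ref{lem:F-continuous} it then remains to show that $F$ is uniformly bounded from above on a neighbourhood of every interior point $\rho_0$ of its effective domain. I would first note that such a $\rho_0$ must itself be bounded away from zero: if $\rho_0(x_0)=0$ for some $x_0\in\T$, pick $\phi\in C^\infty(\T)$ with $\smallint\phi=0$ and $\phi(x_0)>0$; then $\rho_0-\eps\phi\in\affspace$, is continuous, and negative near $x_0$, so $\rho_0-\eps\phi\notin\densset$, while $\|\eps\phi\|_{H^1}\to0$, contradicting interiority. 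Hence $\rho_0(x)\ge\delta_0>0$ everywhere, and by $H^1(\T)\hookrightarrow L^\infty(\T)$ there is an $H^1$-ball $B$ of radius $r$ around $\rho_0$ on which $\rho'(x)\ge\delta_0/2$ uniformly. On $B$, using $|\nabla\sqrt{\rho'}|^2=|\nabla\rho'|^2/(4\rho')$ one gets
\begin{equation}
\|\nabla\sqrt{\rho'}\|_{L^2}^2\le\frac{1}{2\delta_0}\|\nabla\rho'\|_{L^2}^2\le\frac{1}{2\delta_0}\bigl(\|\rho_0\|_{H^1}+r\bigr)^2 ,
\end{equation}
so by the upper bound in Lemma~\ref{lem:F-bound}, $F$ is bounded on $B$ by a constant depending only on $\rho_0$. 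Then Lemma~\ref{lem:F-continuous} gives continuity of $F$ throughout the interior of its effective domain, in particular at $\rho$, and Lemma~\ref{lem:F-continuous-nonempty-subdiff} finishes the proof.

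The part I expect to be most delicate is the boundedness hypothesis of Lemma~\ref{lem:F-continuous}: the quantitative input is already supplied by the $N$-representability estimate underlying Lemma~\ref{lem:F-bound}, but to use it one must know that a whole $H^1$-neighbourhood of an interior point of the effective domain consists of densities sharing a common positive lower bound. This is precisely where the strict positivity encoded in $\vrepset$ (as opposed to merely $\rho\ge0$ in $\densset$) is needed, together with the embedding into $L^\infty$ to propagate that lower bound to nearby densities. The remaining steps are routine once Lemmas~\ref{lem:embeddings},~\ref{lem:F-bound},~\ref{lem:open},~\ref{lem:F-continuous} and~\ref{lem:F-continuous-nonempty-subdiff} are in hand.
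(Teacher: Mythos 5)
Your proposal is correct and follows essentially the same route as the paper: prove $\rho\in\densset$ via the lower bound on $\rho$ from the $H^1\hookrightarrow C^0$ embedding, then establish a uniform upper bound on $F$ in an $L^\infty$-small $H^1$-neighbourhood using Lemma~\ref{lem:F-bound}, and invoke Lemmas~\ref{lem:F-continuous} and~\ref{lem:F-continuous-nonempty-subdiff}. The only addition is your direct verification that every interior point of the effective domain is gapped away from zero (i.e.\ that $\vrepset$ equals the interior of $\densset$ in $\affspace$); the paper bypasses this by verifying boundedness only near points of $\vrepset$, which suffices since boundedness of a convex functional near one interior point already yields continuity on the whole interior.
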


\begin{proof}
    Any $\rho \in \vrepset$ is continuous on $\T$ by Lemma~\ref{lem:embeddings}, so we can choose a minimum $\eta = \min_x \{\rho(x)\} > 0$.
    We then have the estimate
    \begin{equation}\label{eq:F-nonempty-subdiff-estimate}
        \|\nabla\sqrt{\rho}\|_{L^2}^2 = \int \left|\nabla\sqrt{\rho(x)}\right|^2 \d x = \frac{1}{4}\int \frac{|\nabla\rho(x)|^2}{\rho(x)} \d x \leq \frac{1}{4\eta}\int |\nabla\rho(x)|^2 \d x = \frac{1}{4\eta} \|\nabla\rho\|_{L^2}^2 \leq \frac{1}{4\eta}\| \rho \|_{H^1}^2. 
    \end{equation}
    This inequality already establishes that $\rho\in\densset$.
    Together with Lemma~\ref{lem:F-bound} this also means that $F$ is bounded above at any such $\rho \in \vrepset$. Thus every $\rho \in \vrepset$ belongs to the effective domain of $F$ and since by Lemma~\ref{lem:open} $\vrepset$ is open, the whole set must belong to the interior of the effective domain of $F$.
    By the continuous embedding $H^{1}(\T)\hookrightarrow L^\infty(\T)$ from Lemma~\ref{lem:embeddings}, we can choose a neighborhood for any $\rho \in \vrepset$ that is contained within an $\varepsilon$-ball around $\rho$ in $L^\infty$-norm. Let $\varepsilon < \eta$, then an estimate like in Eq.~\eqref{eq:F-nonempty-subdiff-estimate} holds uniformly on the whole neighborhood of $\rho \in \vrepset$. This carries over to a uniform bound for $F(\rho)$.
    Thus, Lemma~\ref{lem:F-continuous} becomes applicable and $F$ is continuous on $\vrepset$. Lemma~\ref{lem:F-continuous-nonempty-subdiff} then concludes the proof.
\end{proof}

This proves that every $\rho\in\vrepset$ is `representable' by a $v\in -\partial F(\rho) \subset \affspace^*$ in the sense that $E(v) = F(\rho) + \langle v,\rho \rangle$, i.e., $\rho$ is a minimizer in the variational problem of Eq.~\eqref{eq:sec:rep:E-from-F}.
Yet, we do not consider this already as sufficient for showing real $v$-representability, since it should also be guaranteed that this $v$ can be the external potential of a valid Hamiltonian.

\section{KLMN theorem for distributional potentials}
\label{sec:KLMN}

In order to show that $H = H_0 + V$ with $V$ defined by a $v\in\affspace^*$ as described in Section~\ref{sec:setting} is connected to a valid self-adjoint operator we have to look to Corollary~\ref{cor:Hamiltonian}. The critical condition on $V$ for this corollary to hold is that it is kinetically bounded with relative kinetic bound $<1$. Here we will show that the relative kinetic bound is $0$ (which does not mean that the $T(\Psi)$ in the estimate vanishes, just that the prefactor can be made arbitrarily small).

\begin{lemma}\label{lem:v-kinetic-bound}
    Let $v\in\affspace^*$. Then, $V$ is kinetically bounded on $\psiset\times\psiset$ with relative kinetic bound $0$.
\end{lemma}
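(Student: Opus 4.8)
The plan is to bound the quadratic form $V(\Psi,\Psi) = \langle v,\rho_\Psi\rangle$ by estimating $\rho_\Psi$ in an appropriate norm and then pulling out a small prefactor in front of the kinetic energy. Writing $v = f + \nabla g$ with $f,g\in L^2(\T)$, the dual pairing expands (as in Example~\ref{ex:potentials-act}) to
\begin{equation}
    |V(\Psi,\Psi)| = |\langle f,\rho_\Psi\rangle - \langle g,\nabla\rho_\Psi\rangle| \leq \|f\|_{L^2}\|\rho_\Psi\|_{L^2} + \|g\|_{L^2}\|\nabla\rho_\Psi\|_{L^2}.
\end{equation}
So the task reduces to controlling $\|\rho_\Psi\|_{L^2}$ and especially $\|\nabla\rho_\Psi\|_{L^2}$ by the kinetic energy $T(\Psi)$, with a constant that can be made arbitrarily small at the cost of a large additive term.

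The key estimates come from the theory already developed. First, Lemma~\ref{lem:psiset-implies-densset} gives $\|\nabla\sqrt{\rho_\Psi}\|_{L^2}^2 \leq 2T(\Psi)$, so $\sqrt{\rho_\Psi}\in H^1(\T)$ with $\|\sqrt{\rho_\Psi}\|_{H^1}^2 = N + \|\nabla\sqrt{\rho_\Psi}\|_{L^2}^2 \leq N + 2T(\Psi)$. By the Sobolev embedding $H^1(\T)\hookrightarrow L^\infty(\T)$ (Lemma~\ref{lem:embeddings}) there is a constant $C$ with $\|\sqrt{\rho_\Psi}\|_{L^\infty}^2 \leq C(N + 2T(\Psi))$, i.e.\ $\|\rho_\Psi\|_{L^\infty} \leq C(N + 2T(\Psi))$. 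Then, exactly as in the estimate \eqref{eq:lem:embeddings},
\begin{equation}
    \|\nabla\rho_\Psi\|_{L^2}^2 = 4\int \rho_\Psi |\nabla\sqrt{\rho_\Psi}|^2 \,\d x \leq 4\|\rho_\Psi\|_{L^\infty}\|\nabla\sqrt{\rho_\Psi}\|_{L^2}^2 \leq 8C(N + 2T(\Psi))\,T(\Psi).
\end{equation}
Similarly $\|\rho_\Psi\|_{L^2} \leq \|\rho_\Psi\|_{L^\infty}^{1/2}\|\rho_\Psi\|_{L^1}^{1/2} \leq (CN + 2CT(\Psi))^{1/2} N^{1/2}$, which grows at most linearly in $T(\Psi)^{1/2}$. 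Feeding these back into the displayed bound on $|V(\Psi,\Psi)|$, the worst term scales like $\|g\|_{L^2}\cdot(N+2T(\Psi))^{1/2}T(\Psi)^{1/2}$, which is bounded by $\mathrm{const}\cdot(T(\Psi) + \sqrt{T(\Psi)})$, hence by $\mathrm{const}\cdot(1+T(\Psi))$.

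This already gives kinetic boundedness but only with some fixed relative bound; to push the relative bound down to $0$ I would use the standard trick of splitting $g$ (and $f$). Given $\eps>0$, decompose $g = g_1 + g_2$ where $g_1$ is a smooth (say trigonometric-polynomial) approximation with $\|g_2\|_{L^2} < \eps$; the smooth part $g_1$ contributes a term involving $\|g_1\|_{L^\infty}\|\nabla\rho_\Psi\|_{L^1} \leq \|g_1\|_{L^\infty}\cdot 2\|\sqrt{\rho_\Psi}\|_{L^2}\|\nabla\sqrt{\rho_\Psi}\|_{L^2}$, which is $\leq \|g_1\|_{L^\infty}\sqrt{N}\,(\delta T(\Psi) + \delta^{-1})$ for any $\delta>0$ by Young's inequality — so its contribution to the $T(\Psi)$-coefficient can be made as small as we like, independently of $\eps$, by choosing $\delta$ small. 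The rough part $g_2$ contributes via the $L^2$-estimate above a coefficient proportional to $\|g_2\|_{L^2} < \eps$ in front of $T(\Psi)$. Handling $f$ is easier since it only multiplies $\rho_\Psi$ directly and the $L^2$-norm of $\rho_\Psi$ grows sublinearly in $T(\Psi)$. Collecting everything, for any target $a>0$ one can arrange $|V(\Psi,\Psi)| \leq aT(\Psi) + b(a)$, so the relative kinetic bound is $0$.

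The main obstacle is the term $\langle g,\nabla\rho_\Psi\rangle$: unlike a bounded or form-bounded potential, here the gradient of the density appears, and naively $\|\nabla\rho_\Psi\|_{L^2}$ is controlled only by a product $\|\rho_\Psi\|_{L^\infty}\|\nabla\sqrt{\rho_\Psi}\|_{L^2}^2$ that is genuinely \emph{quadratic} in the natural Sobolev energy, which at first sight threatens to give a relative bound $\geq 1$ rather than $0$. The resolution is precisely that the dangerous factor $\|\rho_\Psi\|_{L^\infty}$ is multiplied by $\|g\|_{L^2}$ (or $\|g_2\|_{L^2}$), which we are free to make small, combined with the observation that the $L^1$-version of the gradient bound, $\|\nabla\rho_\Psi\|_{L^1}\leq 2\sqrt{N}\,\|\nabla\sqrt{\rho_\Psi}\|_{L^2}$, is only \emph{linear} in $\sqrt{T(\Psi)}$ and pairs with the bounded smooth part $g_1\in L^\infty$. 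I would also record at the end that the constant $b$ in the bound blows up as the target $a\to 0$, which is the usual and harmless feature of such relative-boundedness statements and is exactly what Corollary~\ref{cor:Hamiltonian} needs.
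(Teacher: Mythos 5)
Your proposal is correct and follows essentially the same strategy as the paper: decompose $v=f+\nabla g$, control $\|\rho_\Psi\|_{L^2}$ and $\|\nabla\rho_\Psi\|_{L^2}$ by the $H^1$-norm of $\sqrt{\rho_\Psi}$ (which is bounded by $T(\Psi)$ via Lemma~\ref{lem:psiset-implies-densset} and the $H^1\hookrightarrow L^\infty$ embedding), and drive the relative bound to zero by approximating $g$ with smooth functions whose $L^2$-remainder is arbitrarily small. The only cosmetic difference is in the smooth part: you keep the derivative on $\rho_\Psi$ and invoke $\|\nabla\rho_\Psi\|_{L^1}\leq 2\sqrt{N}\|\nabla\sqrt{\rho_\Psi}\|_{L^2}$ followed by Young's inequality, whereas the paper pairs $\nabla g_n$ directly with $\rho_\Psi$ to obtain the pure constant $N\|\nabla g_n\|_{L^\infty}$ with no $T(\Psi)$ contribution at all; your sharper observation that $\|\rho_\Psi\|_{L^2}\lesssim\sqrt{N(N+2T(\Psi))}$ also lets you skip the approximation step for $f$, which the paper carries out in parallel to $g$.
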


\begin{proof}
This proof is inspired by the treatment of distributional potentials in the context of the KLMN theorem by \citet{HERCZYNSKI1989-KLMN}.
According to Eq.~\eqref{eq:V-expectation} and Definition~\ref{def:kinetic-bound} we have to test $| \langle v,\rho_\Psi \rangle | \leq \eps T(\Psi) + b_\eps$ for all $\Psi \in \psiset$ where $\eps$ can be taken arbitrarily small. Taking $\eps$ smaller will typically mean that the other bound $b_\eps$ grows. According to Eq.~\eqref{eq:v-dual-space} any $v\in\affspace^*$ can we written as $v=f+\nabla g$ where $f,g\in L^2(\T)$. We use the triangle inequality,
\begin{equation}
    | \langle v,\rho_\Psi \rangle | \leq | \langle f,\rho_\Psi \rangle | + | \langle \nabla g,\rho_\Psi \rangle |,
\end{equation}
and we will continue with each term individually.
We choose a sequence $\{g_n\}_n$ in $C^\infty(\T)$ such that $\|g-g_n\|_{L^2}\to 0$. We take the supremum of $\nabla g_n$ out of the inner product and get by the definition of the distributional part of the potential that
\begin{equation}\label{eq:nabla-g-estimate}
    | \langle \nabla g,\rho_\Psi \rangle | \leq | \langle \nabla g_n,\rho_\Psi \rangle | + | \langle \nabla (g-g_n),\rho_\Psi \rangle | \leq N \|\nabla g_n\|_{L^\infty} + | \langle g-g_n,\nabla \rho_\Psi \rangle | \leq N \|\nabla g_n\|_{L^\infty} + \|g-g_n\|_{L^2} \|\nabla\rho_\Psi\|_{L^2}.
\end{equation}
From Eq.~\eqref{eq:lem:embeddings} we have $\|\nabla\rho_\Psi\|_{L^2} \leq 2 \|\sqrt{\rho_\Psi}\|_{L^\infty} \|\nabla\sqrt{\rho_\Psi}\|_{L^2}$ and from the continuous embedding $H^1(\T) \hookrightarrow L^\infty(\T)$ of Lemma~\ref{lem:embeddings} that $\|\sqrt{\rho_\Psi}\|_{L^\infty} \leq C \|\sqrt{\rho_\Psi}\|_{H^1}$. Since $\|\nabla\sqrt{\rho_\Psi}\|_{L^2} \leq \|\sqrt{\rho_\Psi}\|_{H^1}$ this combines to $\|\nabla\rho_\Psi\|_{L^2} \leq 2 C \|\sqrt{\rho_\Psi}\|_{H^1}^2 = 2C(N+\|\nabla\sqrt{\rho_\Psi}\|_{L^2}^2) \leq 2C(N+2T(\Psi))$ by Eq.~\eqref{eq:psi-dens-estimate-1} in Lemma~\ref{lem:psiset-implies-densset}. So the estimate is now
\begin{equation}
    | \langle \nabla g,\rho_\Psi \rangle | \leq N \|\nabla g_n\|_{L^\infty} + 2C \|g-g_n\|_{L^2} (N+2T(\Psi)).
\end{equation}
For $| \langle f,\rho_\Psi \rangle |$ the estimate is simpler since no partial integration is necessary. Again, we introduce a sequence $\{f_n\}_n$ in $C^\infty(\T)$ such that $\|f-f_n\|_{L^2}\to 0$ and have like in Eq.~\eqref{eq:nabla-g-estimate} that
\begin{equation}
    | \langle f,\rho_\Psi \rangle | \leq N \|f_n\|_{L^\infty} + \|f-f_n\|_{L^2} \|\rho_\Psi\|_{L^2}. 
\end{equation}
We then use $\|\rho_\Psi\|_{L^2} \leq \|\sqrt{\rho_\Psi}\|_{L^\infty} \|\sqrt{\rho_\Psi}\|_{L^2}$ and from Lemma~\ref{lem:embeddings} also the continuous embedding $H^1(\T) \hookrightarrow L^2(\T)$ to get $\|\rho_\Psi\|_{L^2} \leq C' \|\sqrt{\rho_\Psi}\|_{H^1}^2 \leq C'(N+2T(\Psi))$. Then the second estimate is
\begin{equation}
    | \langle f,\rho_\Psi \rangle | \leq N \|f_n\|_{L^\infty} + C' \|f-f_n\|_{L^2} (N+2T(\Psi))
\end{equation}
and we can combine this to
\begin{equation}
    | \langle v,\rho_\Psi \rangle | \leq N \left(\|f_n\|_{L^\infty} + \|\nabla g_n\|_{L^\infty} \right) + \left(2C \|g-g_n\|_{L^2} + C' \|f-f_n\|_{L^2} \right) (N+2T(\Psi)).
\end{equation}
But by increasing the index $n$ the $\|f-f_n\|_{L^2}$ and $\|g-g_n\|_{L^2}$ can be taken arbitrarily small such that the whole prefactor for $T(\Psi)$ drops below any $\eps>0$, while the remaining terms increase to some large but finite value.
\end{proof}

This result makes Theorem~\ref{th:KLMN} (KLMN theorem) applicable to potentials $v\in\affspace^*$. We will go one step further with the next theorem, showing also the corresponding Hamiltonian always allows for a ground state in the appropriate sense. As a preparatory step we show the following equivalence of norms.

\begin{lemma}\label{lem:norm-equiv}
    Let $v\in\affspace^*$ then the corresponding potential form $V$ defines a self-adjoint Hamiltonian $H=H_0+V$ that is bounded below. With an appropriate shift of the potential $v$ by adding a constant, the square root of $\langle \Psi,H\Psi \rangle$ is equivalent to the $H^1$-norm for all $\Psi\in\psiset$.
\end{lemma}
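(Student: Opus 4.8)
\noindent The first assertion is immediate: by Lemma~\ref{lem:v-kinetic-bound} the form $V$ is kinetically bounded with relative kinetic bound $0<1$, and $W$ is positive and kinetically bounded with relative kinetic bound $<1$ by hypothesis, so Corollary~\ref{cor:Hamiltonian} applies and $H=H_0+V$ is self-adjoint and bounded below with $Q(H)=Q(-\Delta)$. Since replacing $v$ by $v+c$ only adds the additive constant $cN$ to $H$ (changing neither self-adjointness nor the form domain), the plan is to prove the two-sided estimate for a suitably shifted $H$. Because every $\Psi\in\psiset$ is normalized, one has the identity $\|\Psi\|_{H^1}^2=\|\Psi\|_{\H}^2+2T(\Psi)=1+2T(\Psi)$, so the desired equivalence of $\sqrt{\langle\Psi,H\Psi\rangle}$ with $\|\Psi\|_{H^1}$ reduces to exhibiting constants $0<c_1\le c_2$ with $c_1(1+2T(\Psi))\le\langle\Psi,H\Psi\rangle\le c_2(1+2T(\Psi))$ for all $\Psi\in\psiset$.

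For the upper bound I would expand $\langle\Psi,H\Psi\rangle=T(\Psi)+W(\Psi,\Psi)+V(\Psi,\Psi)$ and apply the kinetic bounds $W(\Psi,\Psi)\le a_WT(\Psi)+b_W$ and $|V(\Psi,\Psi)|\le a_VT(\Psi)+b_V$, together with $T(\Psi)\le\frac12\|\Psi\|_{H^1}^2$ and $1\le\|\Psi\|_{H^1}^2$, to get $\langle\Psi,H\Psi\rangle\le c_2\|\Psi\|_{H^1}^2$ with $c_2=\frac12(1+a_W+a_V)+b_W+b_V$; this survives a later shift by $cN$ at the price of enlarging $c_2$. For the lower bound I would discard the nonnegative term $W(\Psi,\Psi)$ and, using that the relative kinetic bound of $V$ is $0$, select a pair $(a_V,b_V)$ with $a_V<1$ (say $a_V=\frac14$), so that $\langle\Psi,H\Psi\rangle\ge(1-a_V)T(\Psi)-b_V$. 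Then I choose the constant shift $c$ with $cN-b_V\ge\frac{1-a_V}{2}$; the shifted Hamiltonian $\tilde H$ satisfies $\langle\Psi,\tilde H\Psi\rangle\ge(1-a_V)T(\Psi)+\frac{1-a_V}{2}=\frac{1-a_V}{2}(1+2T(\Psi))=\frac{1-a_V}{2}\|\Psi\|_{H^1}^2>0$, and taking square roots in $\frac{1-a_V}{2}\|\Psi\|_{H^1}^2\le\langle\Psi,\tilde H\Psi\rangle\le c_2\|\Psi\|_{H^1}^2$ finishes the argument.

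The only genuinely delicate point is the lower bound. The constant $b_V$ produced by the kinetic estimate for the distributional potential is not controlled by the $H^1$-norm, and it is exactly this term that the additive shift of $v$ is designed to absorb; at the same time one must retain a strictly positive multiple of $T(\Psi)$ (equivalently of $\|\nabla\Psi\|_{\H}^2$), which is why it is essential that the relative kinetic bound of $V$ be strictly below $1$ --- Lemma~\ref{lem:v-kinetic-bound} gives it equal to $0$, leaving plenty of room. Everything else is routine manipulation of the kinetic-boundedness inequalities and of the identity $\|\Psi\|_{H^1}^2=1+2T(\Psi)$.
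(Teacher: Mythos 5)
Your proof is correct and follows essentially the same route as the paper: establish self-adjointness and lower-boundedness via Lemma~\ref{lem:v-kinetic-bound} and Corollary~\ref{cor:Hamiltonian}, then combine the kinetic-boundedness inequalities with the identity $\|\Psi\|_{H^1}^2 = 1 + 2T(\Psi)$ and a constant shift to obtain the two-sided estimate. The only (minor, harmless) deviation is in the lower bound, where you simply drop the nonnegative $W(\Psi,\Psi)$ term instead of bounding it by $a_W T(\Psi)+b_W$ as the paper does; this needs only $a_V<1$ rather than $a_V+a_W<1$, which is a slight simplification but buys nothing essential since $a_V$ can already be taken arbitrarily small.
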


\begin{proof}
    That any $v\in\affspace^*$ defines a self-adjoint Hamiltonian that is bounded below is proven if Lemma~\ref{lem:v-kinetic-bound} and Corollary~\ref{cor:Hamiltonian} are combined.
    By Lemma~\ref{lem:v-kinetic-bound} for $V$ and by assumption for $W$, both quadratic forms have relative kinetic bounds $<1$, so we can find $0\leq a_V,a_W <1$ and $b_V,b_W\geq 0$ like in Definition~\ref{def:kinetic-bound} where $a_V$ can even be taken arbitrarily small, so $a_V+a_W<1$.
    Now reordering the expectation value
    $\langle \Psi,H\Psi \rangle = T(\Psi) + \langle \Psi,W\Psi \rangle + \langle \Psi,V\Psi \rangle$
    gives the estimate
    \begin{equation}
        T(\Psi) \leq \langle \Psi,H\Psi \rangle + |\langle \Psi,W\Psi \rangle| + |\langle \Psi,V\Psi \rangle| \leq \langle \Psi,H\Psi \rangle + aT(\Psi) + b,
    \end{equation}
    where we introduced $0\leq a = a_V+a_W < 1$ and $b = b_V+b_W \geq 0$. Thus we have
    \begin{equation}
        (1-a) T(\Psi) - b \leq \langle \Psi,H\Psi \rangle
    \end{equation}
    as a lower bound for $\langle \Psi,H\Psi \rangle$.
    Next, remember that a potential $v\in\affspace^*$ is only defined up to an additive constant as an equivalence class, so we are actually free to choose an arbitrary additive constant for $H$. Let $H' = H + (1-a)/2+b$ then
    \begin{equation}
        \frac{1-a}{2} (2T(\Psi) +1) \leq \langle \Psi,H'\Psi \rangle.
    \end{equation}
    Since $2T(\Psi) + 1$ is just the square of the $H^1$-norm for a (normalized) $\Psi\in\psiset$, this gives the lower bound for the equivalence. The upper bound is seen more directly from the same kinetic bounds for the expectation values of $V$ and $W$.
    \begin{equation}
        \langle \Psi,H'\Psi \rangle = \langle \Psi,H\Psi \rangle + \frac{1-a}{2}+b \leq (1+a)T(\Psi) + \frac{1-a}{2} + 2b \leq \max\left\{ \frac{1+a}{2}, \frac{1-a}{2} + 2b \right\} ( 2T(\Psi) +1 ).
    \end{equation}
    This establishes the equivalence between the square root of $\langle \Psi,H'\Psi \rangle$ and the $H^1$-norm of $\Psi$.
\end{proof}

\begin{theorem}\label{th:minimizer}
    Let $v\in\affspace^*$ then the corresponding potential form $V$ defines a self-adjoint Hamiltonian $H=H_0+V$ that is bounded below. This $H$ has a minimizer $\Psi_0\in\psiset$ in Eq.~\eqref{eq:E-def}, $E(v) = E_0 = \langle \Psi_0, (H_0+V) \Psi_0 \rangle$, that is also a solution to the Schrödinger equation $H\Psi_0=E_0\Psi_0$ (in a distributional sense).
\end{theorem}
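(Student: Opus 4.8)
The plan is to establish existence of a minimizer via the direct method of the calculus of variations, exploiting the norm equivalence from Lemma~\ref{lem:norm-equiv}, and then to identify the minimizer as a distributional solution of the Schrödinger equation through a standard first-variation argument. The first sentence of the statement — that $H$ is self-adjoint and bounded below — is already contained in Lemma~\ref{lem:norm-equiv} (equivalently, in the combination of Lemma~\ref{lem:v-kinetic-bound} with Corollary~\ref{cor:Hamiltonian}), so nothing new is needed there; one may additionally shift $v$ by the constant from Lemma~\ref{lem:norm-equiv} so that $\langle\Psi,H'\Psi\rangle^{1/2}$ is equivalent to $\|\Psi\|_{H^1}$, and recover the original $H$ afterwards by shifting back (the minimizer and the eigenvalue equation are unaffected up to the constant).

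First I would take a minimizing sequence $\{\Psi_n\}\subset\psiset$ with $\langle\Psi_n,H'\Psi_n\rangle\to E'_0=\inf\sigma(H')$. By Lemma~\ref{lem:norm-equiv} this sequence is bounded in $(H^1(\T)\otimes\C^2)^{\wedge N}$, hence — since bounded sets in a Hilbert space are weakly sequentially precompact — a subsequence converges weakly, $\Psi_n\rightharpoonup\Psi_0$ in $H^1$. By the compact Sobolev embedding $H^1(\T)\hookrightarrow C^0(\T)$ noted in Lemma~\ref{lem:embeddings} (hence also compactly into $L^2$), the convergence is strong in $\H=L^2$, so $\|\Psi_0\|_\H=\lim\|\Psi_n\|_\H=1$ and $\Psi_0\in\psiset$. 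The quadratic form $\Psi\mapsto\langle\Psi,H'\Psi\rangle$ is weakly lower-semicontinuous: the kinetic part $T(\Psi)=\tfrac12\sum_j\|\nabla_j\Psi\|_\H^2$ is a norm-squared and hence weakly lsc, while the potential parts $V$ and $W$, by kinetic boundedness with relative bound $<1$ together with the strong $L^2$-convergence and boundedness in $H^1$, pass to the limit (one writes $V(\Psi_n,\Psi_n)=\langle v,\rho_{\Psi_n}\rangle$ and uses that $\rho_{\Psi_n}\to\rho_{\Psi_0}$ in a topology fine enough — $H^1$ weakly, $L^\infty$ strongly — to pair against $v\in H^{-1}$; one can also split $V$ into the $\eps$-small relative-kinetic piece plus a form-bounded remainder via the approximation argument of Lemma~\ref{lem:v-kinetic-bound}). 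Therefore $\langle\Psi_0,H'\Psi_0\rangle\le\liminf\langle\Psi_n,H'\Psi_n\rangle=E'_0$, and since $E'_0$ is the infimum we get equality, so $\Psi_0$ is a minimizer.

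Finally, to obtain the Schrödinger equation I would perform the standard variation over the constraint sphere: for any $\Phi\in Q(H)=Q(-\Delta)$, consider the curve $\lambda\mapsto(\Psi_0+\lambda\Phi)/\|\Psi_0+\lambda\Phi\|_\H$ in $\psiset$; differentiating the Rayleigh quotient $\langle\Psi,H\Psi\rangle/\langle\Psi,\Psi\rangle$ at $\lambda=0$ and using that $\lambda=0$ is a minimum yields the weak Euler--Lagrange equation $\langle\Phi,H\Psi_0\rangle=E_0\langle\Phi,\Psi_0\rangle$ for all $\Phi\in Q(H)$, with $E_0=\langle\Psi_0,H\Psi_0\rangle$; taking real and imaginary parts of $\Phi$ removes the usual sign subtlety. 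Restricting $\Phi$ to the test functions $(C^\infty(\T)\otimes\C^2)^{\otimes N}$ (intersected with $\H$), which are dense in $Q(H)$, this is precisely the statement that $H\Psi_0=E_0\Psi_0$ holds in the distributional sense, with the action of $V=\sum_j v(x_j)$ given by the formula in Example~\ref{ex:potentials-act}. I expect the main obstacle to be the weak lower-semicontinuity (equivalently weak continuity) of the potential forms $V$ and $W$ along the minimizing sequence: one must be careful that the distributional $v\in H^{-1}$ is paired only against the weak-$H^1$/strong-$L^\infty$ limit of the densities $\rho_{\Psi_n}$, and that the $\eps$-relative-bounded part is absorbed into the (weakly lsc) kinetic term rather than treated directly — this is exactly where the quantitative estimates of Lemma~\ref{lem:v-kinetic-bound} and the compact embedding of Lemma~\ref{lem:embeddings} do the real work.
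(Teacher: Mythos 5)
Your overall strategy coincides with the paper's: extract a minimizing sequence, get $H^1$-boundedness from Lemma~\ref{lem:norm-equiv}, pass to a weak $H^1$ and strong $L^2$ subsequential limit $\Psi_0 \in \psiset$, show the energy is attained, and obtain the distributional Schr\"odinger equation by first variation (your Rayleigh-quotient curve and the paper's direct expansion of $E_0\|\Psi_0+\alpha\Phi\|_\H^2 \leq \langle\Psi_0+\alpha\Phi, H(\Psi_0+\alpha\Phi)\rangle$ are the same computation).

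Where you genuinely diverge is the lower-semicontinuity step, and this is where you also leave a gap. The paper does \emph{not} split $\langle\Psi, H'\Psi\rangle$ into kinetic, interaction and external pieces and argue about each. Instead it re-uses Lemma~\ref{lem:norm-equiv} a second time: after the constant shift, $\Psi \mapsto \sqrt{\langle\Psi,H'\Psi\rangle}$ is a norm equivalent to $\|\cdot\|_{H^1}$, and the norm is weakly lower semicontinuous, so $\langle\Psi_0,H'\Psi_0\rangle \le \liminf_n \langle\Psi_n,H'\Psi_n\rangle = E_0 + c$, which together with $\Psi_0\in\psiset$ and the variational lower bound closes the argument in two lines. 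Your decomposition works for the kinetic part (convex, hence weakly lsc) and can be made to work for $V$ via the pairing $\langle v,\rho_{\Psi_n}\rangle$ (the $\rho_{\Psi_n}$ are bounded in $H^1(\T)$ via Lemma~\ref{lem:psiset-implies-densset} and Eq.~\eqref{eq:lem:embeddings}, so up to a subsequence $\rho_{\Psi_n}\rightharpoonup\rho_{\Psi_0}$ in $H^1$ and $\langle v,\rho_{\Psi_n}\rangle\to\langle v,\rho_{\Psi_0}\rangle$), but for $W$ you only assert that it ``passes to the limit.'' For a \emph{general} quadratic form that is merely positive and $T$-bounded with relative bound $<1$, continuity along a weakly convergent minimizing sequence is not automatic; what one actually has is that $T + W$ is a closed positive form on $Q(-\Delta)$ whose form norm is equivalent to $\|\cdot\|_{H^1}$ — i.e., exactly the norm-equivalence argument of Lemma~\ref{lem:norm-equiv} restricted to $H_0$. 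Your ``absorb the $\eps$-piece into $T$'' fallback also does not straightforwardly yield lower semicontinuity because the constant $b_\eps$ blows up as $\eps\to 0$; made rigorous it again collapses into the closed-form/equivalent-norm argument. So the cleanest (and the paper's) route is to invoke Lemma~\ref{lem:norm-equiv} once more rather than decompose.

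A second, smaller inaccuracy: you justify strong $L^2$ convergence via ``the compact Sobolev embedding $H^1(\T)\hookrightarrow C^0(\T)$ noted in Lemma~\ref{lem:embeddings} (hence also compactly into $L^2$).'' Lemma~\ref{lem:embeddings} concerns functions of a \emph{single} variable on $\T$, whereas the wave functions live on $\T^N$, where $H^1\hookrightarrow C^0$ fails for $N>1$. The correct (and the paper's) reference is the Rellich--Kondrachov theorem giving the compact embedding $H^1(\T^N)\hookrightarrow L^2(\T^N)$ directly; the conclusion you draw is right, but not for the stated reason.
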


\begin{proof}
    By Lemma~\ref{lem:norm-equiv} we know that any $v\in\affspace^*$ defines a self-adjoint Hamiltonian that is bounded below.
    Next, take a sequence $\{\Psi_n\}_n \in \psiset$ that realizes the infimum in Eq.~\eqref{eq:E-def}, so $\langle \Psi_n, H\Psi_n \rangle \to E_0$. Consequently, for any $\eps>0$ there is a $n_0\in\N$ such that for all $n>n_0$ it holds $\langle \Psi_n, H\Psi_n \rangle < E_0+\eps$. 
    By Lemma~\ref{lem:v-kinetic-bound} for $V$ and by assumption for $W$, both quadratic forms have relative kinetic bounds $<1$, so we can find $a_V,a_W <1$ like in Definition~\ref{def:kinetic-bound} where $a_V$ can even be taken arbitrarily small, so $a_V+a_W<1$. With the estimate from the kinetic bound it holds for all $n>n_0$ that
    \begin{equation}
        E_0+\eps > \langle \Psi_n,H\Psi_n \rangle = T(\Psi_n) + \langle \Psi_n,W\Psi_n \rangle + \langle \Psi_n,V\Psi_n \rangle \geq (1-a_W-a_V)T(\Psi_n) - b_W-b_V.
    \end{equation}
    But for all (normalized) $\Psi_n\in\psiset$ the $T(\Psi_n)$ is equivalent to the $H^1$-norm squared, so the above inequality means that all $\Psi_n$ with $n>n_0$ are in a $H^1$-bounded set.
    Since the Sobolev space $H^1$ is reflexive, this guarantees that $\{\Psi_n\}_n$ has a subsequence that converges weakly in $H^1$ \cite[Cor.~11.9]{book-clason}.
    Further, since the torus is a bounded domain, by the Rellich--Kondrachov theorem \cite[Th.~6.3]{adams-book} $H^1$ is compactly embedded in $L^2$. This means the previously chosen subsequence permits another subsequence, again called $\{\Psi_n\}_n$, that converges strongly in $L^2$ to a limit that we call $\Psi_0$. Because of the weak convergence in $H^1$ and the strong convergence in $L^2$ we know that $\Psi_0\in\psiset$ which implies $\|\Psi_0\|_{L^2}=1$.
    From Lemma~\ref{lem:norm-equiv} we now know that with an appropriate constant shift we can define $H'=H+c$, where the square root of the expectation value $\langle \Psi,H'\Psi \rangle$ defines a norm equivalent to the $H^1$-norm. Since the norm is non-increasing under weak limits \cite[Thm.~$11.2$]{book-clason} and any expectation value of $H$ with respect to $\Psi_0\in\psiset$ must be bounded below by $E_0$, we have
    \begin{equation}
        E_0+c \leq \langle \Psi_0,(H+c)\Psi_0 \rangle = \langle \Psi_0,H'\Psi_0 \rangle \leq \lim_{n\to\infty} \langle \Psi_n,H'\Psi_n \rangle = E_0+c
    \end{equation}
    and thus $E_0=\langle \Psi_0,H\Psi_0 \rangle$.\\
    For showing that this $\Psi_0$ solves the Schrödinger equation in a distributional sense we follow \citet[Th.~11.8]{Lieb-Loss-book}. Take any test function $\Phi$ and note with $\alpha\in\R$ from the variational principle of Eq.~\eqref{eq:E-def} that
    \begin{equation}
        E_0 \|\Psi_0+\alpha\Phi\|^2_{\H} \leq \langle (\Psi_0+\alpha\Phi),H(\Psi_0+\alpha\Phi) \rangle,
    \end{equation}
    where the norm of the vector $\Psi_0+\alpha\Phi$ on the left appears due to normalization.
    Writing the squares all out and using that $E_0=\langle\Psi_0, H\Psi_0\rangle$ we have
    \begin{equation}
        0 \leq 2\alpha\Re \langle\Phi,(H-E_0)\Psi_0\rangle + \alpha^2 (\langle\Phi,H\Phi\rangle - E_0\|\Phi\|^2_\H).
    \end{equation}
    Now this equation needs to hold for $\alpha$ arbitrarily small and so it follows
    \begin{equation}
        0 \leq 2\alpha\Re \langle\Phi,(H-E_0)\Psi_0\rangle
    \end{equation}
    and since $\alpha$ can have any sign we conclude that $\Re \langle\Phi,(H-E_0)\Psi_0\rangle = 0$. Replacing $\Phi$ by $\i\Phi$ gives $\Im \langle\Phi,(H-E_0)\Psi_0\rangle=0$ and consequently $\langle\Phi,(H-E_0)\Psi_0\rangle=0$ for arbitrary test functions $\Phi$ which concludes the proof.
\end{proof}

\section{Conclusions and outlook}
\label{sec:conclusions}

To prove our main result from the introduction is now just a matter of collecting the results from the previous sections.

\begin{proof}[Proof of Theorem~\ref{th:main}]
Let $\rho \in \vrepset$ then Theorem~\ref{th:F-nonempty-subdiff} shows that $\rho\in\densset$ and $\partial F(\rho)$ is non-empty. Take $-v\in \partial F(\rho)\subset \affspace^*$ then $\rho$ is a minimizer in the variational problem Eq.~\eqref{eq:sec:rep:E-from-F} and $E(v)=F(\rho) + \langle v,\rho \rangle$ (see Section~\ref{sec:rep}). In order to connect to quantum-mechanical ground states we need the constrained-search functional. We cannot use $\tilde F$, since $F\neq \tilde F$ in general, but we have $F=\FDM$ on $\densset$ by Theorem~\ref{th:FDM}. By this result,
\begin{equation}\label{eq:th:main:proof:E0}
    E_0 = E(v) = \FDM(\rho) + \langle v,\rho \rangle = \inf_{\Gamma\mapsto\rho}\{\Tr \Gamma H_0\} + \langle v,\rho \rangle = \inf_{\Gamma\mapsto\rho}\{\Tr \Gamma H\},
\end{equation}
where we simply recombined $H_0+V=H$.
Now as in Theorem~\ref{th:minimizer} we next want to show that this infimum allows for a minimizer $\Gamma_0\mapsto\rho$. This is achieved by \citet[Cor.~4.5(ii)]{Lieb1983} that can be applied in the current setting since $\FDM$ is defined in exactly the same manner (with the only difference of a one-dimensional torus domain instead of $\R^3$ which only simplifies things) and every $\rho \in \vrepset$ is also in $\densset$ by Theorem~\ref{th:F-nonempty-subdiff}, the set considered in the reference.
Now, by Theorem~\ref{th:FDM} such a $\Gamma_0\mapsto\rho$ can be constructed as a (possibly infinite) convex sum $\Gamma_0=\sum \lambda_k \langle \Psi_{0,k},\cdot \rangle \Psi_{0,k}$ with $\Psi_{0,k} \in \psiset$ all orthogonal. Assume that there is one $\Psi_{0,l}$ in this set with $\lambda_l>0$ that has an energy expectation value that is above the ground-state energy, i.e., $\langle \Psi_{0,l},H\Psi_{0,l} \rangle = E_0+\eps$ with $\eps>0$. Then it follows immediately from Eq.~\eqref{eq:th:main:proof:E0} that $E_0 = \sum_k\lambda_k E_0 + \lambda_l\eps = E_0+\lambda_l\eps$, a clear contradiction. Consequently, all $\Psi_{0,k}$ that appear in $\Gamma_0$ with $\lambda_k>0$ have $\langle \Psi_{0,k},H\Psi_{0,k} \rangle = E_0$ and thus are ground-state solutions for the Schrödinger equation $H\Psi_{0,k}=E_0\Psi_{0,k}$ in a distributional sense as demonstrated in Theorem~\ref{th:minimizer}.
Since it also holds from $\Gamma_0\mapsto\rho$ that $\sum\lambda_k\rho_k=\rho$ with $\Psi_{0,k}\mapsto \rho_k$, $\rho$ is indeed $v$-representable. In case we need more than a single $\Psi_{0,k}$ to arrive at $\rho$, $v$ yields a degenerate ground state and $\rho$ is the density of an ensemble state.
\end{proof}

That this result only holds for one-dimensional periodic domains can of course be seen as a severe limitation, yet the considered torus geometries are customarily employed in condensed matter physics in the form of Born--von-Kármán periodic boundary conditions. This means the presented theory of $v$-representability, apart from solving this important problem of DFT, is also immediately relevant for models of condensed matter physics.
Still, the appearance of a distributional potential might seem unconventional, but one has to remember that already one of the most early models of condensed matter physics, the Kronig--Penney model~\cite{KP1931}, included a limit to \emph{distributional} potentials in the form of delta peaks (cf.~Example~\ref{ex:delta}). Similar `point interactions' are also considered in the theory of one-dimensional Schrödinger operators~\cite{Kostenko2010} and are among the very few analytically solvable models of quantum mechanics~\cite{albeverio2012solvable}. Another famous appearance of distributional potentials is the Haldane pseudopotential~\cite{Haldane1983,Seiringer2020} that manifests the Laughlin wavefunction of the fractional quantum Hall effect as an exact eigenstate. This list can be continued with the Lieb--Liniger model~\cite{Lieb1963} for one-dimensional bosons that leads to the Gross--Pitaevskii equation in an effective description~\cite{Kopyciski2022}. All those examples demonstrate that distributional potentials are an indispensable tool to describe important quantum phenomena. Returning to the setting of DFT, this translates to the realization that in order to receive perfectly well-behaved densities (like the one from the example at the outset of the paper, Eq.~\eqref{eq:EE-dens}) as the result of a ground-state problem, one has to include distributional potentials into the quantum-mechanical description.

With our main result proven and brought into context, we want to conclude with a list of follow-up questions and ideas how to generalize our findings.

\begin{enumerate}
    \item A first open question would be that of the `maximality' of the proposed $v$-representable density set $\vrepset$. So, is $\rho\in\affspace$, $\rho\geq 0$, with $\rho(x)=0$ at one or many $x\in\T$ not $v$-representable by any $v\in\affspace^*$? This is equivalent to the statement that any $v\in\affspace^*$ yields a ground-state density that is gapped away from zero, i.e., $\rho\in\vrepset$. Such a result would close the circle and allow for a well-defined mapping between the potential and density sets $\affspace^*$ and $\vrepset$ (yet not necessarily a one-to-one mapping due to potential non-unique $v$-representability and the possible occurrence of degeneracies).

    \item Another obvious open question is the possibility of a generalization to a $(d\geq 2)$-dimensional torus domain or other more-dimensional compact manifolds. The current method critically uses the embedding  $H^{1} = W^{1,2} \hookrightarrow L^\infty$ from Lemma~\ref{lem:embeddings} which only holds for $d=1$. This is then employed for showing that $\vrepset$ is open (Lemma~\ref{lem:open}) and also to have a kinetic bound for potentials from $\affspace^*$ (Lemma~\ref{lem:v-kinetic-bound}).
    Other Sobolev embeddings are available, and in $d\geq 2$ dimensions $W^{1,p}\hookrightarrow L^\infty$ for $p> d$ could be applied.
    But Lemma~\ref{lem:v-kinetic-bound} also critically relies on an estimate of the $W^{1,2}$-norm by the kinetic energy, which is again needed to show applicability of the KLMN theorem, and the same is not achievable for $W^{1,p}$ with $p>2$. Consequently, a generalization to higher dimensions requires a different approach. In any case, we would not expect that the basic finding of this work would change in higher dimensions: That in order to represent a reasonable class of densities with finite kinetic energy, distributional potentials have to be considered.

    \item A crucial step in the proof of \citet{hohenberg-kohn1964} is to show that for two potentials that differ by more than a constant, the ground states cannot be possibly equal. In \citet{Penz-et-al-HKreview-partI} this was termed ``HK2'' theorem and formulated equivalently: If two potentials share any common eigenstate then the potentials are equal up to a constant. But \citet{hohenberg-kohn1964} do not give any details about this part of the proof, which usually proceeds as follows. Assume $\Psi$ is an eigenstate of two different potentials $V=\sum_j v(x_j)$ and $V'=\sum_j v'(x_j)$ then it must hold
    \begin{align}
        &(H_0+V)\Psi=E\Psi \\
        &(H_0+V')\Psi=E'\Psi,
    \end{align}
    and subtraction of the two equations gives $(V-V')\Psi = (E-E')\Psi$. Then, moving all potential parts that do not depend on $x_1$ to one side gives
	\begin{equation}
		\left(v(x_1)-v'(x_1)\right)\Psi = (E-E')\Psi - \sum_{j=2}^N \left(v(x_j)-v'(x_j)\right)\Psi
	\end{equation}
	and one could infer that $v(x_1)-v'(x_1)=\mathrm{const}$ after division by $\Psi$. This division is the crucial step here, and we assume that it was this part of the proof that \citet{CCR1985} had in in mind when they wrote about including generalized potentials, as already partly cited in the beginning: ``However, we caution the reader that much of HK theory, particularly the proof of the basic uniqueness theorem, may break down for generalized potentials.''
    Now the division step firstly requires the wave function to be non-zero almost everywhere, which is secured by the unique continuation property in certain settings~\cite{Garrigue2018HK}, but also it only works if the potential is a multiplication operator. (\citet{Dreizler-Gross}, who repeat the example of \citet{ENGLISCH1983}, after Eq.~(2.24) comment on precisely this problem, but they put the focus on a possible failure of the unique continuation property instead.) As we showed here, in order to have $v$-representability, we must extend the domain of potentials to distributions that are not multiplicative, hence this step in the proof of the Hohenberg--Kohn theorem will generally fail. While this is not a falsification of Hohenberg--Kohn it is still an important indication that something might go wrong here. Similarly, in finite-lattice systems it is \emph{known} that there are counterexamples to the Hohenberg--Kohn theorem while one has full $v$-representability for all densities $0<\rho<1$ \cite{penz-DFT-graphs}. So while we gain $v$-representability we actually might loose the unique mapping from densities to potentials. In this context it is interesting to remember the observation of Simen Kvaal in \citet[2.1.17]{teale2022dft} that neither the convex formulation of exact DFT nor the Kohn--Sham approach do in any way rely on the Hohenberg--Kohn theorem. Rather they depend on the constraint-search functional and the Legendre--Fenchel transformation and, in the case of Kohn--Sham, specifically on interacting and non-interacting $v$-representability. This means that the given result is imperative for a well-defined Kohn--Sham procedure, since it proves the existence of the $v_\mathrm{s}$ potential of the auxiliary system for the first time (cf.~Mathieu Lewin in \citet[4.5.2]{teale2022dft}). Another application is the adiabatic connection where indeed $v$-representability for different interaction strengths needs to be assumed, a property that can be achieved in the setting given here. In this sense it can be argued that the achievement of $v$-representability outweighs the potential loss of the validity of the Hohenberg--Kohn theorem.

    \item The previous question on the validity of the Hohenberg--Kohn theorem for the class $\affspace^*$ of potentials even connects directly to functional differentiability by another result of convex analysis. Since $F(\rho)$ is continuous at any $\rho\in\vrepset$ we have a non-empty subdifferential $\partial F(\rho)$ (Theorem~\ref{th:F-nonempty-subdiff}) and as a consequence $\rho$ is $v$-representable. If $\partial F(\rho)$ contains just a \emph{single} element, which would be guaranteed by the Hohenberg--Kohn theorem, then this implies that $F$ is also Gateaux differentiable at $\rho$, and vice versa \cite[Prop.~2.40]{Barbu-Precupanu}. Most formulations of DFT rely implicitly on this functional differentiability. The $v$-representability is a necessary but not a sufficient criterion for functional differentiability and the previous point even gives some indications for a possible failure of the Hohenberg--Kohn theorem, so the question about functional differentiability still remains unanswered. Nevertheless we note the following interesting corollary.
    \begin{corollary}
        The two statements are equivalent:
        \begin{enumerate}[(i)]
            \item Every $\rho\in\vrepset$ is represented by a \emph{unique} potential $v\in\affspace^*$ (Hohenberg--Kohn theorem).
            \item $F$ is Gateaux differentiable at all $\rho\in\vrepset$.
        \end{enumerate}
    \end{corollary}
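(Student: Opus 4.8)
The plan is to establish the equivalence as a direct consequence of a standard result from convex analysis, namely \citet[Prop.~2.40]{Barbu-Precupanu}, which states that a convex functional that is continuous at a point is Gateaux differentiable there if and only if its subdifferential at that point is a singleton. The bulk of the work, which has already been done in the excerpt, is verifying that this result applies pointwise on all of $\vrepset$: by Theorem~\ref{th:F-nonempty-subdiff}, $F$ is continuous at every $\rho\in\vrepset$ and $\partial F(\rho)$ is non-empty, so the hypotheses of the cited proposition are met at each such $\rho$.

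First I would unpack the phrase ``represented by a unique potential'' in (i). The translation from the variational characterization of Section~\ref{sec:rep} is that $\rho\in\vrepset$ is represented by $v$ precisely when $-v\in\partial F(\rho)$; hence $\rho$ is represented by a \emph{unique} $v\in\affspace^*$ if and only if $\partial F(\rho)$ consists of a single element. (The equivalence-class structure of $\affspace^*$, i.e.\ the quotient by constants, has already been built into the definition in Eq.~\eqref{eq:v-dual-space}, so ``unique'' here is unique-up-to-a-constant, exactly what the Hohenberg--Kohn theorem asserts.) One subtlety worth a sentence: one must know that at least one representing potential exists, so that ``unique'' is not vacuous — but this is guaranteed by Theorem~\ref{th:F-nonempty-subdiff}, and it is also why the statement is naturally phrased on $\vrepset$ rather than on all of $\affspace$.

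Then the argument is a biconditional chain applied uniformly. Statement (i) says: for every $\rho\in\vrepset$, $\partial F(\rho)$ is a singleton. Statement (ii) says: for every $\rho\in\vrepset$, $F$ is Gateaux differentiable at $\rho$. Since $F$ is continuous at each $\rho\in\vrepset$ (Theorem~\ref{th:F-nonempty-subdiff} together with Lemma~\ref{lem:F-bound} and Lemma~\ref{lem:F-continuous}), \citet[Prop.~2.40]{Barbu-Precupanu} gives, \emph{at each fixed} $\rho\in\vrepset$, the equivalence ``$\partial F(\rho)$ is a singleton'' $\iff$ ``$F$ is Gateaux differentiable at $\rho$''. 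Quantifying this pointwise equivalence over all $\rho\in\vrepset$ yields (i) $\iff$ (ii). When $F$ is Gateaux differentiable at $\rho$, the unique subgradient is the derivative $F'(\rho)$, so the representing potential is $v=-F'(\rho)$, which also makes the correspondence between the two statements fully explicit.

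I do not expect a genuine obstacle here: every nontrivial input — continuity of $F$ on $\vrepset$, non-emptiness of the subdifferential, the identification of representing potentials with negative subgradients — is already in place from earlier sections, and the convex-analysis fact is cited off the shelf. The only thing demanding a little care is bookkeeping: making sure the pointwise equivalence is invoked for \emph{each} $\rho\in\vrepset$ before taking the universal quantifier, and confirming that the quotient-by-constants convention does not spoil the word ``unique'' (it does not, since it is baked into $\affspace^*$ from the start).
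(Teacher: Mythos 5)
Your proposal is correct and follows the same route as the paper: the paper's justification (given in the surrounding text rather than a displayed proof) is precisely the pointwise application of \citet[Prop.~2.40]{Barbu-Precupanu} at each $\rho\in\vrepset$, using the continuity of $F$ there established in Theorem~\ref{th:F-nonempty-subdiff} and the identification $-v\in\partial F(\rho)$ of representing potentials with subgradients. Your extra remarks on the quotient-by-constants convention and on non-vacuity of ``unique'' are accurate and consistent with the paper's intent.
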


    \item Example~\ref{ex:potentials-act} in Section~\ref{sec:dens} already showed how potentials from $\affspace^*$ act on wave functions and densities. But it would be interesting to find ways how to implement such potentials numerically. Since the example showed that the potential has a semi-local effect, this brings the $v$-representing potential close to the realm of gradient expansions~\cite{LangrethMehl1983, SvendsenBarth1996} and generalized gradient approximations~\cite{Perdew1985GGA, Perdew1986GGA, Becke1988GGA}.
    In practice, such functionals often get mixed with non-multiplicative exchange operators to form `hybrid' functionals that, in different variants, belong to the most widely employed density functionals in chemistry and condensed matter physics~\cite{Csonka2010}. While originally being devised as a mere heuristics, they later received a rigorous treatment in the form of \emph{generalized} Kohn--Sham theory~\cite{Seidl1996,Garrick2020}. Just like the usual Kohn--Sham scheme, this method is also in principle exact and aims at reproducing the ground-state density of the fully interacting system, just with a manifestly non-multiplicative exchange operator and a semilocal approximation to the exchange-correlation potential. In the light of our results, this is not just a technicality, but a mathematical necessity of the theory.

    \item The main difference to other formulations of DFT is the switch from the density space $L^1\cap L^3$ as in \citet{Lieb1983} to one with a finer topology, the Sobolev space $H^1$. This means the meaning of `closeness' changes, and densities that have been nearby in the former topology, and led to non-differentiability, are now remote, thus allowing for possible differentiability of the functional. The reason is that the new topology measures the energy content of the associated wave function more accurately, as expressed through Lemma~\ref{lem:F-bound} and the embeddings of Lemma~\ref{lem:embeddings}. This can have important consequences for other problems in DFT such as the question of convergence of the Kohn--Sham iteration scheme \cite{penz2019guaranteed,penz2020erratum,Lammert-bivariate} where, after all, it is always convergence with respect to a certain topology.
    Further and interestingly, these Sobolev spaces appeared similarly in a recent reformulation of the Zhao--Morrison--Parr method of density-potential inversion in terms of Moreau--Yosida regularization~\cite{penz2023MY-ZMP}, although exactly in reverse. There, the density space is $H^{-1}$ and the potentials are taken from $H^1$, as it would be according to the Poisson equation $-\Delta v = 4\pi\rho$.

    \item If one considers a Kohn--Sham system then the potential from a density-potential inversion is usually decomposed as $v=\vext + \vH + \vxc$, the external potential, the Hartree potential and the exchange-correlation contribution. The Hartree potential fulfils (now in a 3d continuous setting) the Poisson equation $-\Delta\vH = 4\pi\rho$ and so for $\rho\in H^1$ it holds $\vH\in H^3$. Analogous to that the exchange-correlation potential is sometimes considered as the effect of a fictitious `xc density', $-\Delta\vxc = 4\pi\rhoxc$ \cite{Andrade2011}. But assuming a regular $\vext$ we thus have from a general $v\in H^{-1}$ that also $\vxc\in H^{-1}$ which means $\rhoxc\in H^{-3}$ which is anything but a usual density distribution. So the current analysis indicates that the exchange-correlation potential cannot be generally assumed to be the effect of a fictitious charge density. 

    \item The formulation can be extended to general particle numbers by extending the constrained search over the full Fock space. Densities integrating to a non-integer number would not be $N$-representable anymore in a strict sense, since there is not an integer-particle state generating that density. Yet, this definition is also readily extended to Fock space.
    A notable difference compared to the presented formulation is that the constant of the potential determines the particle number to some extent: to each integer density a closed interval of constants would be associated and for non-integer densities a unique constant. So the potential space would not be a simple equivalence class of $H^{-1}$, but have a more involved equivalence class structure. This new complication could be avoided by going all the way to a grand canonical setting with a finite temperature. Since the entropy can by bounded by the kinetic energy \citep[Appendix E]{seidl2017sphericalSCE}, one could still work with $\rho \in \densset$. The entropy works as a smoothener, similar to the Moreau--Yosida regularization \citep{penz2023MY-ZMP}, so might even lead to some form of differentiability as in its one-body reduced density matrix counterpart \citep{GiesbertzRuggenthaler2019, SutterGiesbertz2023}.
\end{enumerate}

\begin{acknowledgments}
The initial ideas for this paper were born at a joint research visit of M.\ Penz, M.\ Ruggenthaler and R.\ van Leeuwen in November 2021 at the Oberwolfach Research Institute for Mathematics as a part of their ``Research in Pairs'' program.
S.~M.\ Sutter and K.~J.~H.\ Giesbertz acknowledge support by the Netherlands Organisation for Scientific Research (NWO) under Vici grant 724.017.001 and KLEIN-1 grant OCENW.KLEIN.434. R.\ van Leeuwen acknowledges the Academy of Finland grant under project number 356906. M.\ Ruggenthaler acknowledges the Cluster of Excellence ``CUI: Advanced Imaging of Matter'' of the Deutsche Forschungsgemeinschaft (DFG), EXC 2056, project ID 390715994.
\end{acknowledgments}

\begin{appendix}
\section{Kinetic bounds for interactions}
\label{app:kinetic-bounds-interactions}

\begin{lemma}
The two-body multiplication operator $W=\sum _{i<j}w(x_i-x_j)$ with $w \in L^1([-1,1])$ is kinetically bounded with relative kinetic bound $0$.
\end{lemma}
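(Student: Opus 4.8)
The plan is to reduce the $N$-body expectation value to a single pair term and then split the pair potential into a bounded piece (handled trivially) and a piece of small $L^1$-norm (handled by the one-dimensional Sobolev embedding applied to one coordinate at a time). First I would note that $|\Psi|^2$ is symmetric under particle permutations, so $\langle\Psi,|w(x_i-x_j)|\Psi\rangle$ is the same for every pair $i<j$, whence $|\langle\Psi,W\Psi\rangle|\le\sum_{i<j}\langle\Psi,|w(x_i-x_j)|\Psi\rangle=\binom{N}{2}\langle\Psi,|w(x_1-x_2)|\Psi\rangle$ for all $\Psi\in\psiset$. This replaces $w$ by $|w|\in L^1([-1,1])$, and it suffices to bound the single pair term with a kinetic bound of arbitrarily small relative constant.

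Next, given $\eps>0$ I would decompose $|w|=u_n+r_n$ with $u_n=|w|\cdot\mathbf{1}_{\{|w|\le n\}}\in L^\infty$ and $r_n=|w|\cdot\mathbf{1}_{\{|w|>n\}}$, so that $\|r_n\|_{L^1}\to 0$ as $n\to\infty$ by dominated convergence. The bounded part is immediate: $\langle\Psi,u_n(x_1-x_2)\Psi\rangle\le\|u_n\|_{L^\infty}\|\Psi\|_\H^2=\|u_n\|_{L^\infty}$, a finite constant independent of $\Psi$. For the remainder, I would use that for almost every fixed $(x_2,\dots,x_N)$ and all spin indices the fibre map $x_1\mapsto\Psi(x_1\sigma_1,x_2\sigma_2,\dots,x_N\sigma_N)$ lies in $H^1(\T)$, so by the continuous embedding $H^1(\T)\hookrightarrow L^\infty(\T)$ of Lemma~\ref{lem:embeddings} one has $\sup_{x_1}|\Psi(x_1\sigma_1,\dots)|^2\le C\int(|\Psi|^2+|\nabla_1\Psi|^2)\,\d x_1$ with $C$ the embedding constant. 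Combining this with $\int|r_n(x_1-x_2)|\,|\Psi(x_1\sigma_1,\dots)|^2\,\d x_1\le\|r_n\|_{L^1}\sup_{x_1}|\Psi(x_1\sigma_1,\dots)|^2$ (here the integral of $|r_n(x_1-x_2)|$ over $x_1\in[0,1)$ is at most $\|r_n\|_{L^1([-1,1])}$), then integrating out the remaining coordinates and summing over spins, gives $\langle\Psi,r_n(x_1-x_2)\Psi\rangle\le C\|r_n\|_{L^1}(\|\Psi\|_\H^2+\|\nabla_1\Psi\|_\H^2)\le C\|r_n\|_{L^1}(1+2T(\Psi))$, using $\|\nabla_1\Psi\|_\H^2\le\sum_j\|\nabla_j\Psi\|_\H^2=2T(\Psi)$.

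Putting the two pieces together yields, for every $n$, the bound $|\langle\Psi,W\Psi\rangle|\le\binom{N}{2}\|u_n\|_{L^\infty}+\binom{N}{2}C\|r_n\|_{L^1}(1+2T(\Psi))$. Given $\eps>0$, choosing $n$ large enough that $2\binom{N}{2}C\|r_n\|_{L^1}<\eps$ makes the coefficient of $T(\Psi)$ smaller than $\eps$ while the $\Psi$-independent terms remain bounded by some finite $b_\eps$; since $\eps$ was arbitrary, this is precisely kinetic boundedness with relative kinetic bound $0$ in the sense of Definition~\ref{def:kinetic-bound}. The only mildly delicate points I expect are the measure-theoretic reduction to the one-coordinate fibre function (a routine Fubini argument, with the exceptional null set harmless) and making sure the decomposition of $|w|$ loads \emph{all} of the small mass onto the factor multiplying $T(\Psi)$ — no sharper, multiplicative Sobolev inequality is needed, since the small $L^1$-norm already supplies the smallness. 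An alternative route without the splitting would invoke a Gagliardo--Nirenberg estimate $\|f\|_{L^\infty}^2\le C\|f\|_{L^2}\|f\|_{H^1}$ together with Young's inequality, but the argument above stays within the results already established in the excerpt.
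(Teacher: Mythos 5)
Your proof is correct, but it takes a genuinely different route from the paper's. The paper first performs the rotation $y_1=(x_1-x_2)/\sqrt{2}$, $y_2=(x_1+x_2)/\sqrt{2}$ so that $w(x_1-x_2)=w(\sqrt{2}y_1)$ becomes a one-body potential in $y_1$; it then forms the $y_1$-marginal density $\rho_r$ of $|\Psi\circ r|^2$, writes $\langle\Psi,w(x_1-x_2)\Psi\rangle=\langle w(\sqrt{2}\,\cdot),\rho_r\rangle$, and applies the Sobolev embedding $H^1(\T)\hookrightarrow L^\infty(\T)$ to $\sqrt{\rho_r}$ together with the one-body kinetic estimate $\|\nabla\sqrt{\rho_r}\|_{L^2}^2\leq N\|\nabla_{y_1}(\Psi\circ r)\|_{L^2}^2$ (this is \citet[Th.~1.1]{Lieb1983}, the same estimate behind Lemma~\ref{lem:psiset-implies-densset}); finally it splits $w=w_n+(w-w_n)$ with $w_n\in C^\infty$. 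You avoid both the rotation and the marginal density by working fibrewise: you use the Fubini-type property that for a.e.\ $(x_2,\dotsc,x_N)$ the map $x_1\mapsto\Psi(x_1\sigma_1,\dotsc)$ lies in $H^1(\T)$, apply the embedding to each fibre, and only then integrate out the remaining coordinates; and you split by truncation level, $|w|=|w|\mathbf{1}_{\{|w|\le n\}}+|w|\mathbf{1}_{\{|w|>n\}}$, rather than by smoothing. Both routes are sound. What your fibrewise estimate buys is economy: it sidesteps the definition of $\rho_r$ and Lieb's estimate altogether, and the truncation split makes the ``bounded plus small-$L^1$'' structure transparent without needing density of $C^\infty$ in $L^1$. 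What the paper's rotation buys is that it reduces the two-body bound exactly to the already-proved one-body Lemma~\ref{lem:v-kinetic-bound} machinery, which is then reused verbatim in the companion lemma for the distributional interaction $\nabla_{x_1}g(x_1-x_2)$; there the rotation turns $\nabla_{x_1}g(x_1-x_2)$ into $\nabla_{y_1}\tilde g(y_1)$, precisely the form $f+\nabla g$ treated in Lemma~\ref{lem:v-kinetic-bound}, so in that lemma the rotation is doing real structural work. A minor remark: by antisymmetry $\|\nabla_1\Psi\|_\H^2=2T(\Psi)/N$, so your crude bound $\|\nabla_1\Psi\|_\H^2\le 2T(\Psi)$ costs a factor of $N$, but since the relative kinetic bound is taken to zero anyway this is immaterial.
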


\begin{proof}
Without loss of generality we only consider the spinless case and limit ourselves to the $w(x_1-x_2)$ term. The idea is to rotate the coordinates such that the two-body interaction becomes a one-body operator and then proceed as for the non-distributional potentials. This is achieved with the substitution $y_1 = (x_1-x_2)/\sqrt{2}$ and $y_2=(x_1+x_2)/\sqrt{2}$ which corresponds to the change of coordinates $(x_1,\ldots,x_N) = r(y_1,\ldots , y_N) = ((y_1+y_2)/\sqrt{2}, (y_2-y_1)/\sqrt{2}, y_3, \ldots, y_N)$. The integration domain $\T^N \simeq [0,1]^N$ for $(x_i)_i$ then changes to $[-1/\sqrt{2},1/\sqrt{2}] \times [0,1/\sqrt{2}] \times [0,1]^{N-2}$ for $(y_i)_i$ (notation for this will be suppressed in the integrals), which is displayed in Fig.~\ref{fig:app:int-domain}.
\begin{figure}[ht]
	\centering
	\resizebox{.4\columnwidth}{!}{%
	\begin{tikzpicture}[every node/.style={scale=1.5}] 
	    \draw[->] (-3.5,0) -- (6,0) node[below] {$x_1$};
        \draw[->] (0,-3.5) -- (0,6) node[left] {$x_2$};
        \draw[->] (-3,-3) -- (5.5,5.5) node[right] {$y_2$};
        \draw[->] (-3.5,3.5) -- (3.5,-3.5) node[right] {$y_1$};
        \draw (0.2,5) -- (-0.2,5) node[left] {$1$};
        \draw (5,0.2) -- (5,-0.2) node[below] {$1$};
        \draw[pattern=north west lines, pattern color=blue, opacity=0.0, fill opacity=0.2] (2.5,2.5) -- (5,5) -- (5,0) -- (2.5,2.5);
        \draw[pattern=north west lines, pattern color=blue, opacity=0.0, fill opacity=0.2] (-2.5,2.5) -- (0,5) -- (0,0) -- (-2.5,2.5);
        \draw[pattern=north west lines, pattern color=red, opacity=0.0, fill opacity=0.2] (2.5,2.5) -- (5,5) -- (0,5) -- (2.5,2.5);
        \draw[pattern=north west lines, pattern color=red, opacity=0.0, fill opacity=0.2] (0,0) -- (5,0) -- (2.5,-2.5) -- (0,0);
        \draw[-,line width=2pt] (0,0) -- (0,5) -- (5,5) -- (5,0) -- (0,0);
        \draw[-,color=blue,line width=2pt] (-2.5,2.5) -- (0,5) -- (5,0) -- (2.5,-2.5) -- (-2.5,2.5);
	\end{tikzpicture}
	}
	\caption{Change of the periodic integration domain when the $(x_i)_i$ coordinates (black square) are changed to $(y_i)_i$ (tilted blue rectangle). The two domains are equivalent, since the two hatched blue (red) areas are equivalent due to periodicity of the torus domain $\T^2 \simeq [0,1]^2$.}
	\label{fig:app:int-domain}
\end{figure}
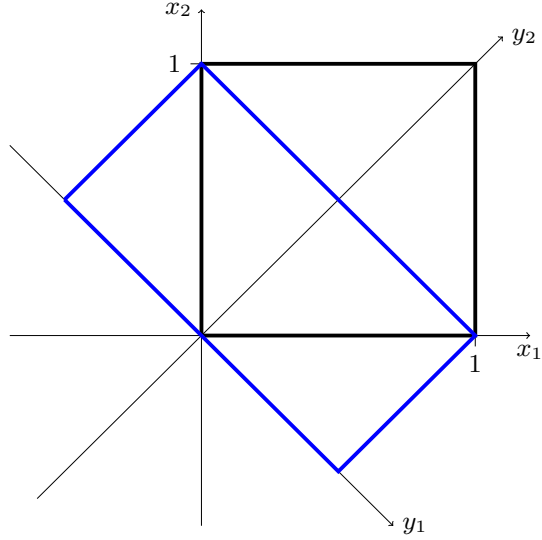
We then have
\begin{equation}
\begin{aligned}
\label{eq:app:w-expectation-rotation}
    \innerproduct{\Psi, w(x_1-x_2)\Psi} &= \int \mathrm{d}x_1 \ldots \mathrm{d}x_N \abs{\Psi}^2w(x_1-x_2) \\
    &=\int \mathrm{d}y_1 \ldots \mathrm{d}y_N \abs{(\Psi\circ r)(y_1, \ldots, y_N)}^2w(\sqrt{2}y_1) = \langle w(\sqrt{2}\cdot),\rho_r \rangle,
\end{aligned}
\end{equation}
where we introduced $\rho_r$ as the $y_1$-marginal of $|\Psi\circ r|^2$. Since $w \in L^1([-1,1])$ there is a sequence $w_n \in C^\infty([-1,1])$ such that $\norm{w-w_n}_{L^1} \to 0$. By inserting $w=w_n + (w - w_n)$, Eq.~\eqref{eq:app:w-expectation-rotation} can be further estimated using the Sobolev embedding $H^1(\T) \hookrightarrow L^\infty(\T)$ from Lemma~\ref{lem:embeddings} for $\|\rho_r\|_{L^\infty} = \|\sqrt{\rho_r}\|_{L^\infty}^2$ and the estimate in terms of the wave function from \citet[Th.~1.1]{Lieb1983}.
\begin{equation}\label{eq:app:w-rhor-estimate}
\begin{aligned}
    |\langle w(\sqrt{2}\cdot),\rho_r \rangle|
    &\leq \norm{w_n}_{L^\infty} \|\rho_r\|_{L^1} + \frac{1}{\sqrt{2}} \norm{w-w_n}_{L^1} \|\rho_r\|_{L^\infty} \\
    &\leq \norm{w_n}_{L^\infty} \|\rho_r\|_{L^1} + C \norm{w-w_n}_{L^1} \left( \|\sqrt{\rho_r}\|_{L^2}^2 + \|\nabla\sqrt{\rho_r}\|_{L^2}^2 \right)\\[0.5em]
    &\leq N\left( \norm {w_n}_{L^\infty} + C \norm{w-w_n}_{L^1} \right) \norm{\Psi\circ r}_{L^2}^2+C N \norm{w-w_n}_{L^1} \norm{\nabla_{y_1} (\Psi \circ r)}_{L^2}^2\\
    &\leq N\left(\norm{w_n}_{L^\infty}+C\norm{w-w_n}_{L^1}\right) \norm{\Psi}_{L^2}^2 +\frac{C N}{2} \norm{w-w_n}_{L^1}(\norm{\nabla_{x_1} \Psi}_{L^2}^2 + \norm{\nabla_{x_2} \Psi}_{L^2}^2)
\end{aligned}
\end{equation}
In the last line, we just used $\nabla_{y_1} = (\nabla_{x_1}-\nabla_{x_2})/\sqrt{2}$ and the triangle inequality.
We have $\norm{\nabla_{x_j} \Psi}_{L^2}^2 \leq 2T(\Psi)$ and $n$ can be chosen large enough such that $\norm{w-w_n}_{L^1}$ becomes arbitrarily small. Thus, we conclude the zero kinetic bound.
\end{proof}

This result covers all typical interactions that depend on the distance between two particles. The next one even extends this to distributional potentials of the form $\affspace^*$.

\begin{lemma}
A distributional interaction of the form $W=\sum _{i<j} \nabla_{x_i} g(x_i-x_j)$ with $g \in L^2([-1,1])$ is kinetically bounded with relative kinetic bound $0$.
\end{lemma}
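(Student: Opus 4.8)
The plan is to reduce this to the bound for one-body distributional potentials already established in Lemma~\ref{lem:v-kinetic-bound}, using the same coordinate rotation as in the preceding lemma. Fixing the $(1,2)$ term and setting $y_1=(x_1-x_2)/\sqrt2$, $y_2=(x_1+x_2)/\sqrt2$ with the remaining coordinates untouched, one has $g(x_1-x_2)=g(\sqrt2\,y_1)$ and, by the chain rule, $\nabla_{y_1}=(\nabla_{x_1}-\nabla_{x_2})/\sqrt2$ together with $\nabla_{y_1}[g(\sqrt2\,y_1)]=\sqrt2\,g'(\sqrt2\,y_1)$. The diagonal quadratic form $\langle\Psi,\nabla_{x_1}g(x_1-x_2)\Psi\rangle$, interpreted exactly like the $\nabla g$ part of a potential in $\affspace^*$ as in Example~\ref{ex:potentials-act}, then equals $\tfrac1{\sqrt2}\int|\Psi\circ r|^2\,\nabla_{y_1}[g(\sqrt2\,y_1)]\,\d y$, and integration by parts in $y_1$ on the tilted periodic cell of Fig.~\ref{fig:app:int-domain} moves the derivative onto the $y_1$-marginal $\rho_r$ of $|\Psi\circ r|^2$, giving $-\tfrac1{\sqrt2}\langle g(\sqrt2\cdot),\nabla_{y_1}\rho_r\rangle$. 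The boundary terms vanish because $g(\sqrt2\cdot)$ is periodic with period $\sqrt2$ in $y_1$, matching the $y_1$-cell.

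From here the estimate runs exactly as the $\langle g,\nabla\rho_\Psi\rangle$ part of the proof of Lemma~\ref{lem:v-kinetic-bound}. Choose $g_n\in C^\infty$ with $\|g-g_n\|_{L^2}\to0$, split $g=g_n+(g-g_n)$, bound the smooth part by $\|\nabla_{y_1}[g_n(\sqrt2\cdot)]\|_{L^\infty}\|\rho_r\|_{L^1}$ and the remainder by $\|g-g_n\|_{L^2}\|\nabla_{y_1}\rho_r\|_{L^2}$. Then $\|\nabla_{y_1}\rho_r\|_{L^2}\le2\|\sqrt{\rho_r}\|_{L^\infty}\|\nabla_{y_1}\sqrt{\rho_r}\|_{L^2}$ as in Eq.~\eqref{eq:lem:embeddings}, the embedding $H^1(\T)\hookrightarrow L^\infty(\T)$ from Lemma~\ref{lem:embeddings} controls $\|\sqrt{\rho_r}\|_{L^\infty}$ by $\|\sqrt{\rho_r}\|_{H^1}$, and the Lieb estimate \cite[Th.~1.1]{Lieb1983} gives $\|\nabla_{y_1}\sqrt{\rho_r}\|_{L^2}^2\le\|\nabla_{y_1}(\Psi\circ r)\|_{L^2}^2$, so that $\|\nabla_{y_1}\rho_r\|_{L^2}$ is bounded by a constant times $\|\rho_r\|_{L^1}+\|\nabla_{y_1}(\Psi\circ r)\|_{L^2}^2$. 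Finally use $\|\rho_r\|_{L^1}=\|\Psi\circ r\|_{L^2}^2=1$ and, via $\nabla_{y_1}=(\nabla_{x_1}-\nabla_{x_2})/\sqrt2$ and the triangle inequality, $\|\nabla_{y_1}(\Psi\circ r)\|_{L^2}^2\le\|\nabla_{x_1}\Psi\|_{L^2}^2+\|\nabla_{x_2}\Psi\|_{L^2}^2\le4T(\Psi)$. Summing over all $\binom{N}{2}$ pairs, the prefactor of $T(\Psi)$ is proportional to $\|g-g_n\|_{L^2}$ and hence can be driven below any $\eps>0$ by choosing $n$ large, while the $n$-dependent constants stay finite; this is precisely the relative kinetic bound $0$ of Definition~\ref{def:kinetic-bound}.

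The one point that requires genuine care, rather than a copy of an earlier argument, is the integration by parts after the rotation: one must check that $\rho_r$ is a well-defined $H^1$ function on the relative torus of circumference $\sqrt2$, that taking the $y_1$-marginal commutes with the substitution, and that the periodic identifications of the tilted cell — the hatched-region argument of Fig.~\ref{fig:app:int-domain} — really make the boundary terms cancel. Once the rotated picture is set up just as in the preceding lemma, everything else is the routine chain of Sobolev and Lieb estimates already used twice in the paper.
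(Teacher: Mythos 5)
Your proposal is correct and follows the same route as the paper: rotate to $y_1=(x_1-x_2)/\sqrt2$, $y_2=(x_1+x_2)/\sqrt2$ so that the two-body distributional interaction becomes a one-body $\nabla g$-type potential in the relative coordinate $y_1$ acting on the marginal $\rho_r$, then invoke the same $C^\infty$-approximation and Sobolev/Lieb estimates as in Lemma~\ref{lem:v-kinetic-bound}. The paper packages the rotation slightly differently by defining $\tilde g(y_1)=\tfrac{1}{\sqrt2}g(\sqrt2\,y_1)$ and writing $\langle\Psi,(\nabla_{x_1}g(x_1-x_2))\Psi\rangle=\langle\nabla\tilde g,\rho_r\rangle$, with the integration by parts absorbed into the distributional pairing rather than carried out explicitly on the tilted cell, but this is a presentational difference, not a mathematical one; the domain/periodicity subtleties you flag at the end are likewise implicit in the paper's appeal to ``continue exactly like in Lemma~\ref{lem:v-kinetic-bound}''.
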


\begin{proof}
As in the previous proof we only need to consider the spinless case and use the same coordinate transformation $r:(y_1,\ldots,y_N)\mapsto (x_1,\ldots,x_N)$ to get
\begin{equation}
    \nabla_{x_1} g(x_1-x_2) = \frac{1}{2}\left( \nabla_{x_1} g(x_1-x_2) - \nabla_{x_2} g(x_1-x_2)\right) = \frac{1}{\sqrt{2}} \nabla_{y_1} g(\sqrt{2}y_1) =: \nabla_{y_1}\tilde g(y_1).
\end{equation}
We can thus write $\innerproduct{\Psi, (\nabla_{x_1} g(x_1-x_2)) \Psi} = \innerproduct{\nabla \tilde g, \rho_r}$ and then continue exactly like in the proof of Lemma~\ref{lem:v-kinetic-bound}. By introducing a sequence $\{\tilde g_n\}_n$ in $C^\infty([-1,1])$ that has $\|\tilde g-\tilde g_n\|_{L^2} \to 0$ we get an estimate 
\begin{equation}
     |\innerproduct{\nabla \tilde g, \rho_r}| \leq  \|\nabla \tilde g_n\|_{L^\infty} \|\rho_r\|_{L^1} + 2C \|\tilde g-\tilde g_n\|_{L^2} \left(\|\rho_r\|_{L^1} + \|\nabla\sqrt{\rho_r}\|_{L^2}^2 \right).
\end{equation}
From here on, the same estimates as in Eq.~\eqref{eq:app:w-rhor-estimate} apply which concludes the kinetic boundedness.
\end{proof}

A well-known example for such a distributional interaction is the $\delta$-interaction. On the torus the distribution $\delta(x)+1$ is given by $\nabla g(x)$ where $g(x)=1-x$ which is then periodically extended to $[-1,1]$ for a proper domain of the interaction. Such a system is studied in \citet{gaudin1967systeme}.

\end{appendix}


\begin{thebibliography}{57}%
\makeatletter
\providecommand \@ifxundefined [1]{%
 \@ifx{#1\undefined}
}%
\providecommand \@ifnum [1]{%
 \ifnum #1\expandafter \@firstoftwo
 \else \expandafter \@secondoftwo
 \fi
}%
\providecommand \@ifx [1]{%
 \ifx #1\expandafter \@firstoftwo
 \else \expandafter \@secondoftwo
 \fi
}%
\providecommand \natexlab [1]{#1}%
\providecommand \enquote  [1]{``#1''}%
\providecommand \bibnamefont  [1]{#1}%
\providecommand \bibfnamefont [1]{#1}%
\providecommand \citenamefont [1]{#1}%
\providecommand \href@noop [0]{\@secondoftwo}%
\providecommand \href [0]{\begingroup \@sanitize@url \@href}%
\providecommand \@href[1]{\@@startlink{#1}\@@href}%
\providecommand \@@href[1]{\endgroup#1\@@endlink}%
\providecommand \@sanitize@url [0]{\catcode `\\12\catcode `\$12\catcode
  `\&12\catcode `\#12\catcode `\^12\catcode `\_12\catcode `\%12\relax}%
\providecommand \@@startlink[1]{}%
\providecommand \@@endlink[0]{}%
\providecommand \url  [0]{\begingroup\@sanitize@url \@url }%
\providecommand \@url [1]{\endgroup\@href {#1}{\urlprefix }}%
\providecommand \urlprefix  [0]{URL }%
\providecommand \Eprint [0]{\href }%
\providecommand \doibase [0]{http://dx.doi.org/}%
\providecommand \selectlanguage [0]{\@gobble}%
\providecommand \bibinfo  [0]{\@secondoftwo}%
\providecommand \bibfield  [0]{\@secondoftwo}%
\providecommand \translation [1]{[#1]}%
\providecommand \BibitemOpen [0]{}%
\providecommand \bibitemStop [0]{}%
\providecommand \bibitemNoStop [0]{.\EOS\space}%
\providecommand \EOS [0]{\spacefactor3000\relax}%
\providecommand \BibitemShut  [1]{\csname bibitem#1\endcsname}%
\let\auto@bib@innerbib\@empty
\bibitem [{\citenamefont {Hohenberg}\ and\ \citenamefont
  {Kohn}(1964)}]{hohenberg-kohn1964}%
  \BibitemOpen
  \bibfield  {author} {\bibinfo {author} {\bibfnamefont {P.}~\bibnamefont
  {Hohenberg}}\ and\ \bibinfo {author} {\bibfnamefont {W.}~\bibnamefont
  {Kohn}},\ }\bibfield  {title} {\emph {\bibinfo {title} {Inhomogeneous
  electron gas},\ }}\href {\doibase 10.1103/PhysRev.136.B864} {\bibfield
  {journal} {\bibinfo  {journal} {Phys. Rev.}\ }\textbf {\bibinfo {volume}
  {136}},\ \bibinfo {pages} {B864} (\bibinfo {year} {1964})}\BibitemShut
  {NoStop}%
\bibitem [{\citenamefont {Levy}(1979)}]{levy1979universal}%
  \BibitemOpen
  \bibfield  {author} {\bibinfo {author} {\bibfnamefont {M.}~\bibnamefont
  {Levy}},\ }\bibfield  {title} {\emph {\bibinfo {title} {Universal variational
  functionals of electron densities, first-order density matrices, and natural
  spin-orbitals and solution of the v-representability problem},\ }}\href
  {\doibase 10.1073/pnas.76.12.6062} {\bibfield  {journal} {\bibinfo  {journal}
  {Proc. Natl. Acad. Sci.}\ }\textbf {\bibinfo {volume} {76}},\ \bibinfo
  {pages} {6062} (\bibinfo {year} {1979})}\BibitemShut {NoStop}%
\bibitem [{\citenamefont {Levy}(1982)}]{levy1982electron}%
  \BibitemOpen
  \bibfield  {author} {\bibinfo {author} {\bibfnamefont {M.}~\bibnamefont
  {Levy}},\ }\bibfield  {title} {\emph {\bibinfo {title} {Electron densities in
  search of {H}amiltonians},\ }}\href {\doibase 10.1103/PhysRevA.26.1200}
  {\bibfield  {journal} {\bibinfo  {journal} {Phys. Rev. A}\ }\textbf {\bibinfo
  {volume} {26}},\ \bibinfo {pages} {1200} (\bibinfo {year}
  {1982})}\BibitemShut {NoStop}%
\bibitem [{\citenamefont {Lieb}(1983)}]{Lieb1983}%
  \BibitemOpen
  \bibfield  {author} {\bibinfo {author} {\bibfnamefont {E.~H.}\ \bibnamefont
  {Lieb}},\ }\bibfield  {title} {\emph {\bibinfo {title} {Density functionals
  for {C}oulomb-systems},\ }}\href {\doibase 10.1002/qua.560240302} {\bibfield
  {journal} {\bibinfo  {journal} {Int. J. Quantum Chem.}\ }\textbf {\bibinfo
  {volume} {24}},\ \bibinfo {pages} {243} (\bibinfo {year} {1983})}\BibitemShut
  {NoStop}%
\bibitem [{\citenamefont {Englisch}\ and\ \citenamefont
  {Englisch}(1983)}]{ENGLISCH1983}%
  \BibitemOpen
  \bibfield  {author} {\bibinfo {author} {\bibfnamefont {H.}~\bibnamefont
  {Englisch}}\ and\ \bibinfo {author} {\bibfnamefont {R.}~\bibnamefont
  {Englisch}},\ }\bibfield  {title} {\emph {\bibinfo {title}
  {{H}ohenberg--{K}ohn theorem and non-{V}-representable densities},\ }}\href
  {\doibase 10.1016/0378-4371(83)90254-6} {\bibfield  {journal} {\bibinfo
  {journal} {Physica A Stat. Mech. Appl.}\ }\textbf {\bibinfo {volume} {121}},\
  \bibinfo {pages} {253 } (\bibinfo {year} {1983})}\BibitemShut {NoStop}%
\bibitem [{\citenamefont {Kohn}\ and\ \citenamefont {Sham}(1965)}]{KS1965}%
  \BibitemOpen
  \bibfield  {author} {\bibinfo {author} {\bibfnamefont {W.}~\bibnamefont
  {Kohn}}\ and\ \bibinfo {author} {\bibfnamefont {L.~J.}\ \bibnamefont
  {Sham}},\ }\bibfield  {title} {\emph {\bibinfo {title} {Self-consistent
  equations including exchange and correlation effects},\ }}\href {\doibase
  10.1103/PhysRev.140.A1133} {\bibfield  {journal} {\bibinfo  {journal} {Phys.
  Rev.}\ }\textbf {\bibinfo {volume} {140}},\ \bibinfo {pages} {A1133}
  (\bibinfo {year} {1965})}\BibitemShut {NoStop}%
\bibitem [{\citenamefont {Wrighton}\ \emph {et~al.}(2023)\citenamefont
  {Wrighton}, \citenamefont {Albavera-Mata}, \citenamefont {Rodr{\'\i}guez},
  \citenamefont {Tan}, \citenamefont {Cancio}, \citenamefont {Dufty},\ and\
  \citenamefont {Trickey}}]{wrighton2023some}%
  \BibitemOpen
  \bibfield  {author} {\bibinfo {author} {\bibfnamefont {J.}~\bibnamefont
  {Wrighton}}, \bibinfo {author} {\bibfnamefont {A.}~\bibnamefont
  {Albavera-Mata}}, \bibinfo {author} {\bibfnamefont {H.~F.}\ \bibnamefont
  {Rodr{\'\i}guez}}, \bibinfo {author} {\bibfnamefont {T.~S.}\ \bibnamefont
  {Tan}}, \bibinfo {author} {\bibfnamefont {A.~C.}\ \bibnamefont {Cancio}},
  \bibinfo {author} {\bibfnamefont {J.}~\bibnamefont {Dufty}}, \ and\ \bibinfo
  {author} {\bibfnamefont {S.}~\bibnamefont {Trickey}},\ }\bibfield  {title}
  {\emph {\bibinfo {title} {Some problems in density functional theory},\
  }}\href {\doibase 10.1007/s11005-023-01649-z} {\bibfield  {journal} {\bibinfo
   {journal} {Lett. Math. Phys.}\ }\textbf {\bibinfo {volume} {113}},\ \bibinfo
  {pages} {41} (\bibinfo {year} {2023})}\BibitemShut {NoStop}%
\bibitem [{\citenamefont {Lewin}\ \emph {et~al.}(2023)\citenamefont {Lewin},
  \citenamefont {Lieb},\ and\ \citenamefont
  {Seiringer}}]{lewin-2023-in-DFT-book}%
  \BibitemOpen
  \bibfield  {author} {\bibinfo {author} {\bibfnamefont {M.}~\bibnamefont
  {Lewin}}, \bibinfo {author} {\bibfnamefont {E.~H.}\ \bibnamefont {Lieb}}, \
  and\ \bibinfo {author} {\bibfnamefont {R.}~\bibnamefont {Seiringer}},\ }in\
  \href {\doibase 10.1007/978-3-031-22340-2_3} {\emph {\bibinfo {booktitle}
  {Density Functional Theory: Modeling, Mathematical Analysis, Computational
  Methods, and Applications}}},\ \bibinfo {editor} {edited by\ \bibinfo
  {editor} {\bibfnamefont {E.}~\bibnamefont {Canc{\`e}s}}\ and\ \bibinfo
  {editor} {\bibfnamefont {G.}~\bibnamefont {Friesecke}}}\ (\bibinfo
  {publisher} {Springer},\ \bibinfo {year} {2023})\ pp.\ \bibinfo {pages}
  {115--182}\BibitemShut {NoStop}%
\bibitem [{\citenamefont {Teale}\ \emph {et~al.}(2022)\citenamefont {Teale},
  \citenamefont {Helgaker}, \citenamefont {Savin}, \citenamefont {Adamo},
  \citenamefont {Aradi}, \citenamefont {Arbuznikov}, \citenamefont {Ayers},
  \citenamefont {Baerends}, \citenamefont {Barone}, \citenamefont {Calaminici}
  \emph {et~al.}}]{teale2022dft}%
  \BibitemOpen
  \bibfield  {author} {\bibinfo {author} {\bibfnamefont {A.~M.}\ \bibnamefont
  {Teale}}, \bibinfo {author} {\bibfnamefont {T.}~\bibnamefont {Helgaker}},
  \bibinfo {author} {\bibfnamefont {A.}~\bibnamefont {Savin}}, \bibinfo
  {author} {\bibfnamefont {C.}~\bibnamefont {Adamo}}, \bibinfo {author}
  {\bibfnamefont {B.}~\bibnamefont {Aradi}}, \bibinfo {author} {\bibfnamefont
  {A.~V.}\ \bibnamefont {Arbuznikov}}, \bibinfo {author} {\bibfnamefont
  {P.~W.}\ \bibnamefont {Ayers}}, \bibinfo {author} {\bibfnamefont {E.~J.}\
  \bibnamefont {Baerends}}, \bibinfo {author} {\bibfnamefont {V.}~\bibnamefont
  {Barone}}, \bibinfo {author} {\bibfnamefont {P.}~\bibnamefont {Calaminici}},
  \emph {et~al.},\ }\bibfield  {title} {\emph {\bibinfo {title} {{DFT}
  exchange: sharing perspectives on the workhorse of quantum chemistry and
  materials science},\ }}\href {\doibase 10.1039/D2CP02827A} {\bibfield
  {journal} {\bibinfo  {journal} {Phys. Chem. Chem. Phys.}\ }\textbf {\bibinfo
  {volume} {24}},\ \bibinfo {pages} {28700} (\bibinfo {year}
  {2022})}\BibitemShut {NoStop}%
\bibitem [{\citenamefont {Chayes}\ \emph {et~al.}(1985)\citenamefont {Chayes},
  \citenamefont {Chayes},\ and\ \citenamefont {Ruskai}}]{CCR1985}%
  \BibitemOpen
  \bibfield  {author} {\bibinfo {author} {\bibfnamefont {J.~T.}\ \bibnamefont
  {Chayes}}, \bibinfo {author} {\bibfnamefont {L.}~\bibnamefont {Chayes}}, \
  and\ \bibinfo {author} {\bibfnamefont {M.~B.}\ \bibnamefont {Ruskai}},\
  }\bibfield  {title} {\emph {\bibinfo {title} {Density functional approach to
  quantum lattice systems},\ }}\href {\doibase 10.1007/BF01010474} {\bibfield
  {journal} {\bibinfo  {journal} {J. Stat. Phys.}\ }\textbf {\bibinfo {volume}
  {38}},\ \bibinfo {pages} {497} (\bibinfo {year} {1985})}\BibitemShut
  {NoStop}%
\bibitem [{\citenamefont {Penz}\ and\ \citenamefont {van
  Leeuwen}(2021)}]{penz-DFT-graphs}%
  \BibitemOpen
  \bibfield  {author} {\bibinfo {author} {\bibfnamefont {M.}~\bibnamefont
  {Penz}}\ and\ \bibinfo {author} {\bibfnamefont {R.}~\bibnamefont {van
  Leeuwen}},\ }\bibfield  {title} {\emph {\bibinfo {title} {Density-functional
  theory on graphs},\ }}\href {\doibase 10.1063/5.0074249} {\bibfield
  {journal} {\bibinfo  {journal} {J. Chem. Phys.}\ }\textbf {\bibinfo {volume}
  {155}},\ \bibinfo {pages} {244111} (\bibinfo {year} {2021})}\BibitemShut
  {NoStop}%
\bibitem [{\citenamefont {Lammert}(2010)}]{Lammert2010}%
  \BibitemOpen
  \bibfield  {author} {\bibinfo {author} {\bibfnamefont {P.~E.}\ \bibnamefont
  {Lammert}},\ }\bibfield  {title} {\emph {\bibinfo {title} {Well-behaved
  coarse-grained model of density-functional theory},\ }}\href {\doibase
  10.1103/PhysRevA.82.012109} {\bibfield  {journal} {\bibinfo  {journal} {Phys.
  Rev. A}\ }\textbf {\bibinfo {volume} {82}},\ \bibinfo {pages} {012109}
  (\bibinfo {year} {2010})}\BibitemShut {NoStop}%
\bibitem [{\citenamefont {Lammert}(2007)}]{Lammert2007}%
  \BibitemOpen
  \bibfield  {author} {\bibinfo {author} {\bibfnamefont {P.~E.}\ \bibnamefont
  {Lammert}},\ }\bibfield  {title} {\emph {\bibinfo {title} {Differentiability
  of {L}ieb functional in electronic density functional theory},\ }}\href
  {\doibase 10.1002/qua.21342} {\bibfield  {journal} {\bibinfo  {journal} {Int.
  J. Quantum Chem.}\ }\textbf {\bibinfo {volume} {107}},\ \bibinfo {pages}
  {1943} (\bibinfo {year} {2007})}\BibitemShut {NoStop}%
\bibitem [{\citenamefont {Aryasetiawan}\ and\ \citenamefont
  {Stott}(1986)}]{aryasetiawan-stott1986}%
  \BibitemOpen
  \bibfield  {author} {\bibinfo {author} {\bibfnamefont {F.}~\bibnamefont
  {Aryasetiawan}}\ and\ \bibinfo {author} {\bibfnamefont {M.}~\bibnamefont
  {Stott}},\ }\bibfield  {title} {\emph {\bibinfo {title} {Density-functional
  theory for two noninteracting spinless fermions},\ }}\href {\doibase
  10.1103/PhysRevB.34.4401} {\bibfield  {journal} {\bibinfo  {journal} {Phys.
  Rev. B}\ }\textbf {\bibinfo {volume} {34}},\ \bibinfo {pages} {4401}
  (\bibinfo {year} {1986})}\BibitemShut {NoStop}%
\bibitem [{\citenamefont {Aryasetiawan}\ and\ \citenamefont
  {Stott}(1988)}]{aryasetiawan-stott1988}%
  \BibitemOpen
  \bibfield  {author} {\bibinfo {author} {\bibfnamefont {F.}~\bibnamefont
  {Aryasetiawan}}\ and\ \bibinfo {author} {\bibfnamefont {M.}~\bibnamefont
  {Stott}},\ }\bibfield  {title} {\emph {\bibinfo {title} {Effective potentials
  in density-functional theory},\ }}\href {\doibase 10.1103/PhysRevB.38.2974}
  {\bibfield  {journal} {\bibinfo  {journal} {Phys. Rev. B}\ }\textbf {\bibinfo
  {volume} {38}},\ \bibinfo {pages} {2974} (\bibinfo {year}
  {1988})}\BibitemShut {NoStop}%
\bibitem [{\citenamefont
  {Aryasetiawan}(1989)}]{aryasetiawan-1989-personal-comm}%
  \BibitemOpen
  \bibfield  {author} {\bibinfo {author} {\bibfnamefont {F.}~\bibnamefont
  {Aryasetiawan}},\ }\href@noop {} {\bibfield  {title} {\emph {\bibinfo {title}
  {The v-representability problem in one dimension},\ }}}\bibinfo
  {howpublished} {personal communication} (\bibinfo {year} {1989})\BibitemShut
  {NoStop}%
\bibitem [{\citenamefont {Chen}\ and\ \citenamefont
  {Stott}(1991{\natexlab{a}})}]{chen-stott1991-few-fermions}%
  \BibitemOpen
  \bibfield  {author} {\bibinfo {author} {\bibfnamefont {J.}~\bibnamefont
  {Chen}}\ and\ \bibinfo {author} {\bibfnamefont {M.}~\bibnamefont {Stott}},\
  }\bibfield  {title} {\emph {\bibinfo {title} {v-representability for systems
  of a few fermions},\ }}\href {\doibase 10.1103/PhysRevA.44.2809} {\bibfield
  {journal} {\bibinfo  {journal} {Phys. Rev. A}\ }\textbf {\bibinfo {volume}
  {44}},\ \bibinfo {pages} {2809} (\bibinfo {year}
  {1991}{\natexlab{a}})}\BibitemShut {NoStop}%
\bibitem [{\citenamefont {Chen}\ and\ \citenamefont
  {Stott}(1991{\natexlab{b}})}]{chen-stott1991-low-deg}%
  \BibitemOpen
  \bibfield  {author} {\bibinfo {author} {\bibfnamefont {J.}~\bibnamefont
  {Chen}}\ and\ \bibinfo {author} {\bibfnamefont {M.}~\bibnamefont {Stott}},\
  }\bibfield  {title} {\emph {\bibinfo {title} {v-representability for systems
  with low degeneracy},\ }}\href {\doibase 10.1103/PhysRevA.44.2816} {\bibfield
   {journal} {\bibinfo  {journal} {Phys. Rev. A}\ }\textbf {\bibinfo {volume}
  {44}},\ \bibinfo {pages} {2816} (\bibinfo {year}
  {1991}{\natexlab{b}})}\BibitemShut {NoStop}%
\bibitem [{\citenamefont {Chen}\ and\ \citenamefont
  {Stott}(1993)}]{chen-stott1993-v-rep}%
  \BibitemOpen
  \bibfield  {author} {\bibinfo {author} {\bibfnamefont {J.}~\bibnamefont
  {Chen}}\ and\ \bibinfo {author} {\bibfnamefont {M.}~\bibnamefont {Stott}},\
  }\bibfield  {title} {\emph {\bibinfo {title} {v-representability for
  noninteracting systems},\ }}\href {\doibase 10.1103/PhysRevA.47.153}
  {\bibfield  {journal} {\bibinfo  {journal} {Phys. Rev. A}\ }\textbf {\bibinfo
  {volume} {47}},\ \bibinfo {pages} {153} (\bibinfo {year} {1993})}\BibitemShut
  {NoStop}%
\bibitem [{\citenamefont {Garrigue}(2022)}]{garrigue2022building}%
  \BibitemOpen
  \bibfield  {author} {\bibinfo {author} {\bibfnamefont {L.}~\bibnamefont
  {Garrigue}},\ }\bibfield  {title} {\emph {\bibinfo {title} {Building
  {K}ohn--{S}ham potentials for ground and excited states},\ }}\href {\doibase
  10.1007/s00205-022-01804-1} {\bibfield  {journal} {\bibinfo  {journal} {Arch.
  Ration. Mech. Anal.}\ }\textbf {\bibinfo {volume} {245}},\ \bibinfo {pages}
  {949} (\bibinfo {year} {2022})}\BibitemShut {NoStop}%
\bibitem [{\citenamefont {Lieb}\ and\ \citenamefont
  {Loss}(2001)}]{Lieb-Loss-book}%
  \BibitemOpen
  \bibfield  {author} {\bibinfo {author} {\bibfnamefont {E.~H.}\ \bibnamefont
  {Lieb}}\ and\ \bibinfo {author} {\bibfnamefont {M.}~\bibnamefont {Loss}},\
  }\href@noop {} {\emph {\bibinfo {title} {Analysis}}},\ \bibinfo {edition}
  {2nd}\ ed.\ (\bibinfo  {publisher} {American Mathematical Society},\ \bibinfo
  {year} {2001})\BibitemShut {NoStop}%
\bibitem [{\citenamefont {Reed}\ and\ \citenamefont
  {Simon}(1980)}]{reed-simon-1}%
  \BibitemOpen
  \bibfield  {author} {\bibinfo {author} {\bibfnamefont {M.}~\bibnamefont
  {Reed}}\ and\ \bibinfo {author} {\bibfnamefont {B.}~\bibnamefont {Simon}},\
  }\href@noop {} {\emph {\bibinfo {title} {Methods of Modern Mathematical
  Physics. Vol. I: Functional analysis}}},\ \bibinfo {edition} {2nd}\ ed.,\
  Vol.~\bibinfo {volume} {1}\ (\bibinfo  {publisher} {Academic Press},\
  \bibinfo {year} {1980})\BibitemShut {NoStop}%
\bibitem [{\citenamefont {Reed}\ and\ \citenamefont
  {Simon}(1975)}]{reed-simon-2}%
  \BibitemOpen
  \bibfield  {author} {\bibinfo {author} {\bibfnamefont {M.}~\bibnamefont
  {Reed}}\ and\ \bibinfo {author} {\bibfnamefont {B.}~\bibnamefont {Simon}},\
  }\href@noop {} {\emph {\bibinfo {title} {Methods of Modern Mathematical
  Physics. Vol. II: Fourier Analysis, Self-Adjointness}}},\ Vol.~\bibinfo
  {volume} {2}\ (\bibinfo  {publisher} {Academic Press},\ \bibinfo {year}
  {1975})\BibitemShut {NoStop}%
\bibitem [{\citenamefont {Brezis}(2011)}]{Brezis-book}%
  \BibitemOpen
  \bibfield  {author} {\bibinfo {author} {\bibfnamefont {H.}~\bibnamefont
  {Brezis}},\ }\href@noop {} {\emph {\bibinfo {title} {Functional Analysis,
  Sobolev Spaces and Partial Differential Equations}}}\ (\bibinfo  {publisher}
  {Springer},\ \bibinfo {year} {2011})\BibitemShut {NoStop}%
\bibitem [{\citenamefont {Adams}\ and\ \citenamefont
  {Fournier}(2003)}]{adams-book}%
  \BibitemOpen
  \bibfield  {author} {\bibinfo {author} {\bibfnamefont {R.~A.}\ \bibnamefont
  {Adams}}\ and\ \bibinfo {author} {\bibfnamefont {J.~J.}\ \bibnamefont
  {Fournier}},\ }\href@noop {} {\emph {\bibinfo {title} {Sobolev spaces}}},\
  \bibinfo {edition} {2nd}\ ed.\ (\bibinfo  {publisher} {Elsevier},\ \bibinfo
  {year} {2003})\BibitemShut {NoStop}%
\bibitem [{\citenamefont {Reed}\ and\ \citenamefont
  {Simon}(1978)}]{reed-simon-4}%
  \BibitemOpen
  \bibfield  {author} {\bibinfo {author} {\bibfnamefont {M.}~\bibnamefont
  {Reed}}\ and\ \bibinfo {author} {\bibfnamefont {B.}~\bibnamefont {Simon}},\
  }\href@noop {} {\emph {\bibinfo {title} {Methods of Modern Mathematical
  Physics. Vol. IV: Analysis of Operators}}},\ Vol.~\bibinfo {volume} {4}\
  (\bibinfo  {publisher} {Academic Press},\ \bibinfo {year} {1978})\BibitemShut
  {NoStop}%
\bibitem [{\citenamefont {Barbu}\ and\ \citenamefont
  {Precupanu}(2012)}]{Barbu-Precupanu}%
  \BibitemOpen
  \bibfield  {author} {\bibinfo {author} {\bibfnamefont {V.}~\bibnamefont
  {Barbu}}\ and\ \bibinfo {author} {\bibfnamefont {T.}~\bibnamefont
  {Precupanu}},\ }\href@noop {} {\emph {\bibinfo {title} {Convexity and
  Optimization in Banach Spaces}}},\ \bibinfo {edition} {4th}\ ed.\ (\bibinfo
  {publisher} {Springer},\ \bibinfo {year} {2012})\BibitemShut {NoStop}%
\bibitem [{\citenamefont {Valone}(1980)}]{valone1980b}%
  \BibitemOpen
  \bibfield  {author} {\bibinfo {author} {\bibfnamefont {S.~M.}\ \bibnamefont
  {Valone}},\ }\bibfield  {title} {\emph {\bibinfo {title} {A one-to-one
  mapping between one-particle densities and some n-particle ensembles},\
  }}\href {\doibase 10.1063/1.440656} {\bibfield  {journal} {\bibinfo
  {journal} {J. Chem. Phys.}\ }\textbf {\bibinfo {volume} {73}},\ \bibinfo
  {pages} {4653} (\bibinfo {year} {1980})}\BibitemShut {NoStop}%
\bibitem [{\citenamefont {Herczy\'nski}(1989)}]{HERCZYNSKI1989-KLMN}%
  \BibitemOpen
  \bibfield  {author} {\bibinfo {author} {\bibfnamefont {J.}~\bibnamefont
  {Herczy\'nski}},\ }\bibfield  {title} {\emph {\bibinfo {title} {On
  {S}chr\"odinger operators with distributional potentials},\ }}\href
  {http://www.jstor.org/stable/24714447} {\bibfield  {journal} {\bibinfo
  {journal} {J. Operator Theory}\ }\textbf {\bibinfo {volume} {21}},\ \bibinfo
  {pages} {273} (\bibinfo {year} {1989})}\BibitemShut {NoStop}%
\bibitem [{\citenamefont {Clason}(2020)}]{book-clason}%
  \BibitemOpen
  \bibfield  {author} {\bibinfo {author} {\bibfnamefont {C.}~\bibnamefont
  {Clason}},\ }\href@noop {} {\emph {\bibinfo {title} {Introduction to
  Functional Analysis}}}\ (\bibinfo  {publisher} {Springer Nature},\ \bibinfo
  {year} {2020})\BibitemShut {NoStop}%
\bibitem [{\citenamefont {Kronig}\ and\ \citenamefont {Penney}(1931)}]{KP1931}%
  \BibitemOpen
  \bibfield  {author} {\bibinfo {author} {\bibfnamefont {R.~D.~L.}\
  \bibnamefont {Kronig}}\ and\ \bibinfo {author} {\bibfnamefont {W.~G.}\
  \bibnamefont {Penney}},\ }\bibfield  {title} {\emph {\bibinfo {title}
  {Quantum mechanics of electrons in crystal lattices},\ }}\href {\doibase
  10.1098/rspa.1931.0019} {\bibfield  {journal} {\bibinfo  {journal} {Proc. R.
  Soc. Lond. Ser. A}\ }\textbf {\bibinfo {volume} {130}},\ \bibinfo {pages}
  {499–513} (\bibinfo {year} {1931})}\BibitemShut {NoStop}%
\bibitem [{\citenamefont {Kostenko}\ and\ \citenamefont
  {Malamud}(2010)}]{Kostenko2010}%
  \BibitemOpen
  \bibfield  {author} {\bibinfo {author} {\bibfnamefont {A.~S.}\ \bibnamefont
  {Kostenko}}\ and\ \bibinfo {author} {\bibfnamefont {M.~M.}\ \bibnamefont
  {Malamud}},\ }\bibfield  {title} {\emph {\bibinfo {title} {{1-D}
  {S}chr\"{o}dinger operators with local point interactions on a discrete
  set},\ }}\href {\doibase 10.1016/j.jde.2010.02.011} {\bibfield  {journal}
  {\bibinfo  {journal} {J. Differ. Equ.}\ }\textbf {\bibinfo {volume} {249}},\
  \bibinfo {pages} {253–304} (\bibinfo {year} {2010})}\BibitemShut {NoStop}%
\bibitem [{\citenamefont {Albeverio}\ \emph {et~al.}(2012)\citenamefont
  {Albeverio}, \citenamefont {Gesztesy}, \citenamefont {Hoegh-Krohn},\ and\
  \citenamefont {Holden}}]{albeverio2012solvable}%
  \BibitemOpen
  \bibfield  {author} {\bibinfo {author} {\bibfnamefont {S.}~\bibnamefont
  {Albeverio}}, \bibinfo {author} {\bibfnamefont {F.}~\bibnamefont {Gesztesy}},
  \bibinfo {author} {\bibfnamefont {R.}~\bibnamefont {Hoegh-Krohn}}, \ and\
  \bibinfo {author} {\bibfnamefont {H.}~\bibnamefont {Holden}},\ }\href@noop {}
  {\emph {\bibinfo {title} {Solvable models in quantum mechanics}}}\ (\bibinfo
  {publisher} {Springer},\ \bibinfo {year} {2012})\BibitemShut {NoStop}%
\bibitem [{\citenamefont {Haldane}(1983)}]{Haldane1983}%
  \BibitemOpen
  \bibfield  {author} {\bibinfo {author} {\bibfnamefont {F.~D.~M.}\
  \bibnamefont {Haldane}},\ }\bibfield  {title} {\emph {\bibinfo {title}
  {Fractional quantization of the {H}all effect: A hierarchy of incompressible
  quantum fluid states},\ }}\href {\doibase 10.1103/physrevlett.51.605}
  {\bibfield  {journal} {\bibinfo  {journal} {Phys. Rev. Lett.}\ }\textbf
  {\bibinfo {volume} {51}},\ \bibinfo {pages} {605–608} (\bibinfo {year}
  {1983})}\BibitemShut {NoStop}%
\bibitem [{\citenamefont {Seiringer}\ and\ \citenamefont
  {Yngvason}(2020)}]{Seiringer2020}%
  \BibitemOpen
  \bibfield  {author} {\bibinfo {author} {\bibfnamefont {R.}~\bibnamefont
  {Seiringer}}\ and\ \bibinfo {author} {\bibfnamefont {J.}~\bibnamefont
  {Yngvason}},\ }\bibfield  {title} {\emph {\bibinfo {title} {Emergence of
  {H}aldane pseudo-potentials in systems with short-range interactions},\
  }}\href {\doibase 10.1007/s10955-020-02586-0} {\bibfield  {journal} {\bibinfo
   {journal} {J. Stat. Phys.}\ }\textbf {\bibinfo {volume} {181}},\ \bibinfo
  {pages} {448–464} (\bibinfo {year} {2020})}\BibitemShut {NoStop}%
\bibitem [{\citenamefont {Lieb}\ and\ \citenamefont
  {Liniger}(1963)}]{Lieb1963}%
  \BibitemOpen
  \bibfield  {author} {\bibinfo {author} {\bibfnamefont {E.~H.}\ \bibnamefont
  {Lieb}}\ and\ \bibinfo {author} {\bibfnamefont {W.}~\bibnamefont {Liniger}},\
  }\bibfield  {title} {\emph {\bibinfo {title} {Exact analysis of an
  interacting {B}ose gas. {I.} {T}he general solution and the ground state},\
  }}\href {\doibase 10.1103/physrev.130.1605} {\bibfield  {journal} {\bibinfo
  {journal} {Phys. Rev.}\ }\textbf {\bibinfo {volume} {130}},\ \bibinfo {pages}
  {1605–1616} (\bibinfo {year} {1963})}\BibitemShut {NoStop}%
\bibitem [{\citenamefont {Kopyciński}\ \emph {et~al.}(2022)\citenamefont
  {Kopyciński}, \citenamefont {Łebek}, \citenamefont {Marciniak},
  \citenamefont {Ołdziejewski}, \citenamefont {Górecki},\ and\ \citenamefont
  {Pawłowski}}]{Kopyciski2022}%
  \BibitemOpen
  \bibfield  {author} {\bibinfo {author} {\bibfnamefont {J.}~\bibnamefont
  {Kopyciński}}, \bibinfo {author} {\bibfnamefont {M.}~\bibnamefont {Łebek}},
  \bibinfo {author} {\bibfnamefont {M.}~\bibnamefont {Marciniak}}, \bibinfo
  {author} {\bibfnamefont {R.}~\bibnamefont {Ołdziejewski}}, \bibinfo {author}
  {\bibfnamefont {W.}~\bibnamefont {Górecki}}, \ and\ \bibinfo {author}
  {\bibfnamefont {K.}~\bibnamefont {Pawłowski}},\ }\bibfield  {title} {\emph
  {\bibinfo {title} {Beyond {G}ross--{P}itaevskii equation for {1D} gas:
  quasiparticles and solitons},\ }}\href {\doibase
  10.21468/scipostphys.12.1.023} {\bibfield  {journal} {\bibinfo  {journal}
  {SciPost Physics}\ }\textbf {\bibinfo {volume} {12}} (\bibinfo {year}
  {2022}),\ 10.21468/scipostphys.12.1.023}\BibitemShut {NoStop}%
\bibitem [{\citenamefont {Penz}\ \emph
  {et~al.}(2023{\natexlab{a}})\citenamefont {Penz}, \citenamefont {Tellgren},
  \citenamefont {Csirik}, \citenamefont {Ruggenthaler},\ and\ \citenamefont
  {Laestadius}}]{Penz-et-al-HKreview-partI}%
  \BibitemOpen
  \bibfield  {author} {\bibinfo {author} {\bibfnamefont {M.}~\bibnamefont
  {Penz}}, \bibinfo {author} {\bibfnamefont {E.~I.}\ \bibnamefont {Tellgren}},
  \bibinfo {author} {\bibfnamefont {M.~A.}\ \bibnamefont {Csirik}}, \bibinfo
  {author} {\bibfnamefont {M.}~\bibnamefont {Ruggenthaler}}, \ and\ \bibinfo
  {author} {\bibfnamefont {A.}~\bibnamefont {Laestadius}},\ }\bibfield  {title}
  {\emph {\bibinfo {title} {The structure of density-potential mapping. {P}art
  {I}: Standard density-functional theory},\ }}\href {\doibase
  10.1021/acsphyschemau.2c00069} {\bibfield  {journal} {\bibinfo  {journal}
  {ACS Phys. Chem. Au}\ }\textbf {\bibinfo {volume} {3}},\ \bibinfo {pages}
  {334} (\bibinfo {year} {2023}{\natexlab{a}})}\BibitemShut {NoStop}%
\bibitem [{\citenamefont {Garrigue}(2018)}]{Garrigue2018HK}%
  \BibitemOpen
  \bibfield  {author} {\bibinfo {author} {\bibfnamefont {L.}~\bibnamefont
  {Garrigue}},\ }\bibfield  {title} {\emph {\bibinfo {title} {Unique
  continuation for many-body {S}chrödinger operators and the
  {H}ohenberg-{K}ohn theorem},\ }}\href {\doibase 10.1007/s11040-018-9287-z}
  {\bibfield  {journal} {\bibinfo  {journal} {Math. Phys. Anal. Geom.}\
  }\textbf {\bibinfo {volume} {21}},\ \bibinfo {pages} {27} (\bibinfo {year}
  {2018})}\BibitemShut {NoStop}%
\bibitem [{\citenamefont {Dreizler}\ and\ \citenamefont
  {Gross}(2012)}]{Dreizler-Gross}%
  \BibitemOpen
  \bibfield  {author} {\bibinfo {author} {\bibfnamefont {R.~M.}\ \bibnamefont
  {Dreizler}}\ and\ \bibinfo {author} {\bibfnamefont {E.~K.}\ \bibnamefont
  {Gross}},\ }\href@noop {} {\emph {\bibinfo {title} {Density functional
  theory: an approach to the quantum many-body problem}}}\ (\bibinfo
  {publisher} {Springer},\ \bibinfo {year} {2012})\BibitemShut {NoStop}%
\bibitem [{\citenamefont {Langreth}\ and\ \citenamefont
  {Mehl}(1983)}]{LangrethMehl1983}%
  \BibitemOpen
  \bibfield  {author} {\bibinfo {author} {\bibfnamefont {D.~C.}\ \bibnamefont
  {Langreth}}\ and\ \bibinfo {author} {\bibfnamefont {M.~J.}\ \bibnamefont
  {Mehl}},\ }\bibfield  {title} {\emph {\bibinfo {title} {Beyond the
  local-density approximation in calculations of ground-state electronic
  properties},\ }}\href {\doibase 10.1103/PhysRevB.28.1809} {\bibfield
  {journal} {\bibinfo  {journal} {Phys. Rev. B}\ }\textbf {\bibinfo {volume}
  {28}},\ \bibinfo {pages} {1809} (\bibinfo {year} {1983})}\BibitemShut
  {NoStop}%
\bibitem [{\citenamefont {Svendsen}\ and\ \citenamefont {von
  Barth}(1996)}]{SvendsenBarth1996}%
  \BibitemOpen
  \bibfield  {author} {\bibinfo {author} {\bibfnamefont {P.~S.}\ \bibnamefont
  {Svendsen}}\ and\ \bibinfo {author} {\bibfnamefont {U.}~\bibnamefont {von
  Barth}},\ }\bibfield  {title} {\emph {\bibinfo {title} {Gradient expansion of
  the exchange energy from second-order density response theory},\ }}\href
  {\doibase 10.1103/PhysRevB.54.17402} {\bibfield  {journal} {\bibinfo
  {journal} {Phys. Rev. B}\ }\textbf {\bibinfo {volume} {54}},\ \bibinfo
  {pages} {17402} (\bibinfo {year} {1996})}\BibitemShut {NoStop}%
\bibitem [{\citenamefont {Perdew}(1985)}]{Perdew1985GGA}%
  \BibitemOpen
  \bibfield  {author} {\bibinfo {author} {\bibfnamefont {J.~P.}\ \bibnamefont
  {Perdew}},\ }\bibfield  {title} {\emph {\bibinfo {title} {Accurate density
  functional for the energy: Real-space cutoff of the gradient expansion for
  the exchange hole},\ }}\href {\doibase 10.1103/PhysRevLett.55.1665}
  {\bibfield  {journal} {\bibinfo  {journal} {Phys. Rev. Lett.}\ }\textbf
  {\bibinfo {volume} {55}},\ \bibinfo {pages} {1665} (\bibinfo {year}
  {1985})}\BibitemShut {NoStop}%
\bibitem [{\citenamefont {Perdew}(1986)}]{Perdew1986GGA}%
  \BibitemOpen
  \bibfield  {author} {\bibinfo {author} {\bibfnamefont {J.~P.}\ \bibnamefont
  {Perdew}},\ }\bibfield  {title} {\emph {\bibinfo {title} {Density-functional
  approximation for the correlation energy of the inhomogeneous electron gas},\
  }}\href {\doibase 10.1103/PhysRevB.33.8822} {\bibfield  {journal} {\bibinfo
  {journal} {Phys. Rev. B}\ }\textbf {\bibinfo {volume} {33}},\ \bibinfo
  {pages} {8822} (\bibinfo {year} {1986})}\BibitemShut {NoStop}%
\bibitem [{\citenamefont {Becke}(1988)}]{Becke1988GGA}%
  \BibitemOpen
  \bibfield  {author} {\bibinfo {author} {\bibfnamefont {A.~D.}\ \bibnamefont
  {Becke}},\ }\bibfield  {title} {\emph {\bibinfo {title} {Density-functional
  exchange-energy approximation with correct asymptotic behavior},\ }}\href
  {\doibase 10.1103/PhysRevA.38.3098} {\bibfield  {journal} {\bibinfo
  {journal} {Phys. Rev. A}\ }\textbf {\bibinfo {volume} {38}},\ \bibinfo
  {pages} {3098} (\bibinfo {year} {1988})}\BibitemShut {NoStop}%
\bibitem [{\citenamefont {Csonka}\ \emph {et~al.}(2010)\citenamefont {Csonka},
  \citenamefont {Perdew},\ and\ \citenamefont {Ruzsinszky}}]{Csonka2010}%
  \BibitemOpen
  \bibfield  {author} {\bibinfo {author} {\bibfnamefont {G.~I.}\ \bibnamefont
  {Csonka}}, \bibinfo {author} {\bibfnamefont {J.~P.}\ \bibnamefont {Perdew}},
  \ and\ \bibinfo {author} {\bibfnamefont {A.}~\bibnamefont {Ruzsinszky}},\
  }\bibfield  {title} {\emph {\bibinfo {title} {Global hybrid functionals: A
  look at the engine under the hood},\ }}\href {\doibase 10.1021/ct100488v}
  {\bibfield  {journal} {\bibinfo  {journal} {J. Chem. Theory Comput.}\
  }\textbf {\bibinfo {volume} {6}},\ \bibinfo {pages} {3688–3703} (\bibinfo
  {year} {2010})}\BibitemShut {NoStop}%
\bibitem [{\citenamefont {Seidl}\ \emph {et~al.}(1996)\citenamefont {Seidl},
  \citenamefont {G\"{o}rling}, \citenamefont {Vogl}, \citenamefont {Majewski},\
  and\ \citenamefont {Levy}}]{Seidl1996}%
  \BibitemOpen
  \bibfield  {author} {\bibinfo {author} {\bibfnamefont {A.}~\bibnamefont
  {Seidl}}, \bibinfo {author} {\bibfnamefont {A.}~\bibnamefont {G\"{o}rling}},
  \bibinfo {author} {\bibfnamefont {P.}~\bibnamefont {Vogl}}, \bibinfo {author}
  {\bibfnamefont {J.~A.}\ \bibnamefont {Majewski}}, \ and\ \bibinfo {author}
  {\bibfnamefont {M.}~\bibnamefont {Levy}},\ }\bibfield  {title} {\emph
  {\bibinfo {title} {Generalized {K}ohn--{S}ham schemes and the band-gap
  problem},\ }}\href {\doibase 10.1103/physrevb.53.3764} {\bibfield  {journal}
  {\bibinfo  {journal} {Phys. Rev. B}\ }\textbf {\bibinfo {volume} {53}},\
  \bibinfo {pages} {3764–3774} (\bibinfo {year} {1996})}\BibitemShut
  {NoStop}%
\bibitem [{\citenamefont {Garrick}\ \emph {et~al.}(2020)\citenamefont
  {Garrick}, \citenamefont {Natan}, \citenamefont {Gould},\ and\ \citenamefont
  {Kronik}}]{Garrick2020}%
  \BibitemOpen
  \bibfield  {author} {\bibinfo {author} {\bibfnamefont {R.}~\bibnamefont
  {Garrick}}, \bibinfo {author} {\bibfnamefont {A.}~\bibnamefont {Natan}},
  \bibinfo {author} {\bibfnamefont {T.}~\bibnamefont {Gould}}, \ and\ \bibinfo
  {author} {\bibfnamefont {L.}~\bibnamefont {Kronik}},\ }\bibfield  {title}
  {\emph {\bibinfo {title} {Exact generalized {K}ohn--{S}ham theory for hybrid
  functionals},\ }}\href {\doibase 10.1103/physrevx.10.021040} {\bibfield
  {journal} {\bibinfo  {journal} {Phys. Rev. X}\ }\textbf {\bibinfo {volume}
  {10}} (\bibinfo {year} {2020}),\ 10.1103/physrevx.10.021040}\BibitemShut
  {NoStop}%
\bibitem [{\citenamefont {Penz}\ \emph {et~al.}(2019)\citenamefont {Penz},
  \citenamefont {Laestadius}, \citenamefont {Tellgren},\ and\ \citenamefont
  {Ruggenthaler}}]{penz2019guaranteed}%
  \BibitemOpen
  \bibfield  {author} {\bibinfo {author} {\bibfnamefont {M.}~\bibnamefont
  {Penz}}, \bibinfo {author} {\bibfnamefont {A.}~\bibnamefont {Laestadius}},
  \bibinfo {author} {\bibfnamefont {E.~I.}\ \bibnamefont {Tellgren}}, \ and\
  \bibinfo {author} {\bibfnamefont {M.}~\bibnamefont {Ruggenthaler}},\
  }\bibfield  {title} {\emph {\bibinfo {title} {Guaranteed convergence of a
  regularized {K}ohn--{S}ham iteration in finite dimensions},\ }}\href
  {\doibase 10.1103/physrevlett.123.037401} {\bibfield  {journal} {\bibinfo
  {journal} {Phys. Rev. Lett.}\ }\textbf {\bibinfo {volume} {123}},\ \bibinfo
  {pages} {037401} (\bibinfo {year} {2019})}\BibitemShut {NoStop}%
\bibitem [{\citenamefont {Penz}\ \emph {et~al.}(2020)\citenamefont {Penz},
  \citenamefont {Laestadius}, \citenamefont {Tellgren}, \citenamefont
  {Ruggenthaler},\ and\ \citenamefont {Lammert}}]{penz2020erratum}%
  \BibitemOpen
  \bibfield  {author} {\bibinfo {author} {\bibfnamefont {M.}~\bibnamefont
  {Penz}}, \bibinfo {author} {\bibfnamefont {A.}~\bibnamefont {Laestadius}},
  \bibinfo {author} {\bibfnamefont {E.~I.}\ \bibnamefont {Tellgren}}, \bibinfo
  {author} {\bibfnamefont {M.}~\bibnamefont {Ruggenthaler}}, \ and\ \bibinfo
  {author} {\bibfnamefont {P.~E.}\ \bibnamefont {Lammert}},\ }\bibfield
  {title} {\emph {\bibinfo {title} {Erratum: {G}uaranteed convergence of a
  regularized {K}ohn--{S}ham iteration in finite dimensions},\ }}\href
  {\doibase 10.1103/PhysRevLett.125.249902} {\bibfield  {journal} {\bibinfo
  {journal} {Phys. Rev. Lett.}\ }\textbf {\bibinfo {volume} {125}},\ \bibinfo
  {pages} {249902} (\bibinfo {year} {2020})}\BibitemShut {NoStop}%
\bibitem [{\citenamefont {Lammert}(2023)}]{Lammert-bivariate}%
  \BibitemOpen
  \bibfield  {author} {\bibinfo {author} {\bibfnamefont {P.~E.}\ \bibnamefont
  {Lammert}},\ }\bibfield  {title} {\emph {\bibinfo {title} {{K}ohn–-{S}ham
  computation and the bivariate view of density functional theory},\ }}\href
  {\doibase 10.1088/1751-8121/ad075d} {\bibfield  {journal} {\bibinfo
  {journal} {J. Phys. A}\ }\textbf {\bibinfo {volume} {56}},\ \bibinfo {pages}
  {495203} (\bibinfo {year} {2023})}\BibitemShut {NoStop}%
\bibitem [{\citenamefont {Penz}\ \emph
  {et~al.}(2023{\natexlab{b}})\citenamefont {Penz}, \citenamefont {Csirik},\
  and\ \citenamefont {Laestadius}}]{penz2023MY-ZMP}%
  \BibitemOpen
  \bibfield  {author} {\bibinfo {author} {\bibfnamefont {M.}~\bibnamefont
  {Penz}}, \bibinfo {author} {\bibfnamefont {M.~A.}\ \bibnamefont {Csirik}}, \
  and\ \bibinfo {author} {\bibfnamefont {A.}~\bibnamefont {Laestadius}},\
  }\bibfield  {title} {\emph {\bibinfo {title} {Density-potential inversion
  from {M}oreau--{Y}osida regularization},\ }}\href {\doibase
  10.1088/2516-1075/acc626} {\bibfield  {journal} {\bibinfo  {journal}
  {Electron. Struct.}\ }\textbf {\bibinfo {volume} {5}},\ \bibinfo {pages}
  {014009} (\bibinfo {year} {2023}{\natexlab{b}})}\BibitemShut {NoStop}%
\bibitem [{\citenamefont {Andrade}\ and\ \citenamefont
  {Aspuru-Guzik}(2011)}]{Andrade2011}%
  \BibitemOpen
  \bibfield  {author} {\bibinfo {author} {\bibfnamefont {X.}~\bibnamefont
  {Andrade}}\ and\ \bibinfo {author} {\bibfnamefont {A.}~\bibnamefont
  {Aspuru-Guzik}},\ }\bibfield  {title} {\emph {\bibinfo {title} {Prediction of
  the derivative discontinuity in density functional theory from an
  electrostatic description of the exchange and correlation potential},\
  }}\href {\doibase 10.1103/physrevlett.107.183002} {\bibfield  {journal}
  {\bibinfo  {journal} {Phys. Rev. Lett.}\ }\textbf {\bibinfo {volume} {107}},\
  \bibinfo {pages} {183002} (\bibinfo {year} {2011})}\BibitemShut {NoStop}%
\bibitem [{\citenamefont {Seidl}\ \emph {et~al.}(2017)\citenamefont {Seidl},
  \citenamefont {Di~Marino}, \citenamefont {Gerolin}, \citenamefont {Nenna},
  \citenamefont {Giesbertz},\ and\ \citenamefont
  {Gori-Giorgi}}]{seidl2017sphericalSCE}%
  \BibitemOpen
  \bibfield  {author} {\bibinfo {author} {\bibfnamefont {M.}~\bibnamefont
  {Seidl}}, \bibinfo {author} {\bibfnamefont {S.}~\bibnamefont {Di~Marino}},
  \bibinfo {author} {\bibfnamefont {A.}~\bibnamefont {Gerolin}}, \bibinfo
  {author} {\bibfnamefont {L.}~\bibnamefont {Nenna}}, \bibinfo {author}
  {\bibfnamefont {K.~J.~H.}\ \bibnamefont {Giesbertz}}, \ and\ \bibinfo
  {author} {\bibfnamefont {P.}~\bibnamefont {Gori-Giorgi}},\ }\href@noop {}
  {\bibfield  {title} {\emph {\bibinfo {title} {The strictly-correlated
  electron functional for spherically symmetric systems revisited},\
  }}}\bibinfo {howpublished} {(16 Feb 2017) arXiv e-prints [cond-mat.str-el]
  1702.05022} (\bibinfo {year} {2017}),\ \bibinfo {note} {accessed
  2023-11-22}\BibitemShut {NoStop}%
\bibitem [{\citenamefont {Giesbertz}\ and\ \citenamefont
  {Ruggenthaler}(2019)}]{GiesbertzRuggenthaler2019}%
  \BibitemOpen
  \bibfield  {author} {\bibinfo {author} {\bibfnamefont {K.~J.~H.}\
  \bibnamefont {Giesbertz}}\ and\ \bibinfo {author} {\bibfnamefont
  {M.}~\bibnamefont {Ruggenthaler}},\ }\bibfield  {title} {\emph {\bibinfo
  {title} {One-body reduced density-matrix functional theory in finite basis
  sets at elevated temperatures},\ }}\href {\doibase
  10.1016/j.physrep.2019.01.010} {\bibfield  {journal} {\bibinfo  {journal}
  {Phys. Rep.}\ }\textbf {\bibinfo {volume} {806}},\ \bibinfo {pages} {1}
  (\bibinfo {year} {2019})}\BibitemShut {NoStop}%
\bibitem [{\citenamefont {Sutter}\ and\ \citenamefont
  {Giesbertz}(2023)}]{SutterGiesbertz2023}%
  \BibitemOpen
  \bibfield  {author} {\bibinfo {author} {\bibfnamefont {S.~M.}\ \bibnamefont
  {Sutter}}\ and\ \bibinfo {author} {\bibfnamefont {K.~J.~H.}\ \bibnamefont
  {Giesbertz}},\ }\bibfield  {title} {\emph {\bibinfo {title} {One-body reduced
  density-matrix functional theory for the canonical ensemble},\ }}\href
  {\doibase 10.1103/PhysRevA.107.022210} {\bibfield  {journal} {\bibinfo
  {journal} {Phys. Rev. A}\ }\textbf {\bibinfo {volume} {107}},\ \bibinfo
  {pages} {022210} (\bibinfo {year} {2023})}\BibitemShut {NoStop}%
\bibitem [{\citenamefont {Gaudin}(1967)}]{gaudin1967systeme}%
  \BibitemOpen
  \bibfield  {author} {\bibinfo {author} {\bibfnamefont {M.}~\bibnamefont
  {Gaudin}},\ }\bibfield  {title} {\emph {\bibinfo {title} {Un systeme a une
  dimension de fermions en interaction},\ }}\href {\doibase
  10.1016/0375-9601(67)90193-4} {\bibfield  {journal} {\bibinfo  {journal}
  {Phys. Lett. A}\ }\textbf {\bibinfo {volume} {24}},\ \bibinfo {pages} {55}
  (\bibinfo {year} {1967})}\BibitemShut {NoStop}%
\end{thebibliography}
\end{document}